\documentclass[prodmode]{acmsmall-ec16-ARXIV-EDIT}  

\usepackage[group-separator={,}]{siunitx}
\usepackage{amsmath,amsfonts,amssymb}
\usepackage{graphicx}
\usepackage{url}
\usepackage{mathtools}
\usepackage[textsize=tiny,textwidth=1cm,shadow]{todonotes}
\usepackage{thmtools, thm-restate}
\usepackage{enumitem}
\usepackage{pdflscape}

\usepackage{etoolbox}
\newtoggle{isFullVersion}
\toggletrue{isFullVersion}

\usepackage{algorithm}
\usepackage{algpseudocode}
\usepackage{algorithmicx}
\let\oldReturn\Return
\renewcommand{\Return}{\State\oldReturn}
\algtext*{EndWhile}
\algtext*{EndIf}
\algtext*{EndForAll}
\algtext*{EndFor}
\algtext*{EndFunction}

\usepackage{pgf}
\usepackage{tikz}
\usetikzlibrary{shadows,arrows,decorations,decorations.shapes,backgrounds,shapes,snakes,automata,fit,petri,shapes.multipart,calc,positioning,shapes.geometric}
  \tikzstyle{whitesq}=[rectangle,draw=black,fill=white,thin,inner sep=1.5pt,minimum size=6mm]
  \tikzstyle{whitecirc}=[circle,draw=black,fill=white,thin,inner sep=1.5pt,minimum size=6mm]
  \tikzstyle{fake}=[circle,draw=black,fill=white,thin,inner sep=1.5pt,minimum size=6mm,opacity=0.0]


  \newcommand{\squishlist}{
 \begin{list}{$\bullet$}
  { \setlength{\itemsep}{0pt}
     \setlength{\parsep}{3pt}
     \setlength{\topsep}{3pt}
     \setlength{\partopsep}{0pt}
     \setlength{\leftmargin}{1.5em}
     \setlength{\labelwidth}{1em}
     \setlength{\labelsep}{0.5em} } }
%
  

\doi{2940716.2940759}

\newcommand{\zCF}{Z_{\mathit{CF}}}        
\newcommand{\zPIEF}{Z_{\mathit{PIEF}}}    
\newcommand{\zPICEF}{Z_{\mathit{PICEF}}}  
\newcommand{\zHPIEF}{Z_{\mathit{HPIEF}}}  
\newcommand{\OPT}{\mathit{OPT}}

\newcommand{\PIEFRR}{PIEFR\textsuperscript{2}} 
\newcommand{\KRR}{\K^\text{red\textsuperscript{2}}} 
\newcommand{\cycles}{\mathcal{C}}
\newcommand{\chains}{\mathcal{C}'}
\newcommand{\K}{\mathcal{K}}

\tikzset{patient/.style={fill=white,draw=black,text=black,circle,inner sep=.1cm}}
\tikzset{ndd/.style={fill=white,draw=black,text=black,inner sep=.15cm}}
\tikzset{edge/.style = {->,> = latex'}}
\tikzset{thickedge/.style = {->,> = latex', line width=2.5pt}}

\usepackage{xcolor}

\newcommand{\algABS}{\textsc{BnP-DFS}}
\newcommand{\algPDS}{\textsc{BnP-Poly}}
\newcommand{\algRMA}{\textsc{CG-TSP}}
\newcommand{\algKlim}{\textsc{BnP-DCD}}
\newcommand{\algPICEF}{\textsc{PICEF}}
\newcommand{\algHPIEF}{\textsc{HPIEF}}
\newcommand{\algPICEFBnP}{\textsc{BnP-PICEF}}

\begin{document}

\iftoggle{isFullVersion}{%
\markboth{J.P.\ Dickerson, D.F.\ Manlove, B.\ Plaut, T.\ Sandholm and J.\ Trimble}{Position-Indexed Formulations for Kidney Exchange}
\pagestyle{headings}
}

\title{Position-Indexed Formulations for Kidney Exchange}

\author{%
  JOHN P. DICKERSON
  \affil{Carnegie Mellon University}%
  DAVID F. MANLOVE
  \affil{University of Glasgow}%
  BENJAMIN PLAUT
  \affil{Carnegie Mellon University}%
  TUOMAS SANDHOLM
  \affil{Carnegie Mellon University}%
  JAMES TRIMBLE
  \affil{University of Glasgow}%
}

\begin{abstract}
A kidney exchange is an organized barter market where patients in need of a kidney swap willing but incompatible donors.  Determining an optimal set of exchanges is theoretically and empirically hard.  Traditionally, exchanges took place in cycles, with each participating patient-donor pair both giving and receiving a kidney.  The recent introduction of chains, where a donor without a paired patient triggers a sequence of donations without requiring a kidney in return, increased the efficacy of fielded kidney exchanges---while also dramatically raising the empirical computational hardness of clearing the market in practice.  While chains can be quite long, unbounded-length chains are not desirable: planned donations can fail before transplant for a variety of reasons, and the failure of a single donation causes the rest of that chain to fail, so parallel shorter chains are better in practice.

In this paper, we address the tractable clearing of kidney exchanges with short cycles and chains that are long but bounded.  This corresponds to the practice at most modern fielded kidney exchanges.  We introduce three new integer programming formulations, two of which are compact.  Furthermore, one of these models has a linear programming relaxation that is exactly as tight as the previous \emph{tightest} formulation (which was not compact) for instances in which each donor has a paired patient. On real data from the UNOS nationwide exchange in the United States and the NLDKSS nationwide exchange in the United Kingdom, as well as on generated realistic large-scale data, we show that our new models are competitive with all existing solvers---in many cases outperforming all other solvers by orders of magnitude.

Finally, we note that our \emph{position-indexed chain-edge formulation} can be modified in a straightforward way to take post-match edge failure into account, under the restriction that edges have equal probabilities of failure.  Post-match edge failure is a primary source of inefficiency in presently-fielded kidney exchanges.  We show how to implement such failure-aware matching in our model, and also extend the state-of-the-art general branch-and-price-based non-compact formulation for the failure-aware problem to run its pricing problem in polynomial time.

\end{abstract}



\keywords{Kidney exchange; matching markets; stochastic matching; integer programming; branch and price}

\acmformat{John P. Dickerson, David F. Manlove, Benjamin Plaut, Tuomas Sandholm, and James Trimble, 2016. Position-Indexed Formulations for Kidney Exchange.}

\begin{bottomstuff}
Work supported by NSF grants IIS-1320620 and IIS-1546752, ARO grant W911NF-16-1-0061, Facebook and Siebel fellowships, and by XSEDE through the Pittsburgh Supercomputing Center (Dickerson / Plaut / Sandholm), and by EPSRC grants EP/K010042/1, EP/K503903/1 and EP/N508792/1 (Manlove / Trimble).

Authors' addresses: J.P. Dickerson, B. Plaut, and T. Sandholm, Computer Science Department, Carnegie Mellon University; email: \url{{dickerson,sandholm}@cs.cmu.edu}, \url{bplaut@stanford.edu}; D.F. Manlove and J. Trimble, School of Computing Science, University of Glasgow; email: \url{david.manlove@glasgow.ac.uk}, \url{james.trimble@yahoo.co.uk}.
\end{bottomstuff}

\maketitle

\section{Introduction}\label{sec:intro}

Transplantation is the most effective treatment for kidney failure, yet
transplant waiting lists have grown rapidly in many countries. In the United
States alone, the kidney transplant waiting list grew from \num{58000} people in
\num{2004} to \num{99000} people in \num{2014}~\cite{Hart16:Kidney}. In order to increase the
supply of kidneys for transplant, \emph{kidney exchange schemes} now operate in
several countries, including the United States, United Kingdom, Netherlands, and South Korea.

A kidney exchange~\cite{Rapaport86:Case,Roth04:Kidney} is a
centrally-administered barter exchange market for kidneys.  If a patient with end-stage
renal disease has a friend or family member who is willing to donate a kidney
to him, but unable to do so due to blood- or tissue-type incompatibility, the patient
may enter the exchange in the hope of exchanging his donor with the donor of
another participating patient. Three-way exchanges are also possible, but most
schemes do not carry out four-way or longer exchanges due to the requirement
that transplants be carried out simultaneously, and the risk that one of the
participants may need to withdraw, in which case the cycle does not go to execution and the pairs go back into the kidney exchange pool.

The scheme administrator carries out periodic ``match runs'' in which
exchanges are selected in order to maximize the number of planned transplants,
or a similar goal---perhaps prioritizing pediatric patients, or
to those who have been waiting the longest.

A more recent development has been the introduction of \emph{non-directed
donors (NDDs)} to kidney exchange
schemes~\cite{Montgomery06:Domino,Roth06:Utilizing}. An NDD enters the scheme
with the intention of donating a kidney, but \emph{without} a paired recipient.
Such a donor initiates a \emph{chain}, in which the paired donor of each
patient who receives a kidney donates a kidney to the next patient. In contrast
to cyclic exchanges, chains can be carried out non-simultaneously, since patients
later in the chain are not waiting to be ``repaid'' for a donation that has
already been made. In practice, it is desirable to impose a cap on chain length, since there is an increasing chance that the final exchanges planned
for a chain will not proceed as the chain length is increased (due to various reasons such as pre-transplant crossmatch incompatibility, death of a recipient before transplant, the recipient receiving a deceased-donor kidney, and so on).

In many fielded kidney exchanges, an optimal solution is found by using an integer programming (IP) solver to find
a set of disjoint cyclic exchanges and chains that maximizes some scoring function. This approach has been tractable so
far;~\citeN{Manlove15:Paired} report that each instance up to October \num{2014} in
the United Kingdom's National Living Donor Kidney Sharing Schemes (NLDKSS)---one of the largest kidney
exchange schemes---could be solved in under a second, with similar results using state-of-the-art solvers at other large exchanges in the US~\cite{Anderson15:Finding,Plaut16:Fast} and the Netherlands~\cite{Glorie14:Kidney}. However, there is an
urgent need for faster kidney exchange algorithms, for three reasons:
\begin{itemize}[topsep=0pt] 
\item Schemes have recently increased chain-length caps, and we expect further increases as more schemes evolve towards using \emph{nonsimultaneous extended altruistic donor (NEAD) chains}~\cite{Rees09:Nonsimultaneous}, which can extend across dozens of transplants.
\item Opportunities exist for cross-border schemes, which will greatly increase the size of the problem to be solved; indeed, collaborations have already occurred between, for example, the USA and Greece, and between Portugal and Spain.
\item The run time for kidney exchange algorithms depends on the problem instance, and is difficult to predict. It is desirable to have improved algorithms to insure against the possibility that future instances will be intractable for current solvers.
  \end{itemize}

With that motivation, this paper presents new scalable integer-programming-based approaches to optimally clearing large kidney exchange schemes, including two models which can comfortably handle chain caps greater than \num{10}.

\subsection{Prior research}
Substantial prior research helped grow kidney exchange from a thought experiment to its present increasingly-ubiquitous state.  We briefly overview the literature from economics, computer science, and operations research.

\paragraph{Theoretical basis for kidney exchange} 
Roth et al.~\citeyear{Roth04:Kidney,Roth05:Pairwise,Roth07:Efficient} set the groundwork for large-scale organized kidney exchange.  These papers explored what efficient matchings in a steady-state kidney exchange would look like; extensions by~\citeN{Ashlagi12:Need},~\citeN{Ashlagi14:Free}, and~\citeN{Ding15:Non-asymptotic} address shortcomings in those theoretical models that appeared as kidney exchange became reality.  Game-theoretic models of kidney exchange, where transplant centers are viewed as agents with a private type consisting of their internal pools, were presented and explored by~\citeN{Ashlagi14:Free},~\citeN{Toulis15:Design}, and~\citeN{Ashlagi15:Mix}.  Various forms of dynamism, like uncertainty over the possible future vertices in the pool~\cite{Unver10:Dynamic,Ashlagi13:Kidney,Akbarpour14:Dynamic} or uncertainty over the existence of particular edges~\cite{Blum13:Harnessing,Dickerson13:Failure,Blum15:Ignorance} have been explored from both an economic and algorithmic efficiency point of view.

\paragraph{Practical approaches to the kidney exchange clearing problem}
The two fundamental IP models for kidney exchange are the \emph{cycle formulation}, which includes one binary decision variable for each feasible cycle or chain, and the \emph{edge
formulation}, which includes one decision variable for each compatible pair of agents~\cite{Abraham07:Clearing,Roth07:Efficient}. In the cycle formulation, the number of constraints is sublinear in the input size, but the number of variables is exponential. In the edge formulation, the number of variables is linear but the number of constraints is exponential.  Optimally solving these models has been an ongoing challenge for the past decade.

\citeN{Constantino13:New} introduced the first two compact IP
formulations for kidney exchange, where \emph{compact} means that the counts of variables and
constraints are polynomial in the size of the input. Their \textit{extended
edge formulation} was shown empirically to be effective in finding the optimal
solution where the cycle-length cap is greater than $3$, particularly on dense graphs. However, each of the
compact formulations introduced by Constantino et al. has a weaker
linear program (LP) relaxation than the cycle formulation, even in the absence of NDDs.

The EE-MTZ model~\cite{MakHau15:Kidney}, another compact formulation, uses the variables and constraints of the extended edge formulation to model
cycles and a variant of the Miller-Tucker-Zemlin model for the traveling
salesman problem to model chains.  The same paper introduces the
exponentially-sized SPLIT-MTZ model, which adds redundant constraints to the
edge formulation in order to tighten the LP relaxation.

A number of kidney exchange algorithms use the cycle formulation with 
\emph{branch and price}~\cite{Barnhart98:Branchprice} to avoid the
need to hold an exponential number of variables in memory
~\cite{Abraham07:Clearing,Dickerson13:Failure,Glorie14:Kidney,Klimentova14:New,Plaut16:Fast}.
These have been the fastest algorithms to date for the kidney exchange problem.

An alternative approach to avoiding the cost of keeping the entire model in
memory has been constraint generation, using variants of the edge
formulation~\cite{Abraham07:Clearing}.
\citeN{Anderson15:Finding} describe an approach based on an algorithm for the
prize-collecting traveling salesman problem. This algorithm is particularly effective for solving instances where the cycle-length cap 
is $3$ and there is no cap on the length of chains, but it is outperformed by branch-and-price-based approaches if a finite chain-length 
cap is used~\cite{Plaut16:Fast}.

Alternative objectives for the kidney exchange problem include maximizing the
\emph{expected} number of transplants subject to post-match arc and vertex
failures~\cite{Dickerson13:Failure,Pedroso14:Maximizing,Alvelos15:Compact}.  Some fielded exchanges use lexicographic optimisation of a hierarchy of objectives~\cite{Glorie14:Kidney,Manlove15:Paired}; we note that our models can be augmented to support this class of objective function.


\subsection{Our contribution}
This paper introduces three integer program formulations for the
kidney exchange problem, two of which are compact.  Model size (i.e., memory footprint) often constrains today's kidney exchange solvers; critically, our models are typically much smaller than the state of the art \emph{while managing to maintain tight linear program relaxations (LPRs)}---which in practice is quite important to proving optimality quickly.

In Section~\ref{sec:pief}, we introduce the \emph{position-indexed edge formulation (PIEF)}, a model for the kidney exchange problem with only cycles that is substantially smaller than, yet has an LPR equivalent to, the model with the tightest LPR for the cycles-only version of the problem~\cite{Abraham07:Clearing,Roth07:Efficient}.  Section~\ref{sec:pief} presents the \emph{position-indexed chain-edge formulation (PICEF)} which compactly brings chains into the model via a polynomial number of decision variables; the number of cycle decision variables is exponential in just the maximum \emph{cycle} length (which is typically only $3$ or $4$ in fielded exchanges).  To address that latter exponential reliance on the cycle length, we also present a branch-and-price-based implementation of 
\iftoggle{isFullVersion}{%
  PICEF.  Finally, in Appendix~\ref{sec:hybrid} we present the \emph{hybrid position-indexed edge formulation (HPIEF)}, which combines PIEF and PICEF to yield a compact formulation.
}{%
  PICEF.\footnote{In Appendix C of the full version of the paper~\cite{Dickerson16:PositionArxiv}, we present the \emph{hybrid position-indexed edge formulation (HPIEF)}, which combines PIEF and PICEF to yield a compact formulation, as well as additional theoretical results and proofs.}
}

Throughout, we prove new results regarding the tightness of the LPRs of our models relative to the current state of the art.  The tightness of these relaxations hints that our formulations will be competitive in practice; toward that end, we provide extensive experimental evidence that they are.
In particular, we show that at least one of PICEF and HPIEF is faster than the best solver from all those provably-optimal solvers contributed in earlier papers that we evaluated in $96.41$\% of instances considered, with the speed-ups being most evident for larger instance sizes and larger chain caps.
In Section~\ref{sec:failure}, we use real and generated data from two nationwide kidney exchange programs---one in the UK, and one in the US---to compare our formulations against other competitive solvers~\cite{Abraham07:Clearing,Klimentova14:New,Anderson15:Finding,Plaut16:Fast}.  Our new formulations are on par or faster than all other solvers, outperforming all other solvers by orders of magnitude on many problem instances.


Finally, we demonstrate that our PICEF formulation can be adapted straightforwardly to the so-called \emph{failure-aware} model where arcs probabilistically fail after algorithmic match but before transplantation, and we show how to adapt a recent (non-compact) branch-and-price-based formulation due to~\citeN{Dickerson13:Failure} to perform pricing in polynomial time (Section~\ref{sec:failure}).

\section{Preliminaries}\label{sec:prelims}
Given a pool consisting of patient-donor pairs and non-directed donors (NDDs), we model the kidney exchange problem using a directed graph $D=(V,A)$.  The set of vertices $V = \{1, \dots, |V|\}$ is partitioned into $N = \{1, \dots, |N|\}$ representing the NDDs
and $P = \{|N| + 1, \dots, |N| + |P|\}$ representing the patient-donor pairs.

For each $i \in N$ and each $j \in P$, $A$ contains the arc $(i,j)$ if and only
if NDD $i$ is compatible with patient $j$. For each $i \in P$ and each $j \in P
\setminus \{i\}$, $A$ contains the arc $(i,j)$ if and only if the paired donor\footnote{In this paper, we assume that each patient has a single paired donor.  In practice, a patient may have multiple donors; this can be modeled by regarding such a patient $i$'s vertex in $D$ as representing $i$ and all of her paired donors; an arc $(i,j)$ in $D$ then represents compatibility between at least one of $i$'s donors and patient $j$.  Our model can also handle patients with \emph{no} paired donor by drawing a vertex with no out edges.} of
  patient $i$ is compatible with patient $j$. Each arc $(i,j) \in A$ has a
  weight $w_{ij} \in \mathbb{R}^+$ representing the priority that the scheme
  administrator gives to that transplant.  If the objective is to maximize the
  number of transplants, each arc has unit weight; most fielded exchanges use weights to encode various prioritization schemes and other value judgments about potential transplants.

Since NDDs do not have paired patients, each vertex representing an NDD has no
incoming arcs. Moreover, we assume that no patient is compatible with her own
paired donor, and therefore that the digraph has no loops.  (The IP models
introduced in this paper can trivially be adapted to the case where loops
exist by adding a binary variable for each loop and modifying the
objective and capacity constraints accordingly.)

We use the term \emph{chain} to refer to a path in the digraph initiated by an
NDD vertex, and \emph{cycle} to refer to a cycle in the directed graph (which
must involve only vertices in $P$, since vertices in $N$ have no incoming arcs).
The weight $w_c$ of each cycle or chain $c$ is defined as the sum of its arc
weights.

Given a maximum cycle size $K$ and a maximum chain length $L$, the \emph{kidney exchange problem} is an optimization problem where the objective is to select a vertex-disjoint packing in $D$ of cycles of size up to $K$ and chains of length up to $L$ that has maximum weight.  The problem is NP-hard for realistic parameterizations of $K$ and $L$~\cite{Abraham07:Clearing,Biro09:Maximum}.  In practice, $K$ is kept low due to the logistical constraint of scheduling all transplants simultaneously.  At both the United Network for Organ Sharing (UNOS) US-wide exchange and the UK's NLDKSS, $K=3$.  The chain cap $L$ is typically longer due to chains being executed non-simultaneously; yet, typically $L \neq \infty$ due to potential matches failing before transplantation.  This paper addresses the realistic setting of small cycle cap $K$ and large---but finite---chain cap $L$.


Figure~\ref{fig:problem_example} shows a problem instance with $|N|=2$ and
$|P|=5$. If each arc has unit weight and $K=L=3$, then the bold arcs show an
optimal solution: cycle $( (3,4), (4,5), (5,3))$ and chain
$((1,7), (7,6))$, with a total weight of $5$.

\begin{figure}
\centering
\begin{tikzpicture}[scale=0.8,transform shape]
  \newcount \c

  \foreach \n in {1, 2} {
    \c=\n
    \multiply\c by -180
    \advance\c by 90
    \node[whitesq] (N\n) at (\the\c:.9) {\n};
  }
  \begin{scope}[xshift=3cm]
    \foreach \n in {3, ..., 7} {
      \c=\n
      \multiply\c by -72
      \advance\c by 36 
      \node[patient] (N\n) at (\the\c:1.1) {\n};
    }
  \end{scope}

  \foreach \m/\n in {3/4, 4/5, 5/3}
     \draw[-latex,line width=2] (N\m) -- (N\n);

  \foreach \m/\n in {3/7, 1/3, 6/5, 2/4}
     \draw[-latex] (N\m) -- (N\n);

  \foreach \m/\n in {1/7, 7/6}
     \draw[-latex, line width=2] (N\m) -- (N\n);
\end{tikzpicture}
\caption{The directed graph for a kidney-exchange instance with $|N|=2$ and $|P|=5$}
\label{fig:problem_example}
\end{figure}
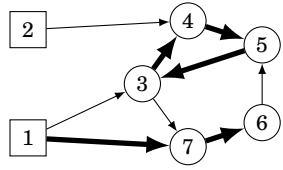

\section{PIEF: Position-Indexed Edge Formulation}\label{sec:pief}  
\subsection{Description of the model}
In this section, we present our first of three new IP formulations, the \textit{position-indexed
edge formulation} (PIEF).  PIEF is a natural extension of the extended edge
formulation (EEF) of~\citeN{Constantino13:New}. For this formulation, we assume that the problem
instance contains no 
\iftoggle{isFullVersion}{%
  NDDs; HPIEF (the hybrid PIEF) in Appendix~\ref{sec:hybrid} is a compact generalisation of this formulation which can be used for instances with NDDs.
}{%
  NDDS.\footnote{In Appendix C of the full version of this paper, we present the hybrid PIEF (HPIEF), a compact generalisation of PIEF which can be used for instances with NDDs.}
}

The PIEF, like the EEF, uses copies of the underlying compatibility digraph $D$. For each vertex $l \in V$, let
$D^l=(V^l,A^l)$ be the subgraph of $D$ induced by $\{i \in V: i \geq l\}$.  The
PIEF ensures that at most one cycle is selected in each copy, and that a cycle
selected in graph copy $D^l$ must contain vertex $l$.

The first directed graph in Figure~\ref{fig:pief_example} is an instance with
four patients which we will use as an example in this section.
The figure shows graph copies $D^1$ ($=D$), $D^2$, and $D^3$.
The remaining graph copy, $D^4=(\{4\}, \{\})$, contains no arcs and
is not shown.

\begin{figure}
\centering
\begin{tikzpicture}[scale=0.8,transform shape]
  \newcount \c
  \newcount \lpos
  \newcount \lminusone
 
  \foreach \l in {1, ..., 3} {
    \lpos = \l
    \lminusone = \l
    \advance \lminusone by -1
    \multiply\lpos by 4
    \begin{scope}[xshift=\the\lpos cm]
      \ifthenelse {\lengthtest{2 pt > \l pt}} {
        \node[] at (-2,0) {$D=D^\l=$};
      }{
        \node[] at (-1.6,0) {$D^\l=$};
      }

      \foreach \n in {1, ..., 4} {
        \ifthenelse {\lengthtest{\n pt > \lminusone pt}} {
          \c=\n
          \multiply\c by -90
          \advance\c by -135
          \node[whitecirc] (N\l\n) at (\the\c:1.1) {\n};
        }{}
      }
    \end{scope}

    \foreach \m/\n in {1/2, 2/1, 2/3, 3/4, 4/1, 4/3} {
      \ifthenelse {\lengthtest{\m pt > \lminusone pt} \AND \lengthtest{\n pt > \lminusone pt}} {
        \draw[-latex, bend left=16] (N\l\m) to node {} (N\l\n);
      }{}
    }

    \ifthenelse {\lengthtest{2 pt > \l pt}} {
    }{}
    \ifthenelse {\lengthtest{3 pt > \l pt}} {
      \draw[-latex] (N\l4) to node {} (N\l2);
    }{}
    \ifthenelse {\lengthtest{4 pt > \l pt}} {
    }{}

  }

\end{tikzpicture}
\caption{A kidney exchange instance $D$ with $|N|=0$ and $|P|=4$, along with graph copies $D^1$ ($=D$), $D^2$, and $D^3$. $D^4=(\{4\}, \{\})$ is not shown.}
\label{fig:pief_example}
\end{figure}
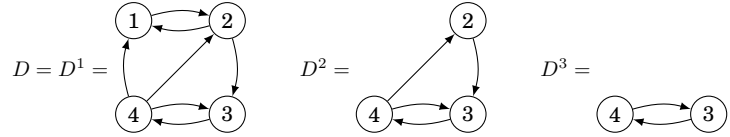

The main innovation of the PIEF formulation is the use of arc \emph{positions}
to index variables; the position of an arc in a cycle is defined as follows.
Let $c = (a_1, \dots, a_{|c|})$ be a cycle represented as a list of arcs in
$A$. Further, assume that we use the unique representation of $c$ such that
$a_1$ leaves the lowest-numbered vertex involved in the cycle.  For $1 \leq i
\leq |c|$, we say that $a_i$ has position $i$.

We define $\K(i,j,l)$, the set of positions at which arc $(i,j)$ is permitted
to be selected in a cycle in graph copy $D^l$. For $i, j, l \in V$ such that $(i, j)
\in A^l$, let

\vspace{-3mm}
\[
\K(i,j,l) =
\begin{cases}
    \{1\} & i=l \\
    \{2, \dots, K-1\} & i,j>l \\
    \{2, \dots, K\} & j=l.
\end{cases}
\]

Thus, an arc may be selected at position \num{1} in graph copy $l$ if and only if it leaves vertex $l$, and
any arc selected at position $K$ in graph copy $l$ must enter $l$.

Now, create a set of binary decision variables as follows. For $i, j, l \in P$ such that $(i,
j) \in A^l$, create variable $x_{ijk}^l$ for each $k\in \K(i,j,l)$. Variable
$x_{ijk}^l$ takes the value \num{1} if and only if arc $(i,j)$ is selected at position
$k$ of a cycle in graph copy $D^l$. Returning to our example instance and letting $K=3$,
we give $x_{342}^1$ as an example of a variable in the model; this represents the arc $(3,4)$ being
used in position 2 of a cycle in graph copy $1$. In full, the set of variables created for
this instance is
$x_{121}^1$,
$x_{212}^1$,
$x_{213}^1$,
$x_{232}^1$,
$x_{342}^1$,
$x_{412}^1$,
$x_{413}^1$,
$x_{422}^1$,
$x_{432}^1$ (in graph copy $D^1$),
$x_{231}^2$,
$x_{342}^2$,
$x_{422}^2$,
$x_{423}^2$,
$x_{432}^2$
(in graph copy $D^2$),
$x_{341}^3$,
$x_{432}^3$, and
$x_{433}^3$ (in graph copy $D^3$).


The following integer program finds the optimal cycle packing.

\begin{subequations}
\begin{align}
\max \qquad \sum_{l\in V} \sum_{(i,j)\in A^l} \sum_{k\in \K(i,j,l)} w_{ij}x_{ijk}^l & \label{eq:pief_obj}\\
\text{s.t.} \qquad \sum_{l\in V} \sum_{j:(j,i)\in A^l} \sum_{k\in \K(j,i,l)} x_{jik}^l \leq 1 & \qquad i \in V \label{eq:pief_a} \\
\sum_{\substack{j: (j,i) \in A^l \wedge \\ k \in \K(j,i,l)}} x_{jik}^l =
    \sum_{\substack{j: (i,j) \in A^l \wedge \\ k+1 \in \K(i,j,l)}}x_{i,j,k+1}^l &
    \qquad 
    \begin{aligned}
               & l \in V,\\
               & i \in \{l+1, \dots, n\},\\ 
               & k \in \{1, \dots, K-1\}
    \end{aligned}
    \label{eq:pief_b} \\
x_{ijk}^l \in \{0,1\} & \qquad
               l \in V,
               (i,j) \in A^l,
               k \in \K(i,j,l)
    \label{eq:pief_c}
\end{align}
\end{subequations}

The objective (\ref{eq:pief_obj}) is to maximize the weighted sum of selected
arcs. Constraint (\ref{eq:pief_a}) is the capacity constraint for vertices: for
each vertex $i \in V$, there must be at most one selected arc whose target is
$i$. Constraint (\ref{eq:pief_b}) is the flow conservation constraint. For each
graph copy $D^l$, each vertex $i$ in $D^l$ except the lowest-numbered vertex, and each arc position $k<K$, the number of selected arcs at position $k$ with target $i$ is equal to the number of selected arcs at position $k+1$ with source $i$. Constraint (\ref{eq:pief_c}) ensures that no fractional solutions are 
\iftoggle{isFullVersion}{%
  selected.  Theorem \ref{thm:PIEFcorrect} in Section \ref{sec:PIEFcorrect} establishes the correctness of the PIEF model.
}{%
  selected.\footnote{Theorem~A.3 in Appendix~A.1 of the full paper establishes the correctness of the PIEF model.}
}

We note that PIEF is not the first IP model for a directed-graph program to use
position-indexed variables;~\citeN{Vajda61:Mathematical} uses
position-indexed variables for subtour elimination in a model for the
travelling salesman problem (TSP). Vajda's model is substantially different
from the PIEF; most notably, graph copies are not required for Vajda's model
because any TSP solution contains exactly one cycle.

\subsection{Further reducing the size of the basic PIEF model}\label{sec:pief-reduce}
We now present methods for reducing the size of the PIEF model while maintaining provable optimality.  These reductions are performed as a polynomial-time preprocess (prior to solving the NP-hard kidney exchange clearing problem), and thus may result in practical run time improvements.

\subsubsection{Basic reduced PIEF}\label{sec:pief-basic-reduction}

In typical problem instances, there are many $(i,j,k,l)$ tuples such that $x_{ijk}^l$ takes the value zero in
any assignment that satisfies constraints (\ref{eq:pief_a})-(\ref{eq:pief_c}).
For example, suppose that $K=4$ and that Figure~\ref{fig:reduced_pief} is graph
copy $D^1$. Arc $(6,7)$ cannot be chosen at position \num{3} of a cycle, since the
arc only appears in one cycle and it is at position \num{2} of that cycle.  Hence, if
we could eliminate variable $x_{673}^1$ from the integer program it would not change the
optimal solution.  Similarly, all variables for the arc $(3,4)$ within this
graph copy could be eliminated, since this arc does not appear in any cycle of
length less than \num{5}.

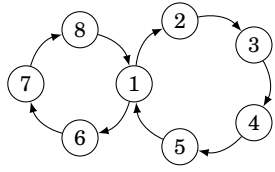
\begin{figure}
\centering
\begin{tikzpicture}[scale=0.8,transform shape]
  \newcount \c
 
  \foreach \n in {6, ..., 8} {
    \c=\n
    \multiply\c by -90
    \advance\c by 90
    \node[whitecirc] (N\n) at (\the\c:.9) {\n};
  }
  \begin{scope}[xshift=2cm]
    \foreach \n in {1, ..., 5} {
      \c=\n
      \multiply\c by -72
      \advance\c by -108 
      \node[whitecirc] (N\n) at (\the\c:1.1) {\n};
    }
  \end{scope}

  \foreach \m/\n in {1/2, 2/3, 3/4, 4/5, 5/1, 1/6, 6/7, 7/8, 8/1}
  \draw[-latex, bend left] (N\m) to node {} (N\n);
\end{tikzpicture}
\caption{A graph copy where the reduced PIEF decreases the number of variables
in the integer program} \label{fig:reduced_pief}
\end{figure}

Following the approach used by~\citeN{Constantino13:New} for the extended edge
formulation, we eliminate variables as follows. For
$i,j \in V^l$, let
$d_{ij}^l$ be the length of the shortest path in terms of arcs from $i$ to $j$
in $D^l$. For $(i,j) \in A^l$, let
\[
\K^\text{red}(i,j,l) = \{k : 1 \leq k \leq K \wedge d_{li}^l<k \wedge d_{jl}^l \leq (K-k) \}.
\]

For any $k \notin \K^\text{red}(i,j,l)$, no cycle in graph copy $D^l$ of length
less than or equal to $K$ contains $(i,j)$ at position $k$, since either there
is no sufficiently short path from $l$ to $i$ or there is no sufficiently short
path from $j$ to $l$.  Note that $\K^\text{red}(i,j,l) \subseteq \K(i,j,l)$. We
can substitute $\K^{\text{red}}(i,j,l)$ for $\K(i,j,l)$ in
(\ref{eq:pief_obj})-(\ref{eq:pief_c}), yielding a smaller integer
program---\emph{PIEF-reduced (PIEFR)}---with the same optimal solution.


%


\subsubsection{Elimination of variables at position $1$ and $K$: the \PIEFRR{} formulation}\label{sec:pief-redundancy}

In the PIEFR model with $K \geq 3$, variables at position \num{1} are redundant, since $x_{lj1}^l=1$
if and only if $x_{ji2}^l=1$ for some $i$.  Similarly, variables at position
$K$ are redundant; $x_{jlK}^l=1$ if and only if $x_{ij(K-1)}^l=1$ for some $i$.
We can eliminate variables at positions 1 and $K$ from PIEFR as follows. Define
a modified weight function $w'$: for all $i, j, l \in P$ such that $(i,j) \in
A^l$ and all $k \in \{2, \dots, K-1\}$, let

\[
w'(i,j,k,l) =
\begin{cases}
    w_{ij} + w_{li} & k=2 \\
    w_{ij} + w_{jl} & k=K-1 \\
    w_{ij} & \text{otherwise}. 
\end{cases}
\]

With this weight function, a selected arc $(i,j)$ at position \num{2} of a cycle in
$D^l$ contributes to the objective value its own weight plus the weight of the
implicitly selected arc $(l,i)$. An arc $(i,j)$ at position $K-1$ of a cycle in
$D^l$ contributes its own weight plus the weight of of the implicitly selected
arc $(j,l)$.

For $i, j, l \in P$ such that $(i, j) \in A^l$, define the restricted set of
permitted arc positions:

\[
\KRR(i,j,l) = \K^\text{red}(i,j,l) \setminus \{1, K\}.
\]

For $i, j, l \in P$ such that $(i, j) \in A^l$, and for each $k\in
\KRR(i,j,l)$, create a binary variable $x_{ijk}^l$ . The following IP, denoted \PIEFRR{} (PIEF reduced twice), is solved.

\begin{subequations}
\begin{align}
\max \qquad \sum_{l\in V} \sum_{(i,j)\in A^l} \sum_{k\in \KRR(i,j,l)} w'(i,j,k,l) x_{ijk}^l &
            \quad\text{subject to} \label{eq:pief_prime_obj}\\
\sum_{l\in V} \left(
                 \sum_{j:(j,i)\in A^l} \quad \sum_{\mathclap{\quad k\in \KRR(j,i,l)}} x_{jik}^l + \quad
                 \sum_{\mathclap{\substack{j:(i,j)\in A^l \wedge \\ 2 \in \KRR(i,j,l)}}} x_{ij2}^l
              \right)
              + \quad
       \sum_{\mathclap{\substack{h,j:j\not=i \wedge \\ (h,j)\in A^i \wedge \\ K-1 \in \KRR(h,j,i)}}} x_{hj(K-1)}^i
    \leq 1 & \quad i \in V \label{eq:pief_prime_a} \\
\sum_{\substack{j: (j,i) \in A^l \wedge \\ k \in \KRR(j,i,l)}} x_{jik}^l =
    \sum_{\substack{j: (i,j) \in A^l \wedge \\ k+1 \in \KRR(i,j,l)}}x_{i,j,k+1}^l &
    \quad 
    \begin{aligned}
               & l \in V,\\
               & i \in \{l+1, \dots, n\},\\ 
               & k \in \{2, \dots, K-2\}
    \end{aligned}
    \label{eq:pief_prime_b} \\
x_{ijk}^l \in \{0,1\} & \quad
    \begin{aligned}
          & l \in V,
          (i,j) \in A^l, \\
          & k \in \KRR(i,j,l)
    \end{aligned}
    \label{eq:pief_prime_c}
\end{align}
\end{subequations}

The constraints of \PIEFRR\ differ from those of PIEFR
(\ref{eq:pief_a}-\ref{eq:pief_c}) in the following two respects.  First, the
second term in parentheses in the \PIEFRR\ capacity constraint for vertex $i$
(\ref{eq:pief_prime_a}) ensures that any selected arc $(i,j)$ at position \num{2} of
a cycle in $D^l$ counts towards the capacity for $i$, since it is implicit that
the arc $(l,i)$ is also chosen. The final term on the left-hand side of
constraint (\ref{eq:pief_prime_a}) serves the same function for selected arcs
at position $K-1$, since an arc at position $K$ is implicitly chosen.  Second,
the flow conservation constraint (\ref{eq:pief_prime_b}) is not required for $k
\in \{1,K-1\}$, since arcs at positions \num{1} and $K$ are not modelled explicitly in
\PIEFRR.

\subsubsection{Vertex-ordering heuristic}\label{sec:pief-heuristics}

We can reduce the number of variables in the reduced PIEF model by carefully choosing the order of vertex labels in the digraph $D$. We have found that relabelling the 
vertices in descending order of total degree is an effective heuristic to this end. To estimate the effect of this
ordering heuristic on model size, we generated the \PIEFRR\ model for each of the
ten PrefLib instances~\cite{Mattei13:Preflib} with \num{256} vertices
and no NDDs. The heuristic reduced the variable count by a
mean of \num{38} percent, and reduced the constraint count by a mean of \num{60} percent.

\subsection{PIEF has a tight LPR}\label{sec:pief-lpr}
We now compare the LPR bound of PIEF to those of other popular IP models.  The tightness of an LPR is typically viewed as a proxy for how well an IP model will perform in practice, due to the important role the relaxation plays in modern branch-and-bound-based tree search.  In this section, we compare the LPR of PIEF against the IP formulation with the tightest LPR, the \emph{cycle formulation}. While the number of decision variables in the cycle formulation model is exponential in the cycle cap $K$, PIEF maintains an LPR that is just as tight, but has far fewer variables if $K>3$.

\subsubsection{Cycle formulation}\label{sec:pief-cycle-formulation}
We begin by reviewing the cycle formulation due to~\citeN{Abraham07:Clearing} and \citeN{Roth07:Efficient}, and note that the formulation is equivalent to that due to~\citeN{Anderson15:Finding} if chains are disallowed.
In the cycle formulation, a binary variable $z_c$ is used for each feasible
cycle or chain $c$ to represent whether it is selected in the solution. For
each vertex $v$, there is a constraint to ensure that $v$ is in at most
selected cycle or chain. Let $\cycles$ be the set of cycles in $D$ of length at
most $K$ and let $\chains$ be the set of NDD-initiated chains in $D$ of length
at most $L$. The model to be solved is as follows.

\vspace{-6mm}
\begin{subequations}
\begin{align}
\max\ \sum_{c \in \cycles \cup \chains} w_cz_c & \label{eq:cf_objective}\\
\text{s.t. } \sum_{c \in \cycles \cup \chains : i \text{ appears in } c} z_c \leq 1 & \qquad i: 1 \leq i \leq |V| \label{eq:cf_a} \\
z_c \in \{0, 1\} & \qquad c \in \cycles \cup \chains \label{eq:cf_b}
\end{align}
\end{subequations}

Constraints (\ref{eq:cf_a}) ensure that the selected cycles and chains are
vertex disjoint.

\subsubsection{LPR of PIEF}

We now show that the LPR of PIEF is exactly as tight as that of the cycle formulation.
Formally, if $A$ and $B$ are two IP formulations for the kidney exchange problem, we write that $A$ \emph{weakly dominates} $B$, denoted $Z_A \preceq Z_B$, if for every problem instance, the LPR objective value under $A$ is no greater than the LPR objective value under $B$.  Further, we say that $A$ \emph{strictly dominates} $B$, denoted $Z_A \prec Z_B$, if $Z_A \preceq Z_B$ and for some problem instance, the LPR objective value under $A$ is strictly smaller than the LPR objective value under $B$.  Finally, we write $Z_A = Z_B$ if $Z_A \preceq Z_B$ and $Z_B\preceq Z_A$.

\begin{restatable}{theorem}{thmpiefvscf}%
\label{thm:pief-vs-cf}
  $\zCF{} = \zPIEF{}$ (without chains).
\end{restatable}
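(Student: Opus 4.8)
The plan is to prove the two weak-domination inequalities $\zCF{} \preceq \zPIEF{}$ and $\zPIEF{} \preceq \zCF{}$ separately, in each case by exhibiting, for an arbitrary NDD-free instance, a value-preserving map from feasible points of one LP relaxation to feasible points of the other.

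\emph{Easy direction ($\zCF{} \preceq \zPIEF{}$).} Given a feasible fractional cycle-formulation solution $(z_c)$, I build a PIEF point as follows. Every cycle $c$ of length at most $K$ lies inside exactly one graph copy $D^l$, namely the one indexed by $l = \min(c)$, and its canonical representation assigns each arc a position. Set $x_{ijk}^l := \sum_c z_c$, the sum taken over cycles $c$ with $\min(c)=l$ in which $(i,j)$ occupies position $k$. Then: the capacity constraint~(\ref{eq:pief_a}) at $v$ is exactly $\sum_{c \ni v} z_c \le 1$, since each cycle through $v$ contributes its unique arc into $v$ once, in its own graph copy; the flow-conservation constraint~(\ref{eq:pief_b}) holds because within a single cycle the arc entering a vertex at position $k$ is matched by the arc leaving it at position $k+1$; and the objective is preserved since $\sum_{(i,j) \in c} w_{ij} = w_c$. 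All of this is routine.

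\emph{Hard direction ($\zPIEF{} \preceq \zCF{}$), by flow decomposition.} Fix a feasible fractional PIEF solution $(x_{ijk}^l)$ and a graph copy $D^l$. Form a layered digraph $H^l$ with a source $s$, a sink $t$, and a node $(v,p)$ for each $v \in V^l \setminus \{l\}$, $p \in \{1, \dots, K-1\}$; place an arc $s \to (j,1)$ carrying $x_{lj1}^l$, an arc $(i,p-1) \to (j,p)$ carrying $x_{ijp}^l$ for $i,j \neq l$, and an arc $(i,p-1) \to t$ carrying $x_{ilp}^l$ for arcs into $l$. The crucial observation is that the PIEF flow-conservation constraints~(\ref{eq:pief_b}) are exactly flow conservation in $H^l$ at the internal nodes $(v,p)$, so $(x^l)$ is a feasible $s$--$t$ flow in the acyclic digraph $H^l$. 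By the flow decomposition theorem this flow is a nonnegative combination of $s$--$t$ paths; since arcs of $H^l$ join only consecutive layers (or go to $t$), each path has at most $K$ arcs and projects to a closed walk in $D^l$ of length at most $K$ through $l$. Each closed walk in turn decomposes greedily into simple cycles, each of length at most $K$. Accumulating the path weights over all $l$ and over the simple cycles so obtained defines the $z_c$. To check feasibility one computes $\sum_{c \ni v} z_c$: in a closed walk $W$ the number of constituent simple cycles through $v$ equals the number of visits of $W$ to $v$, and aggregating this over the paths, the positions $p$, and the graph copies $D^l$ with $l \le v$ recovers precisely the left-hand side of~(\ref{eq:pief_a}) at $v$ --- where one uses the identity that the flow out of $s$ equals the flow into $t$ in $H^v$ (i.e.\ $\sum_j x_{vj1}^v = \sum_{i,k\ge 2} x_{ivk}^v$) to match the copy-$D^v$ contribution with the position-$K$ terms of the capacity constraint. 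Since arc weights are preserved throughout, the objective is unchanged, so $z$ is a feasible cycle-formulation LP solution of equal value, and the two inequalities together give $\zCF{} = \zPIEF{}$.

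The main obstacle is precisely this last direction, and inside it two points deserve care. First, flow decomposition on $H^l$ can return paths projecting to \emph{non-simple} closed walks (for instance $l \to a \to b \to a \to l$ when $K \ge 4$), which are not cycle-formulation variables, so the step splitting walks into simple cycles cannot be skipped and one must verify it preserves both total arc weight and per-vertex usage at the fractional level. Second, a single vertex's ``usage'' is smeared across the copies $D^l$ for $l \le v$ and across all positions, so collapsing $\sum_{c \ni v} z_c$ back onto the single inequality~(\ref{eq:pief_a}) requires the flow-balance identity above. A minor but necessary observation is that every closed walk produced has length at most $K$, so $z_c = 0$ for all longer cycles and $z$ indeed lies in the cycle-formulation polytope.
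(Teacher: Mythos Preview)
Your proof is correct and follows essentially the same strategy as the paper's: both directions map feasible LP points in the natural way, and for the hard direction $\zPIEF{} \preceq \zCF{}$, both reduce a PIEF solution in each graph copy $D^l$ to closed walks through $l$ of length at most $K$, then invoke the closed-walk-into-simple-cycles decomposition (the paper states this as a separate lemma) and check that per-vertex usage is preserved.

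The presentation differs in a way worth noting. The paper carries out the hard direction by iterative peeling: repeatedly find a closed walk along which all $x^l_{ijk}$ are positive, subtract the minimum value along it from the PIEF variables, add that value to the corresponding cycle variables, and maintain a combined PIEF-plus-cycle capacity invariant throughout. You instead build the layered DAG $H^l$ explicitly, observe that constraint~(\ref{eq:pief_b}) is literally flow conservation there, and invoke the flow decomposition theorem once; you then verify feasibility of the resulting $z$ after the fact rather than step by step. Your framing is cleaner and more modular (it offloads the peeling to a standard theorem), while the paper's invariant-tracking argument is more self-contained. The one place where your writeup could be tightened is the phrase ``position-$K$ terms of the capacity constraint'': the copy-$D^v$ contribution to constraint~(\ref{eq:pief_a}) at vertex $v$ runs over all positions $k \in \{2,\dots,K\}$, not just $K$, but the identity you actually state and use, $\sum_j x_{vj1}^v = \sum_{i,k\ge 2} x_{ivk}^v$, is exactly right.
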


\section{PICEF: Position-Indexed Chain-Edge Formulation}\label{sec:picef}
\subsection{Description of the model}
Our second new IP formulation, PICEF, uses a variant of PIEF for chains, and---like the cycle formulation---uses one binary variable for each cycle.
The idea of using variables for arcs in chains and a variable for
each cycle was introduced in the PC-TSP-based
algorithm of~\citeN{Anderson15:Finding}. The innovation in our IP model is the use of
position indices on arc variables, which results in polynomial counts of
constraints and arc-variables; this is in contrast to the exponential number of
constraints in the PC-TSP-based model.

Unlike PIEF, PICEF does not require copies of $D$. Intuitively, this is
because a chain is a simpler structure than a cycle, with no requirement for
a final arc back to the initial vertex.

We define $\K'(i,j)$, the set of possible positions at which arc $(i,j)$ may occur in a chain in the digraph $D$. For $i, j \in V$ such that $(i, j) \in A$,
\[
\K'(i,j) =
\begin{cases}
    \{1\} & i \in N \\
    \{2, \dots, L\} & i \in P
\end{cases}.
\]

Thus, any arc leaving an NDD can only be in position \num{1} of a chain, and any
arc leaving a patient vertex may be in any position up to the cycle-length cap
$L$, except \num{1}.

For each $(i,j) \in A$ and each $k \in \K'(i,j)$, create variable $y_{ijk}$,
which takes value 1 if and only if arc $(i,j)$ is selected at position $k$ of
some chain. For each cycle $c$ in $D$ of length up to $K$, define a binary variable $z_c$ to
indicate whether $c$ is used in a packing. 

For example, consider the instance in Figure~\ref{fig:picef_example}, in which
$|N|=2$ and $|P|=4$. Suppose that $K=3$ and $L=4$, and suppose further that
each arc has unit weight.  The IP model includes variables $y_{131}$,
$y_{141}$, and $y_{241}$, corresponding to arcs leaving NDDs. For each $k \in
{2, 3, 4}$, the model includes variables $y_{34k}$, $y_{45k}$, $y_{56k}$,
$y_{64k}$, and $y_{65k}$, corresponding to arcs between donor-patient pairs at
position $k$ of a chain.  Finally, the model includes $z_c$ variables for the
cycles $((4,5), (5,6), (6,4))$ and $((5,6), (6,5))$.

\vspace{-3mm}
  \begin{figure}[ht!bp]
    \centering
    \begin{tikzpicture}[scale=0.8,transform shape]

  \node (n1) at (0, 1.6) [whitesq] {1};
  \node (n2) at (0, 0) [whitesq] {2};

  \node (n3) at (1.4, .8) [whitecirc] {3};
  \node (n4) at (2.8, .8) [whitecirc] {4};
  \node (n5) at (4.2, 1.6) [whitecirc] {5};
  \node (n6) at (4.2, 0) [whitecirc] {6};

  \draw [-latex] (n1) to node {} (n3);
  \draw [-latex, bend left=10] (n1) to node {} (n4);
  \draw [-latex, bend right=10] (n2.east) to node {} (n4);
  \draw [-latex] (n3) to node {} (n4);
  \draw [-latex, bend left=20] (n4) to node {} (n5);
  \draw [-latex, bend left=20] (n5) to node {} (n6);
  \draw [-latex, bend left=20] (n6) to node {} (n4);
  \draw [-latex, bend left=20] (n6) to node {} (n5);

\end{tikzpicture}
    \caption{An instance with $|N|=2$ and $|P|=4$}
    \label{fig:picef_example}
  \end{figure}
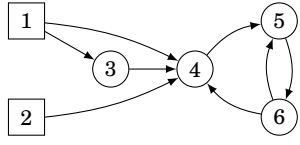
  \vspace{-4mm}
  
The following IP is solved to find a maximum-weight packing of cycles
and chains.

\vspace{-3mm}

\begin{subequations}
\begin{align}
\max \qquad \sum_{(i,j)\in A} \sum_{k\in \K'(i,j)} w_{ij}y_{ijk} 
        + \sum_{c \in \cycles} w_c z_c & \label{eq:picef_obj}\\
\text{s.t.} \qquad \sum_{j:(j,i)\in A} \sum_{k\in \K'(j,i)} y_{jik} +
        \sum_{\mathclap{c \in \cycles : i \text{ appears in } c}} z_c \leq 1
        & \qquad i \in P \label{eq:picef_a} \\
\sum_{j:(i,j)\in A} y_{ij1} \leq 1
        & \qquad i \in N \label{eq:picef_b} \\
\sum_{\substack{j: (j,i) \in A \wedge \\ k \in \K'(j,i)}} y_{jik} \geq
    \sum_{j: (i,j) \in A}y_{i,j,k+1} &
    \qquad 
    \begin{aligned}
               & i \in P,\\ 
               & k \in \{1, \dots, K-1\}
    \end{aligned}
    \label{eq:picef_c} \\
y_{ijk} \in \{0,1\} & \qquad
               (i,j) \in A, k \in \K'(i,j)
    \label{eq:picef_d} \\
z_c \in \{0, 1\} & \qquad c \in \cycles \label{eq:picef_e}
\end{align}
\end{subequations}

Inequality~(\ref{eq:picef_a}) is the capacity constraint for patients: each
patient vertex is involved in at most one chosen cycle or incoming arc of a
chain.  Inequality~(\ref{eq:picef_b}) is the capacity constraint for NDDs: each
NDD vertex is involved in at most one chosen outgoing arc.  The flow
inequality~(\ref{eq:picef_c}) ensures that patient-donor pair vertex $i$ has an
outgoing arc at position $k+1$ of a selected chain only if $i$ has an incoming arc at
position $k$; we use an inequality rather than an equality since the final vertex of a chain will have an incoming arc
but no outgoing 
\iftoggle{isFullVersion}{%
  arc.  Theorem \ref{thm:PICEFcorrect} in Section \ref{sec:PICEFvalid} establishes the correctness of the PICEF model.%
}{%
  arc.\footnote{Theorem~B.4 in Appendix~B.1 of the full paper establishes the correctness of the PICEF model.}%
}
  
We now give an example of each of the inequalities
(\ref{eq:picef_a}--\ref{eq:picef_c}) for the instance in
Figure~\ref{fig:picef_example}.  For $i=4$, the capacity
constraint~(\ref{eq:picef_a}) ensures that $y_{141} + y_{241} + y_{342} +
y_{343} + y_{344} + z_{((4,5),(5,6),(6,4))} \leq 1$. For $i=1$, the NDD
capacity constraint~(\ref{eq:picef_b}) ensures that $y_{131} + y_{141} \leq 1$.
For $i=5$ and $k=2$, the chain flow constraint~(\ref{eq:picef_c}) ensures that
$z_{452} + z_{652} \geq z_{563}$; that is, the outgoing arc $(5,6)$ can only be
selected at position $3$ of a chain if an incoming arc to vertex $5$ is selected at
position $2$ of a chain.

In our example in Figure~\ref{fig:picef_example}, the optimal objective value
is $4$.  One satisfying assignment that gives this objective value is $y_{131} =
y_{342} = z_{((5,6), (6,5))} = 1$, with all other variables
equal to zero.

\subsection{Practical implementation of the PICEF model}\label{sec:picef-implementation}

We now discuss methods for the practical implementation of PICEF, first by reducing the number of decision variables via a polynomial-time preprocess, and second by tackling the large number of decision variables for cyclic exchanges via a branch-and-price-based transformation of the model.

\subsubsection{Reduced PICEF}\label{sec:picef-reduced}

We can reduce the PICEF model using a similar approach to the PIEF reduction in
Subsection~\ref{sec:pief-basic-reduction}. For $i \in P$, let $d(i)$ be length of the shortest
path in terms of arcs from some $j \in N$ to $i$. Since any outgoing arc from $i$
cannot appear at position less than $d(i) + 1$ in a chain, we can replace $\K'$
in PICEF with $\K^\text{red}$, defined as follows:
\vspace{-3mm}
\[
\K^\text{red}(i,j) =
\begin{cases}
    \{1\} & i \in N \\
    \{d(i)+1, \dots, L\} & i \in P.
\end{cases}
\]

\subsubsection{A branch and price implementation of PICEF}\label{sec:picef-bnp}
We now discuss a method for scaling PICEF to graphs with high cycle caps, or large graphs with many cycles; this method maintains the full set of arc decision variables, but only incrementally considers those corresponding to cycles. 

Formally, for $V = P \cup N$, the number of cycles of length at most $K$ is $O(|P|^K)$, making explicit representation and enumeration of all cycles infeasible for large enough instances. With one decision variable per cycle,~\citeN{Abraham07:Clearing} could not even write the full integer program in memory for instances as small as \num{1000} pairs.

Branch and price is a method where only a subset of the decision variables are kept in memory, and columns (in the case of PICEF, only those corresponding to cycle variables) are slowly added until correctness is proven at each node in a branch-and-bound search tree. If necessary, superfluous columns can also be removed from the model, in order to prevent its size from exceeding memory.

The following process occurs at each node in the search tree: first, the LP relaxation of the current model (which may contain only a small number of cycles) is solved. The next step is to generate \emph{positive price cycles}: cycles that have the potential to improve the objective value if included in the model.

The price of a cycle $c$ is given by $\sum_{(i,j) \in c} (w_{ij} - \delta_i)$, where $\delta_i$ is the dual value of vertex $i$. While there exist any positive price cycles at a node in the search tree, optimality of the reduced LP has not yet been proved at that node. The \emph{pricing problem} is to bring at least one new positive price cycle into the model, or prove that none exist. Multiple methods exist for solving the pricing problem in kidney exchange~\cite{Abraham07:Clearing,Glorie14:Kidney,MakHau15:Kidney,Plaut16:Fast}; in our experimental section, we use the cycle pricer of~\citeN{Glorie14:Kidney} with the bugfix of~\citeN{Plaut16:Fast}.

Once no more positive price cycles exist, the reduced LP at a specific node is guaranteed to be optimal. However, it may not be integral: in this case, branching occurs, as in standard brand-and-bound tree search. In our experiments, we explore branches in depth-first-order unless optimality is proven at all nodes in the search tree.

\subsection{The LPR of PICEF is not as tight}\label{sec:picef-lpr}

As an analogue to Section~\ref{sec:pief}, we now compare the LPR of PICEF against the cycle formulation LPR.  Unlike in the PIEF case, where Theorem~\ref{thm:pief-vs-cf} showed an equivalence between the two models' relaxations, we show that PICEF's relaxation can be looser than that of the cycle formulation.  Theorem~\ref{thm:picef-vs-cf} gives a simple construction showing this, while Theorem~\ref{thm:picef-vs-cf-large} presents a family of graphs on which PICEF's LPR is arbitrarily worse than that of the cycle formulation.
\iftoggle{isFullVersion}{%
  The proofs of both of these results are contained in Section \ref{sec:PICEFLPRproofs}.
}{}

\begin{restatable}{theorem}{thmpicefvscf}%
  \label{thm:picef-vs-cf}
  $\zCF{} \prec \zPICEF{}$ (with chains).
\end{restatable}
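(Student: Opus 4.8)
The plan is to prove the two halves of $\zCF \prec \zPICEF$ in turn: first $\zCF \preceq \zPICEF$ by a value-preserving transformation of solutions, and then strictness on a single small instance.

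For $\zCF \preceq \zPICEF$, I would exhibit an explicit linear map taking any feasible point $\{z_c\}_{c\in\cycles\cup\chains}$ of the cycle-formulation LP to a feasible point of the PICEF LP of equal objective. Leave the cycle variables $z_c$ ($c\in\cycles$) unchanged; for every chain $c=(v_0,v_1,\dots,v_\ell)$ with $v_0\in N$ and every $t\in\{1,\dots,\ell\}$, add $z_c$ to the PICEF variable $y_{v_{t-1}v_t t}$ (such a variable exists, since $t\le\ell\le L$ and $t\in\K'(v_{t-1},v_t)$), summing over all chains. Feasibility is then bookkeeping: because a chain through a patient $i$ enters $i$ by exactly one arc, at the position where $i$ sits in the chain, the chain part of the capacity constraint~(\ref{eq:picef_a}) at $i$ collapses to $\sum_{c\ni i}z_c$, so~(\ref{eq:picef_a}) becomes exactly constraint~(\ref{eq:cf_a}) at $i$; similarly~(\ref{eq:picef_b}) collapses to~(\ref{eq:cf_a}) at the NDD; and the flow inequality~(\ref{eq:picef_c}) at $(i,k)$ holds because its left-hand side sums $z_c$ over all chains whose position-$k$ arc ends at $i$, while its right-hand side sums $z_c$ only over those chains which additionally have a position-$k+1$ arc, i.e.\ continue past $i$. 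Since $w_c=\sum_{(i,j)\in c}w_{ij}$ for a chain, the objective is preserved, so the CF-LP optimum is attained by a PICEF-LP point, giving $\zCF\preceq\zPICEF$.

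For strictness I would present a ``lollipop'' instance: a single NDD $n_0$, a short simple path $n_0\to a\to v$ into a directed cycle $v\to b_1\to\cdots\to b_{m-1}\to v$ whose length $m$ satisfies $2\le K<m$ (so this cycle is not an admissible packing cycle), unit weights everywhere except a large weight $M$ on the closing arc $(b_{m-1},v)$, and $L$ large enough to fit a walk once around the cycle --- concretely $m=3$, $K=2$, $L=5$, $M>4$. On this instance every admissible chain is a prefix of $n_0\to a\to v\to b_1\to\cdots\to b_{m-1}$ (using the closing arc would revisit $v$), and all of these contain $n_0$, so $\sum_c z_c\le 1$ in the CF relaxation and its optimum is the weight $m+1$ of the longest chain --- the weight $M$ of the closing arc is entirely unavailable to the cycle formulation. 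PICEF, however, admits the fractional walk $n_0\to a\to v\to b_1\to\cdots\to b_{m-1}\to v$ at ``mass'' $\tfrac12$: assign $\tfrac12$ to each of the $m+2$ arc variables along this length-$(m+2)$ walk at its natural position. This is PICEF-feasible --- the flow inequality~(\ref{eq:picef_c}) holds with no slack all along the walk, tracing back to $y_{n_0 a 1}=\tfrac12$, and the only twice-visited vertex $v$ receives incoming flow $\tfrac12$ at two distinct positions, exactly meeting its capacity~(\ref{eq:picef_a}) --- and its objective is $\tfrac{m+1}{2}+\tfrac{M}{2}$, which strictly exceeds $m+1$ once $M>m+1$. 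Hence $\zCF\prec\zPICEF$.

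The genuinely non-obvious step is the strictness construction, and behind it the reason PICEF's relaxation can be looser at all: its chain variables, indexed by (arc, position) and coupled only through the per-vertex, per-position inequality~(\ref{eq:picef_c}), actually describe fractional packings of \emph{walks} rather than of simple chains, and a walk may revisit a vertex, paying for it twice against~(\ref{eq:picef_a}). Revisiting walks are never profitable under unit weights, so the instance must reward a walk for traversing a \emph{heavy} arc, and the cleanest such arc is one that closes a cycle too long to be admissible: it then lies on no admissible chain and on no admissible cycle, so the cycle formulation collects none of its weight while PICEF collects half. The remaining care is routine: confirming that $m>K$ leaves no short directed cycle in the graph (so no admissible cycle is created inadvertently), and that the nested-prefix structure really pins the CF relaxation to the longest-chain weight.
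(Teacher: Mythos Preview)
Your argument is correct. The weak-dominance half is exactly the paper's proof: push the cycle-formulation chain variables onto position-indexed arc variables and observe that the PICEF capacity and flow constraints collapse to the cycle-formulation capacity constraints.

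For strictness you take a genuinely different route. The paper's witness has \emph{unit weights} throughout: one NDD feeding two disjoint length-$3$ arms, with a single back-arc $7\to 5$ on the second arm, $K=2$, $L=4$. The cycle-formulation LPR is pinned to the longest chain (weight $3$), while PICEF attains $7/2$ by putting mass $1/2$ on all seven arcs---the NDD's unit of capacity is split across both arms, and the back-arc lets the second arm's fractional walk run one step longer. Your lollipop instead has a single arm, so splitting the NDD buys nothing and you must instead place a large weight $M$ on the cycle-closing arc to make the half-walk profitable. Both constructions exploit the same structural slack---PICEF's flow inequalities encode fractional \emph{walks} that may revisit a vertex---but monetise it differently: the paper via branching at the NDD, you via a heavy arc unreachable by any simple chain. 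The paper's example is arguably more natural for the application (unit weights, i.e.\ maximising transplants), while yours is structurally leaner and makes the walk-versus-chain intuition explicit; note, though, that your parenthetical ``revisiting walks are never profitable under unit weights'' is true for your single-arm geometry but not in general, as the paper's unit-weight example shows.
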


Indeed, Theorem~\ref{thm:picef-vs-cf-large} shows that the ratio between the optimum objective value for the relaxations of
PICEF and the cycle formulation can be made arbitrarily large.

\begin{restatable}{theorem}{thmpicefvscflarge}%
  \label{thm:picef-vs-cf-large}
   Let $z \in \mathbb{R}^+$ be given. There exists a problem instance for which
   $Z_\text{PICEF} / Z_\text{CF} > z$, where $Z_\text{PICEF}$ is the objective
   value of the LPR of PICEF and $Z_\text{CF}$ is the objective value of the LPR
   of the cycle formulation.
\end{restatable}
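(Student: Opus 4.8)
The plan is to exhibit an explicit parametrized family of instances and show that $\zPICEF$ can be made arbitrarily large relative to a fixed (or slowly growing) $\zCF$, so that their ratio is pushed past any prescribed $z$. The family will exploit the structural weakness of PICEF hinted at by Theorem~\ref{thm:picef-vs-cf}: PICEF models chains as a position-indexed arc-flow in which each vertex has an \emph{incoming}-capacity of $1$, and this does not rule out a fractional solution tracing a closed \emph{walk} that revisits a vertex, provided that vertex's total incoming flow still does not exceed $1$. The cycle formulation, by contrast, only ever admits genuine simple chains and cycles, and so cannot use an arc that lies only on a walk (and not on any feasible chain or short cycle).

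Concretely, for a weight parameter $W>0$, a cycle cap $K$, and any chain cap $L\ge K+2$, I would take the instance $D_W$ with a single NDD $n$, patient vertices $v_1,\dots,v_{K+1}$ forming the directed cycle $v_1\to v_2\to\cdots\to v_{K+1}\to v_1$, and the extra arc $n\to v_1$; the return arc $v_{K+1}\to v_1$ gets weight $W$ and every other arc gets unit weight. Then I would carry out two computations. First, $\zCF(D_W)$: the unique cycle has length $K+1>K$ and is infeasible; since $n$ is the only NDD, every feasible chain is a sub-path of $n\to v_1\to\cdots\to v_{K+1}$ (it cannot extend by $v_{K+1}\to v_1$ without revisiting $v_1$), all such sub-chains contain $v_1$, and so the capacity constraint at $v_1$ bounds even the LP objective by the weight of the longest such chain, namely $K+1$; this is attained integrally, giving $\zCF(D_W)=K+1$ independent of $W$ (a one-line dual certificate — unit weight on each $v_i$, zero on $n$ — confirms the upper bound). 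Second, $\zPICEF(D_W)$: I would write down the fractional point putting value $\tfrac12$ on each arc of the walk $n\to v_1\to v_2\to\cdots\to v_{K+1}\to v_1$ at the natural positions $1,2,\dots,K+2$ (this is why $L\ge K+2$ is needed), verify it satisfies the capacity, NDD, and flow constraints — the only tight one being the incoming-capacity at $v_1$, which gets $\tfrac12$ from the position-$1$ arc $n\to v_1$ and $\tfrac12$ from the position-$(K+2)$ arc $v_{K+1}\to v_1$ — and read off its objective $\tfrac12\bigl(W+(K+1)\bigr)$, so $\zPICEF(D_W)\ge\tfrac12(W+K+1)$.

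Combining the two bounds gives $\zPICEF(D_W)/\zCF(D_W)\ge \tfrac{W+K+1}{2(K+1)}$, which exceeds any given $z$ once $W$ is taken large enough (e.g.\ $W=2(K+1)z$ yields ratio $z+\tfrac12$). This proves the theorem; it also re-derives the strict-domination statement of Theorem~\ref{thm:picef-vs-cf}, using in addition the straightforward fact that every cycle-formulation solution embeds into PICEF (each chain mapped to its arc-flow, each short cycle to its own variable), so $\zCF\preceq\zPICEF$ in general.

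The main obstacle — and the only genuinely creative step — is designing the gadget: one must hang the heavy weight on exactly the return edge of a cycle that is simultaneously (i) too long to be a feasible cycle under $K$ and (ii) unreachable by any simple feasible chain, because every route from the NDD to that edge's tail already passes through its head, while still keeping the corresponding fractional closed walk inside the position budget $L$ and inside the incoming-capacity bound (this bound is precisely what forces the walk to carry value $\tfrac12$ rather than $1$, and hence forces a weighted rather than unit-weight instance, since under unit weights $\zPICEF\le|P|$ would limit the gap). Once the gadget is fixed, both the $\zCF$ upper bound and the PICEF feasibility check are routine.
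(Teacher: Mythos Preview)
Your proof is correct, and it takes a genuinely different route from the paper's. The paper builds a \emph{unit-weight} family: one NDD feeding a spine of $L-K-1$ vertices, with a $(K{+}1)$-cycle hanging from each consecutive spine pair; it then sets every $y$-variable to $\tfrac12$ and shows that $\zPICEF/\zCF$ grows like $(K{+}1)/2$ as $K\to\infty$. Your construction is much smaller --- one NDD, one $(K{+}1)$-cycle, one heavy edge --- and achieves the unbounded ratio by letting the weight $W$ grow instead of the graph. The trade-off is exactly the one you identify in your last paragraph: because the incoming-capacity constraint caps $\zPICEF$ by the total edge weight incident to $P$, a bounded-size unit-weight gadget cannot give an unbounded ratio, so you must either blow up the graph (the paper's choice) or blow up a weight (yours). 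What your approach buys is that the gap is already unbounded for \emph{fixed} small parameters ($K=3$, $L=5$ suffice), which is arguably more relevant to practice; what the paper's approach buys is that the looseness is not an artifact of extreme weights. Both feasibility checks (your half-walk in PICEF, the paper's half-on-every-edge assignment) and both $\zCF$ computations (a single binding NDD constraint making the LP trivially integral) go through by the same elementary reasoning.
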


While the results of Theorems~\ref{thm:picef-vs-cf} and~\ref{thm:picef-vs-cf-large} may be disheartening, in the following section, we give experimental evidence that PICEF (as well as its branch-and-price-based interpretation) perform extremely competitively on real and generated data.


\section{Experimental Comparison of State-of-the-Art Kidney Exchange Solvers}\label{sec:experiments}
In this section, we compare implementations of our new models against  existing state of the art kidney exchange solvers.  To ensure a fair comparison, we received code from the author of each solver that is not introduced in this paper.
We compare run times of the following state-of-the-art solvers:
\begin{itemize}[topsep=0pt]
  \item \algABS{}, the original branch-and-price-based cycle formulation solver due to~\citeN{Abraham07:Clearing};
  \item \algPDS{}, a branch-and-price-based cycle formulation solver with pricing due to~\citeN{Glorie14:Kidney} and~\citeN{Plaut16:Fast};\footnote{Recently,~\citeN{Plaut16:Hardness} showed a correctness bug in both implementations of the \algPDS{}-style solvers due to~\citeN{Glorie14:Kidney} and~\citeN{Plaut16:Fast}; for posterity, we still include these run times.  Furthermore, we note that the objective values returned by \algPDS{} always equaled that of the other provably-correct solvers on all of our test instances.}
  \item \algRMA{}, a recent IP formulation based on a model for the prize-collecting traveling salesman problem, with constraint generation~\cite{Anderson15:Finding};
  \item \algPICEF{}, the model from Section~\ref{sec:picef} of this paper;
  \item \algPICEFBnP{}, a branch and price version of the PICEF model, as presented in Section~\ref{sec:picef-bnp} of this paper;
  \item \algHPIEF{}, the Hybrid PIEF model from
    \iftoggle{isFullVersion}{%
      Appendix~\ref{sec:hybrid}
    }{%
      Appendix~C of the full version
    }%
    of this paper (which reduces to PIEF for $L=0$); and
  \item \algKlim{}, a branch-and-price algorithm using the Disaggregated Cycle Decomposition model, which is related to both the cycle formulation and the extended edge formulation \cite{Klimentova14:New}. 
\end{itemize}

A cycle-length cap of \num{3} and a time limit of \num{3600} seconds was imposed on each run. When a timeout occurred, we counted the run-time as \num{3600} seconds.

We test on two types of data: real and generated.  Section~\ref{sec:experiments-real} shows run time results on \emph{real} match runs, including \num{286} runs from the UNOS US-wide exchange, which now contains \num{143} transplant centers, and \num{17} runs from the NLDKSS UK-wide exchange, which uses \num{24} transplant centers.  Section~\ref{sec:experiments-generated} increases the size and varies other traits of the compatibility graphs via a realistic generator seeded by the real UNOS data.  We find that \algPICEF{} and \algHPIEF{} substantially outperform all other models.

\subsection{Real match runs from the UK- and US-wide exchanges}\label{sec:experiments-real}

We now present results on real match run data from two fielded nationwide kidney exchanges: The United Network for Organ Sharing (UNOS) US-wide kidney exchange where the decisions are made by algorithms and software from Prof.\ Sandholm's group, and the UK kidney exchange (NLDKSS) where the decisions are made by algorithms and software from Dr.\ Manlove's group.\footnote{Due to privacy constraints on sharing real healthcare data, the UNOS and NLDKSS experimental runs were necessarily performed on different computers---one in the US and one in the UK.  All runs \emph{within} a figure were performed on the same machine, so relative comparisons of solvers within a figure are accurate.} 
The UNOS instances used include all the match runs starting from the beginning of the exchange in October $2010$ to January $2016$. The exchange has grown significantly during that time and chains have been incorporated. The match cadence has increased from once a month to twice a week; that keeps the number of altruists relatively small. 
On average, these instances have $|N|=2$, 
$|P|=231$, and $|A|=5021$.
The NLDKSS instances cover the $17$ quarterly match runs during the period January $2012$--January $2016$. On average, these instances have $|N|=7$, $|P|=201$, and $|A|=3272$.


Figure~\ref{fig:real-data} shows mean run times across all match runs for both exchanges;
\iftoggle{isFullVersion}{%
  Appendix~\ref{app:tables}%
}{%
  Appendix~E in the full version of the paper%
}
gives additional statistics like minimum and maximum run times, as well as their standard deviations. 
Immediately obvious is that the non-compact formulations---\algABS{} and \algRMA{}---tend to scale poorly compared to our newer formulations.  Interestingly, \algPICEFBnP{} tends to perform worse than the base \algPICEF{} and \algHPIEF{}; we hypothesize that this is because branch-and-price-based methods are necessarily more ``heavyweight'' than standard IP techniques, and the small size of presently-fielded kidney exchange pools may not yet warrant this more advanced technique.  Perhaps most critically, both \algPICEF{} and \algHPIEF{} clear real match runs in both exchanges within seconds.

In the NLDKSS results, the wide fluctuation in mean run time as the chain cap is varied can be explained by the small sample size of available NLDKSS instances, and the fact that the algorithms other than HPIEF and PICEF occasionally timed out at one hour. By contrast, each of the HPIEF and PICEF runs on NLDKSS instances took less than five seconds to complete.
We also note that the LP relaxation of PICEF and HPIEF are very tight in practice; the LPR bound equaled the IP optimum for $614$ of the $663$ runs carried out on NLDKSS data.

\begin{figure*}[htb!]
  \vspace{-2mm}
  \centering
  \begin{minipage}{0.49\linewidth}
    \centering
    \includegraphics[width=1.0\linewidth]{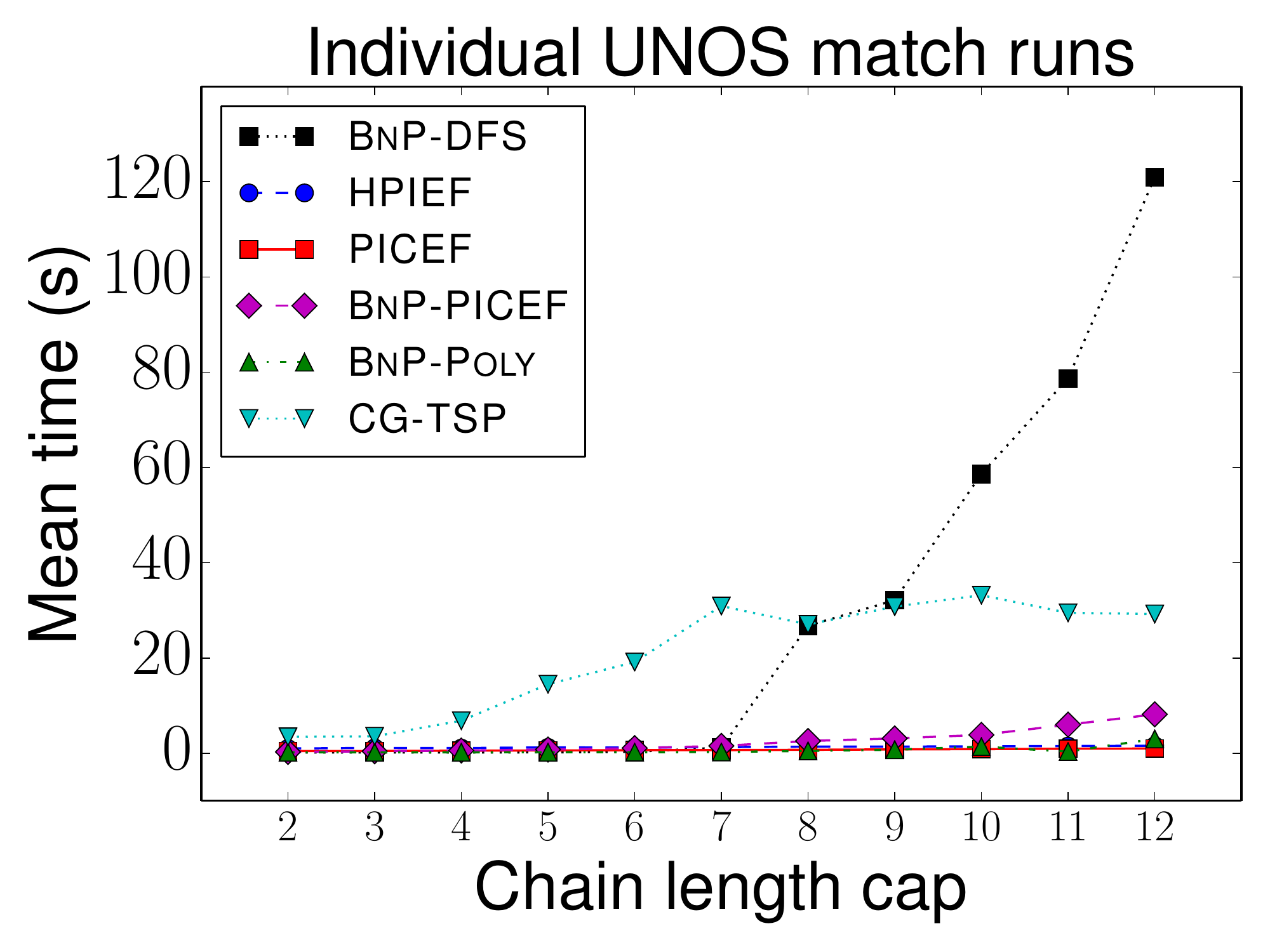}
  \end{minipage}%
  \hspace*{-0.1cm}
  \begin{minipage}{0.49\linewidth}
    \centering
    \includegraphics[width=1.0\linewidth]{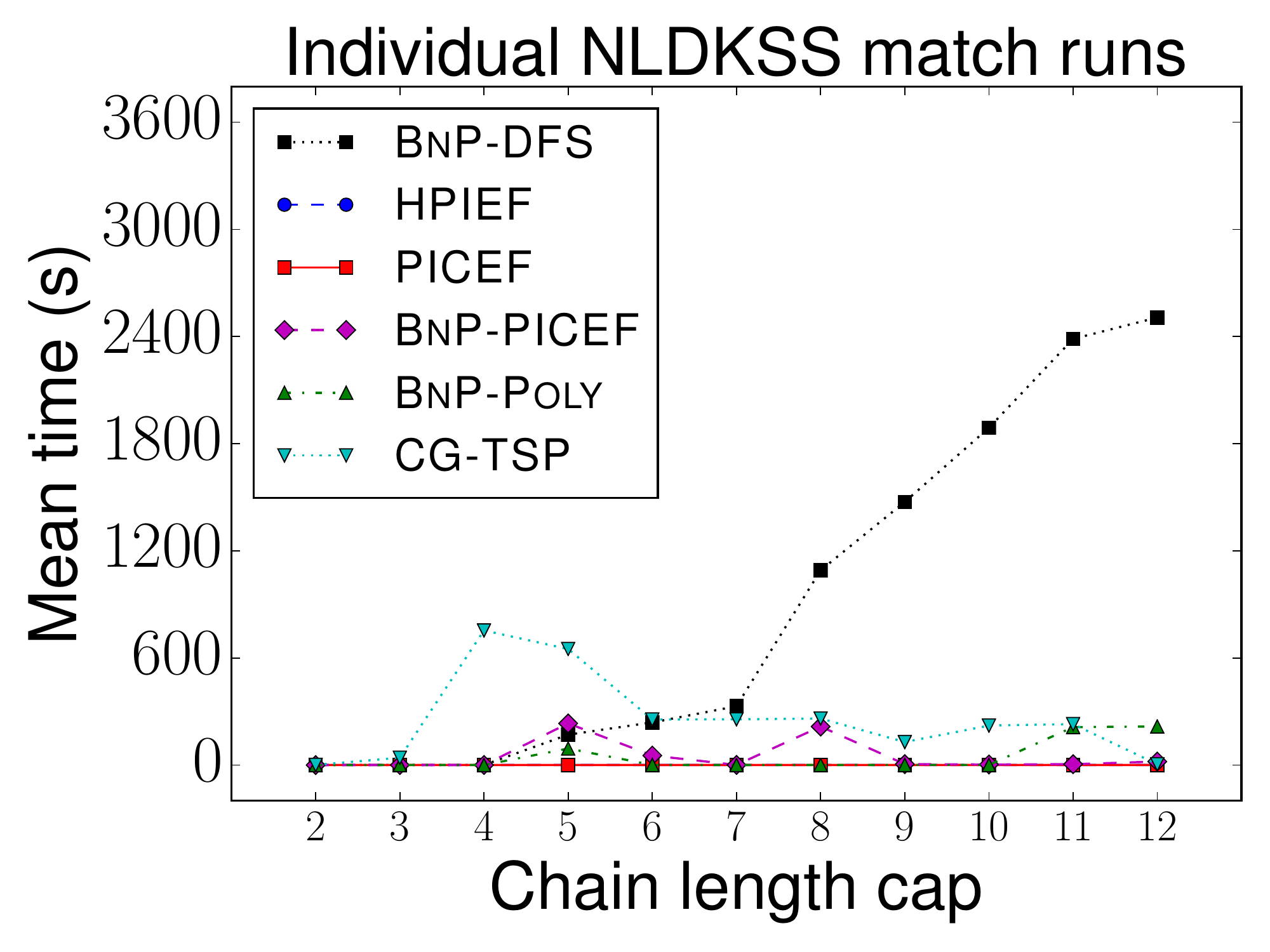}
  \end{minipage}%
  \caption{Mean run times for various solvers on \num{286} real match runs from the UNOS exchange (left), and \num{17} real match runs from the UK NLDKSS exchange (right).}
  \label{fig:real-data}
  \vspace{-3mm}
\end{figure*}

We remark that the \algKlim{} model due to \citeN{Klimentova14:New} was run on all NLDKSS instances where the chain cap $L$ was equal to $0$.  Larger values of $L$ could not be tested since the current implementation of the model in our possession does not accept NDDs in the input.  However for the case that $L=0$ the \algKlim{} model was the fastest for all NLDKSS instances.

Finally we note that the solver of~\citeN{Glorie14:Kidney} was executed on the NLDKSS instances with a chain cap of $L$, for $0\leq L\leq 4$.  It was found that on average the execution time was $8.9$ times slower than the fastest solver from among all the others executed on these instances as detailed at the beginning of Section~\ref{sec:experiments}.  PICEF was the fastest solver on $40$\% of occasions.

\subsection{Scaling experiments on realistic generated UNOS kidney exchange graphs}\label{sec:experiments-generated}

As motivated earlier in the paper, it is expected that kidney exchange pools will grow in size as (a) the idea of kidney exchange becomes more commonplace, and barriers to entry continue to drop, as well as (b) organized large-scale international exchanges manifest.  Toward that end, in this section, we test solvers on generated compatibility graphs from a realistic simulator seeded by all historical UNOS data; the generator samples patient-donor pairs and NDDs with replacement, and draws arcs in the compatibility graph in accordance with UNOS' internal arc creation rules.

Figure~\ref{fig:increasing-v-increasing-a} gives results for increasing numbers of patient-donor pairs (each column), as well as increasing numbers of non-directed donors as a percentage of the number of patient-donor pairs (each row).  As expected, as the number of patient-donor pairs increases, so too do run times for all solvers.  Still, in each of the experiments, for each chain cap, both \algPICEF{} and \algHPIEF{} are on par or (typically) much faster---sometimes by orders of magnitude compared to other solvers.
\iftoggle{isFullVersion}{%
  Appendix~\ref{app:tables}%
}{%
  Appendix~E in the full version of the paper%
}
gives these results in tabular form, including other statistics---minimum and maximum run times, as well as their standard deviations---that were not possible to show in Figure~\ref{fig:increasing-v-increasing-a}.

\begin{figure*}[htb!]
  \centering
  \begin{minipage}{0.33\linewidth}
    \centering
    \includegraphics[width=1.0\linewidth]{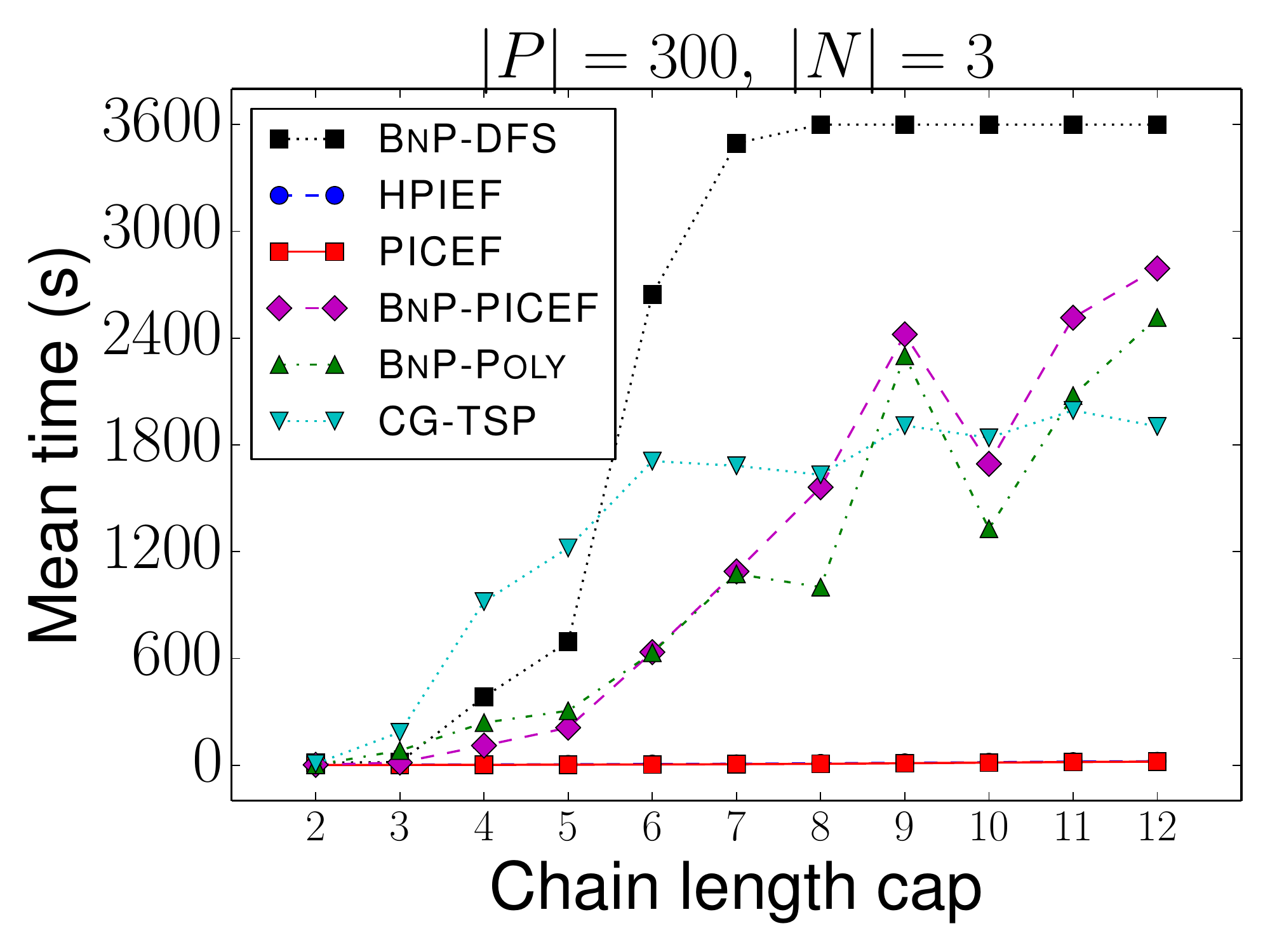}
  \end{minipage}%
  \hspace*{-0.1cm}
  \begin{minipage}{0.33\linewidth}
    \centering
    \includegraphics[width=1.0\linewidth]{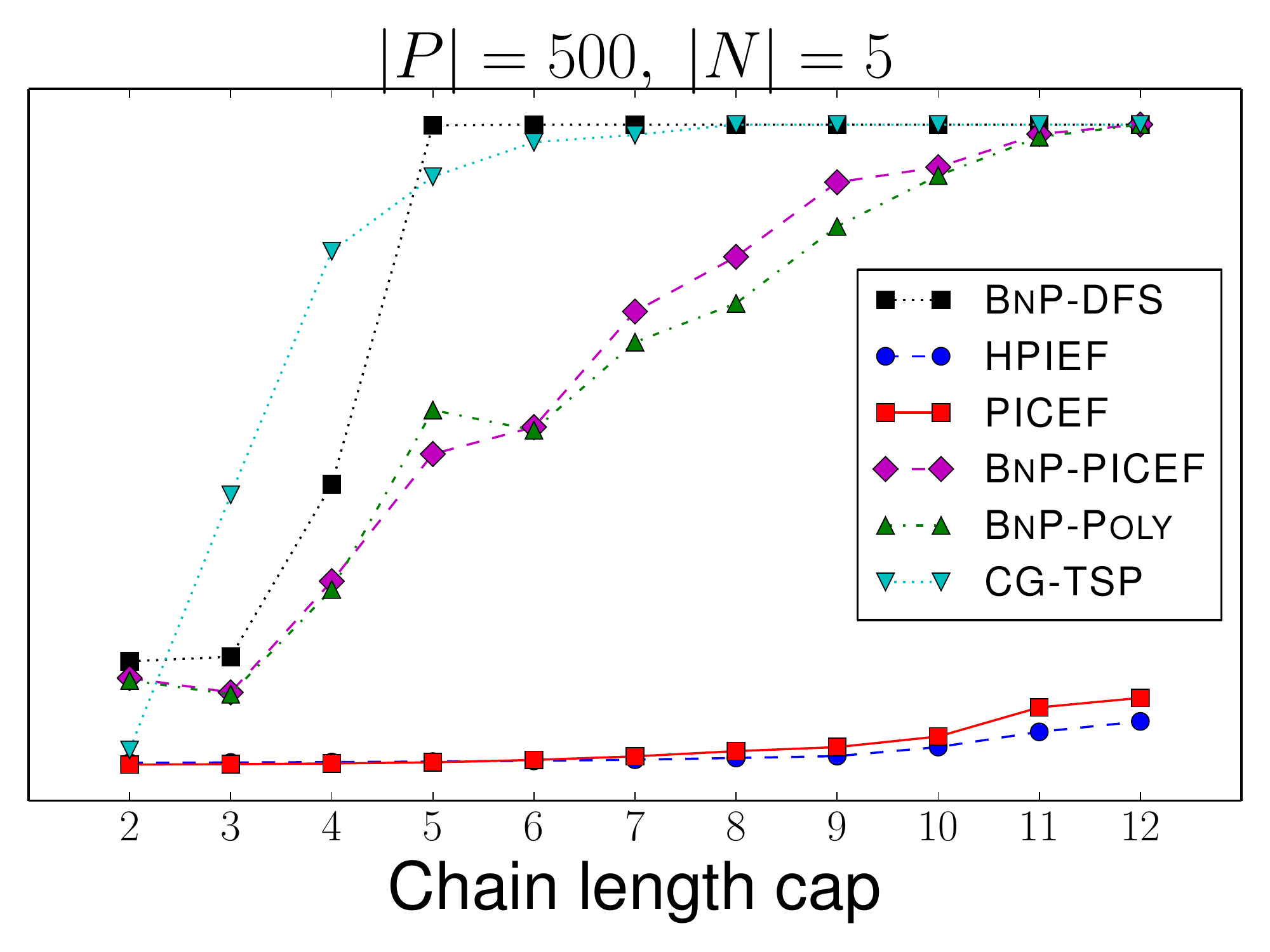}
  \end{minipage}%
  \hspace*{-0.1cm}
  \begin{minipage}{0.33\linewidth}
    \centering
    \includegraphics[width=1.0\linewidth]{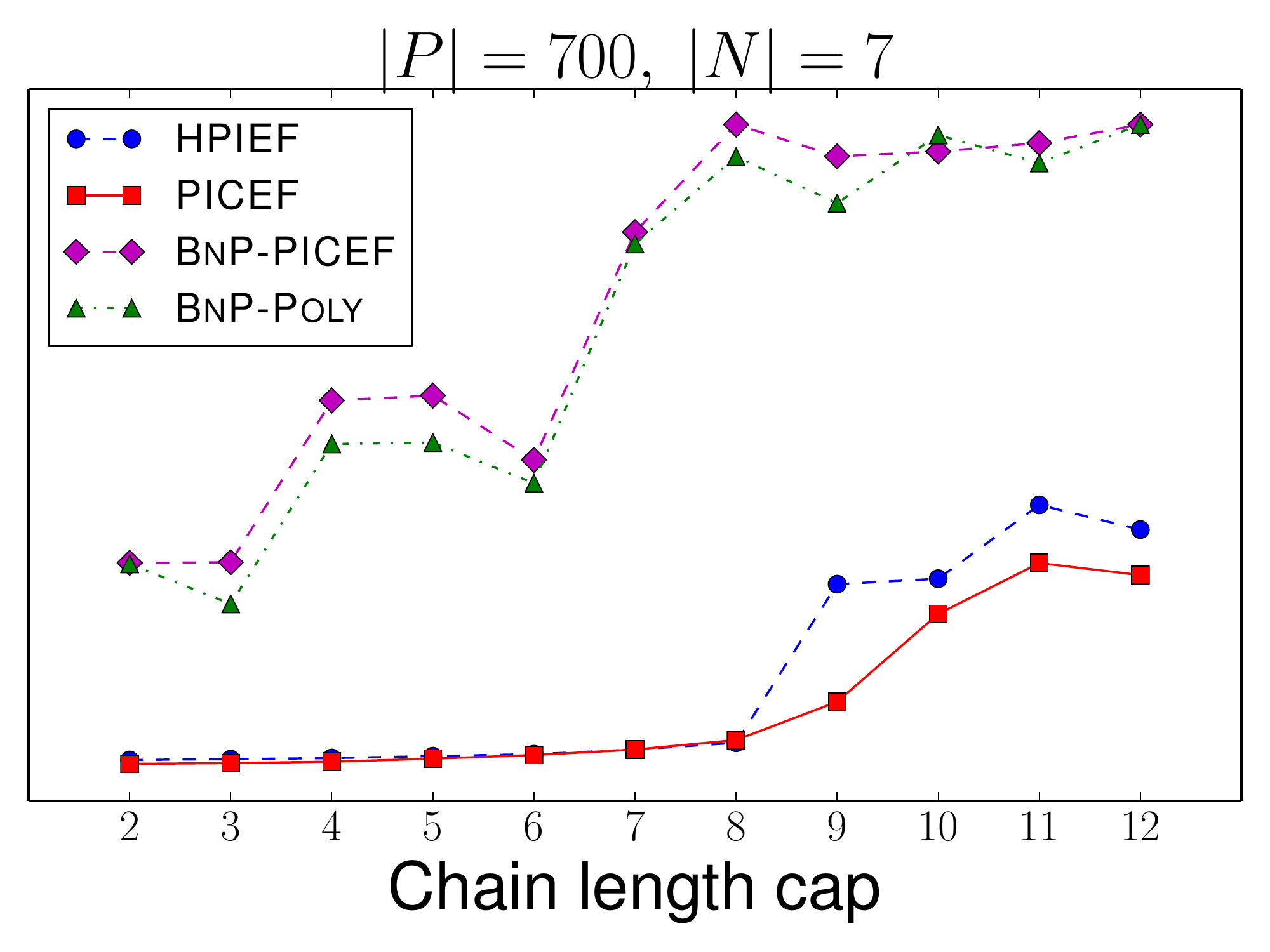}
  \end{minipage}%

  \begin{minipage}{0.33\linewidth}
    \centering
    \includegraphics[width=1.0\linewidth]{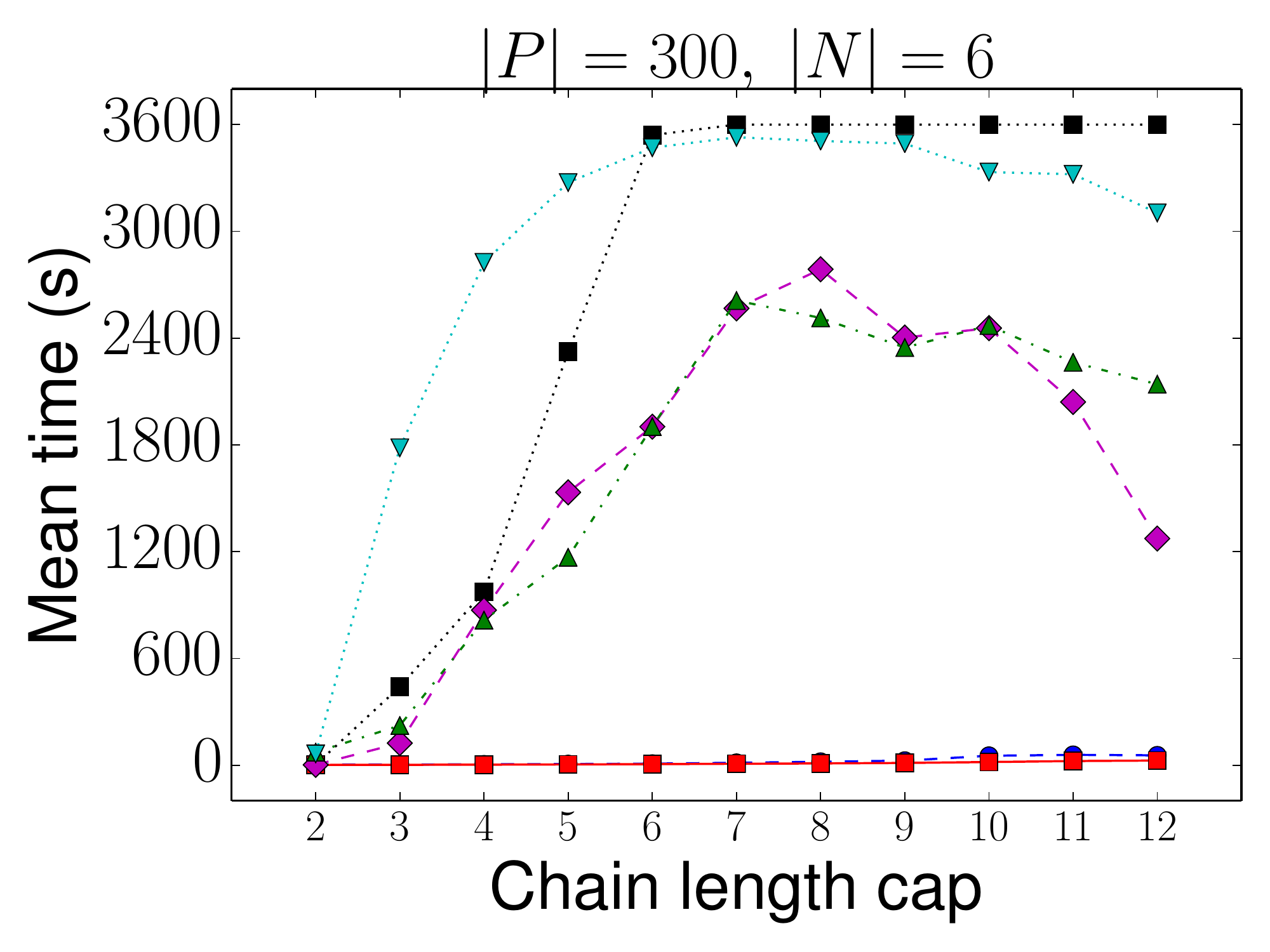}
  \end{minipage}%
  \hspace*{-0.1cm}
  \begin{minipage}{0.33\linewidth}
    \centering
    \includegraphics[width=1.0\linewidth]{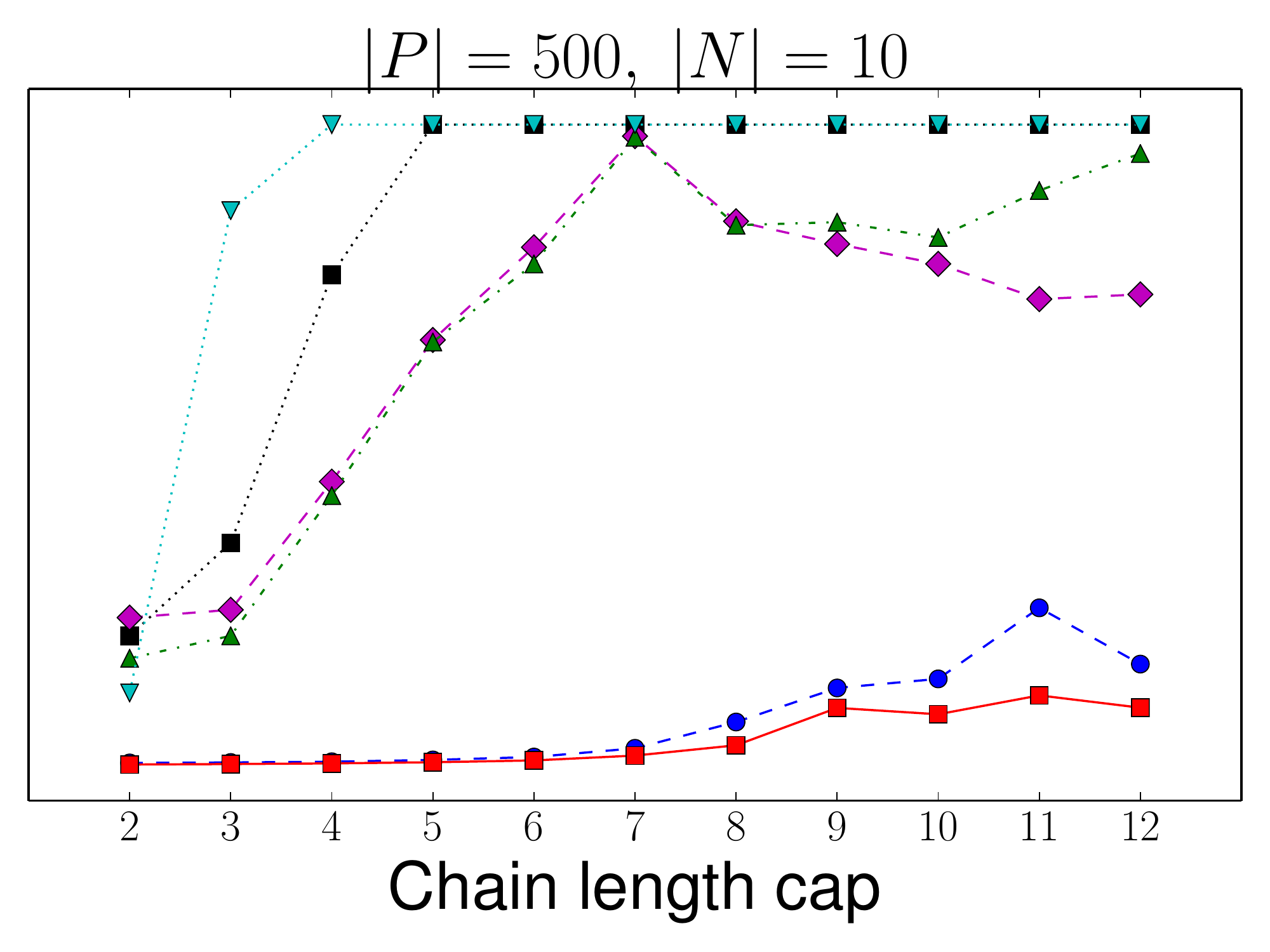}
  \end{minipage}%
  \hspace*{-0.1cm}
  \begin{minipage}{0.33\linewidth}
    \centering
    \includegraphics[width=1.0\linewidth]{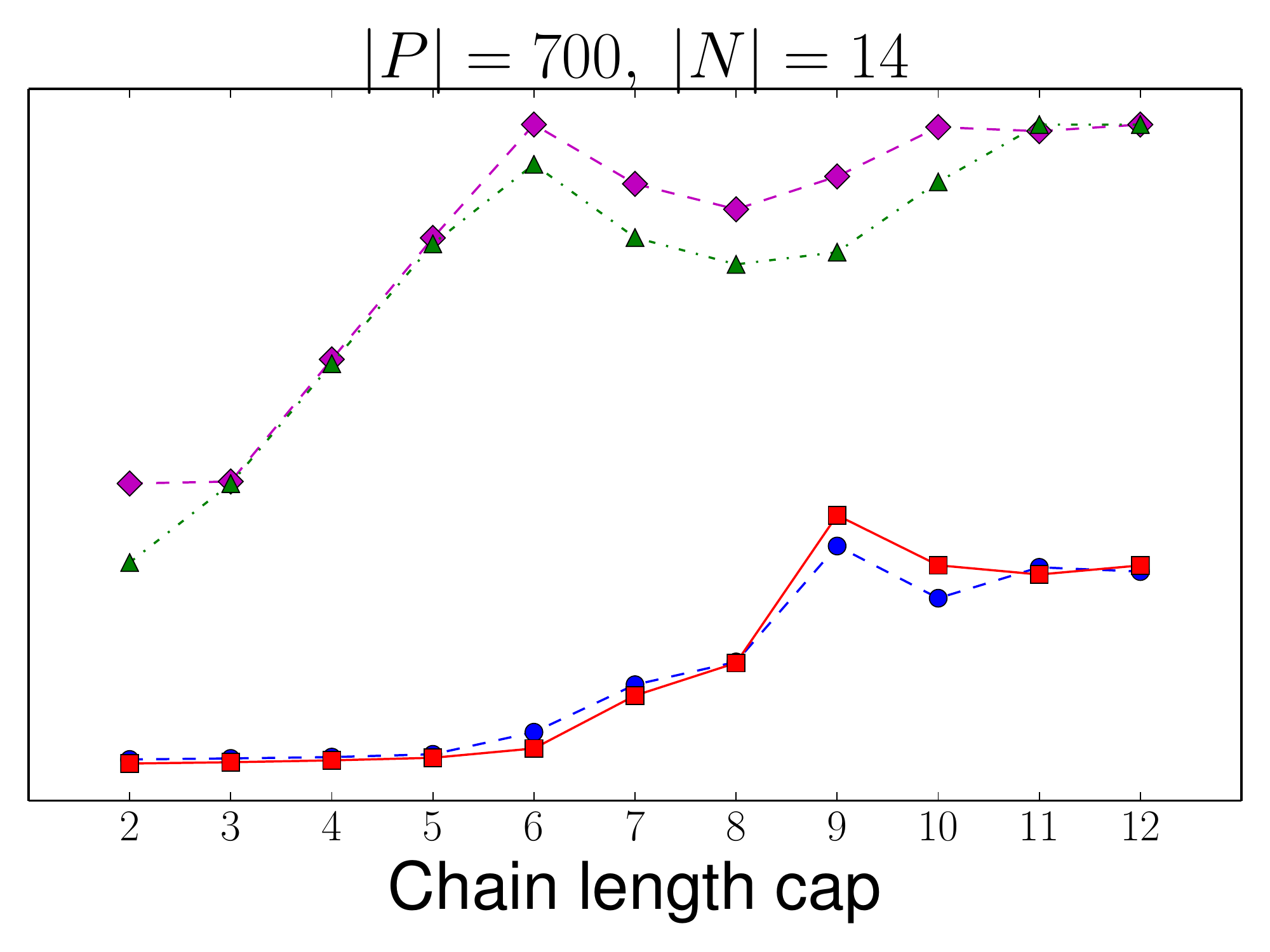}
  \end{minipage}%

  \begin{minipage}{0.33\linewidth}
    \centering
    \includegraphics[width=1.0\linewidth]{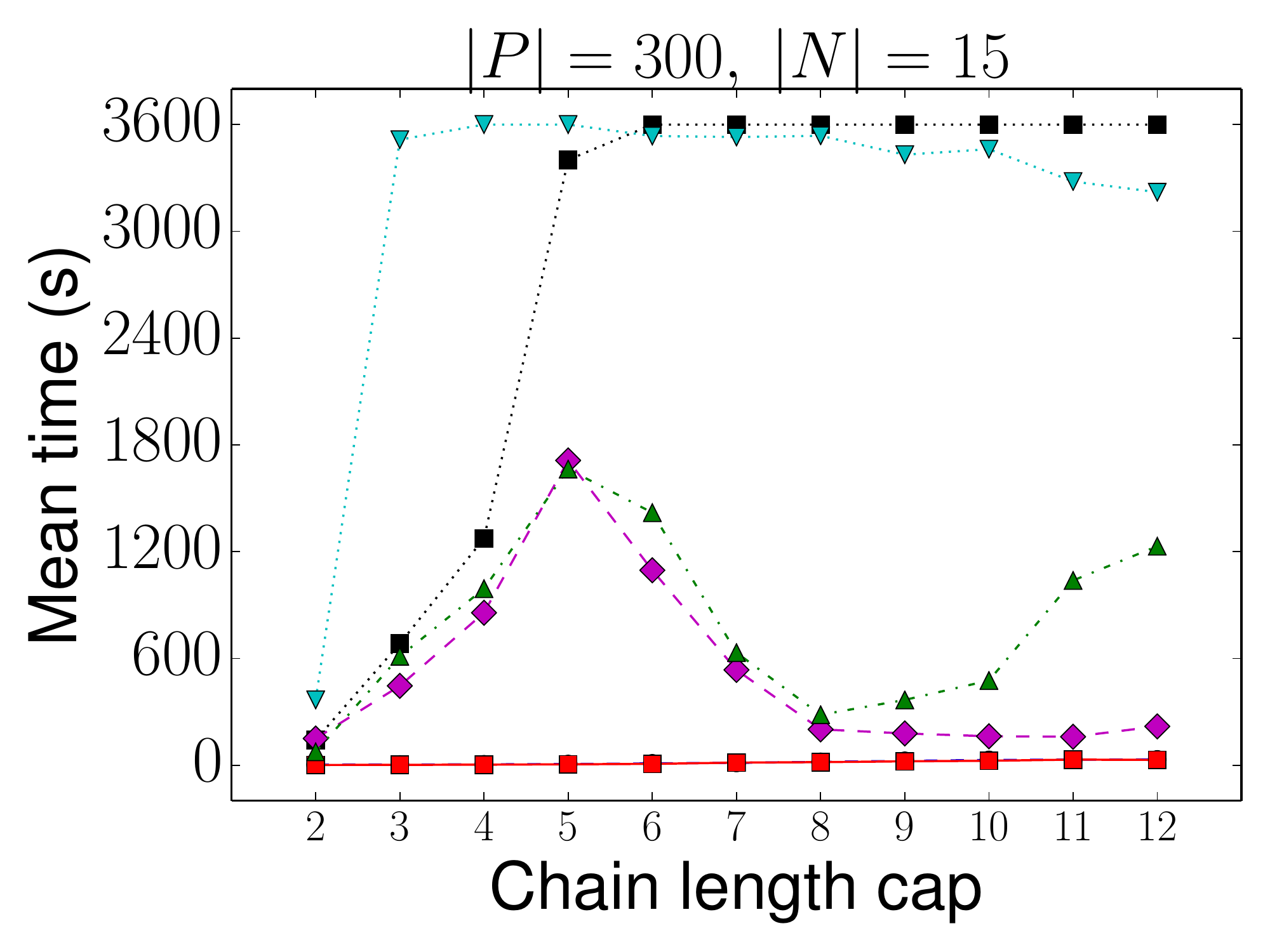}
  \end{minipage}%
  \hspace*{-0.1cm}
  \begin{minipage}{0.33\linewidth}
    \centering
    \includegraphics[width=1.0\linewidth]{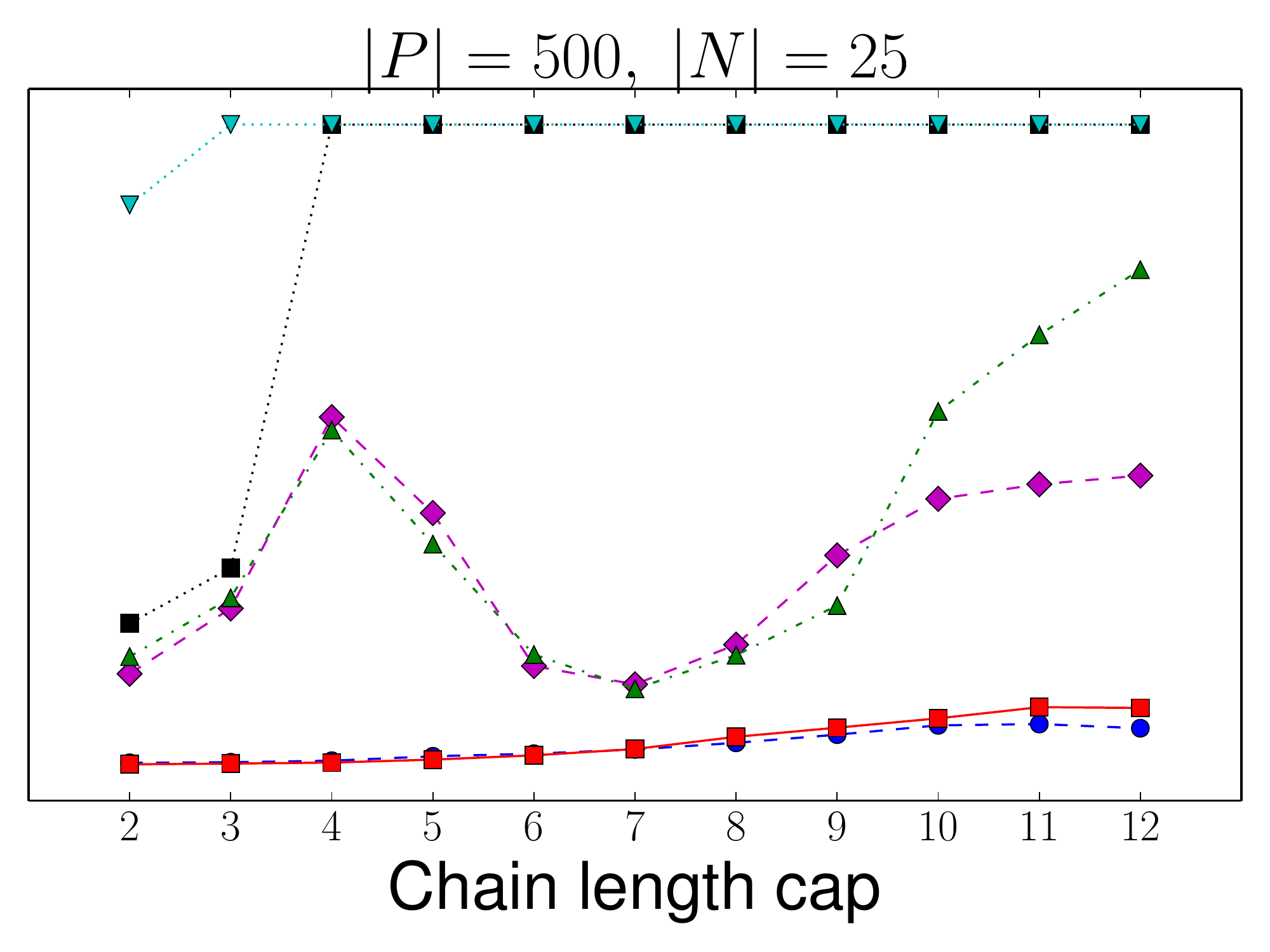}
  \end{minipage}%
  \hspace*{-0.1cm}
  \begin{minipage}{0.33\linewidth}
    \centering
    \includegraphics[width=1.0\linewidth]{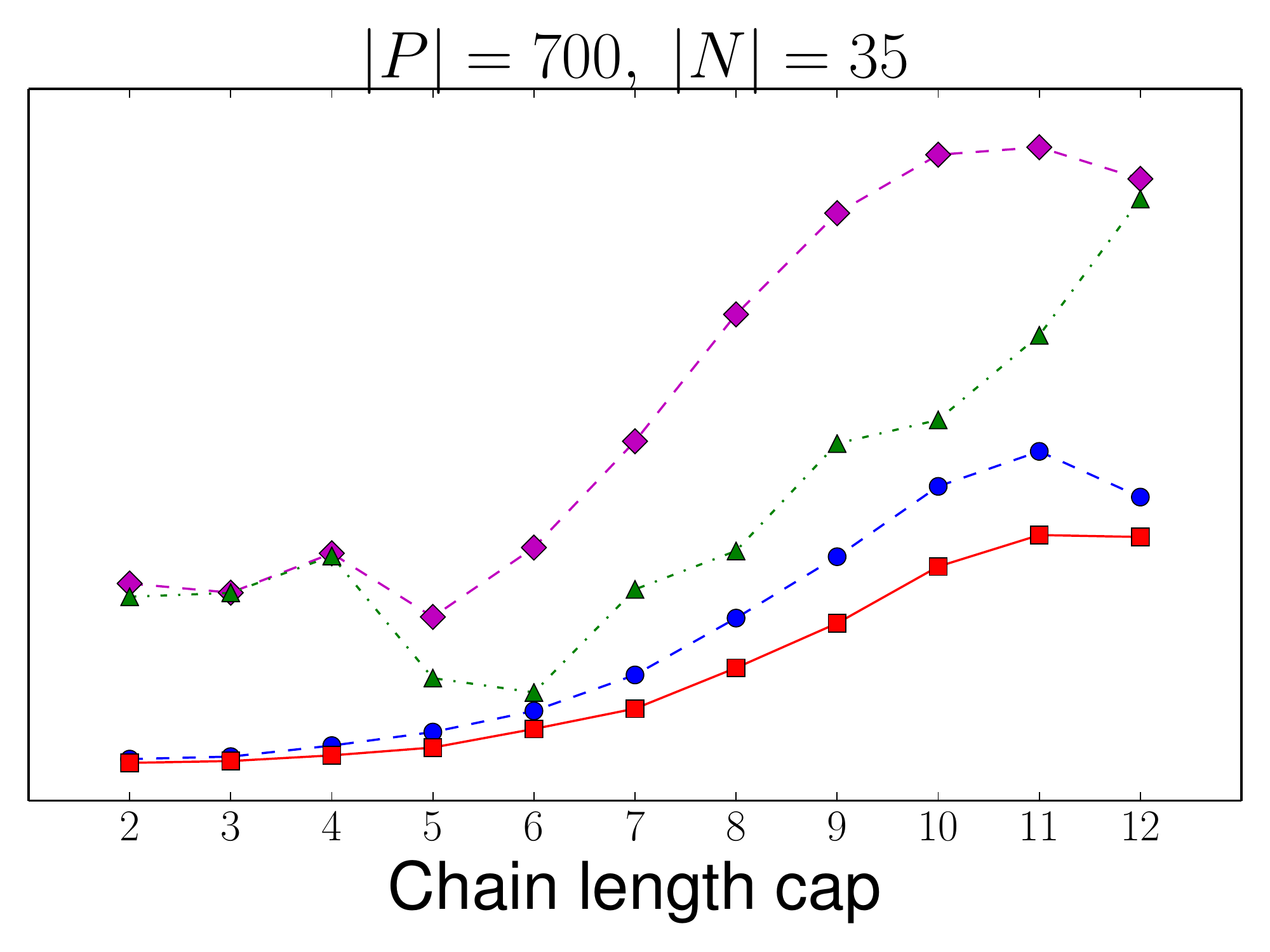}
  \end{minipage}%

  \begin{minipage}{0.33\linewidth}
    \centering
    \includegraphics[width=1.0\linewidth]{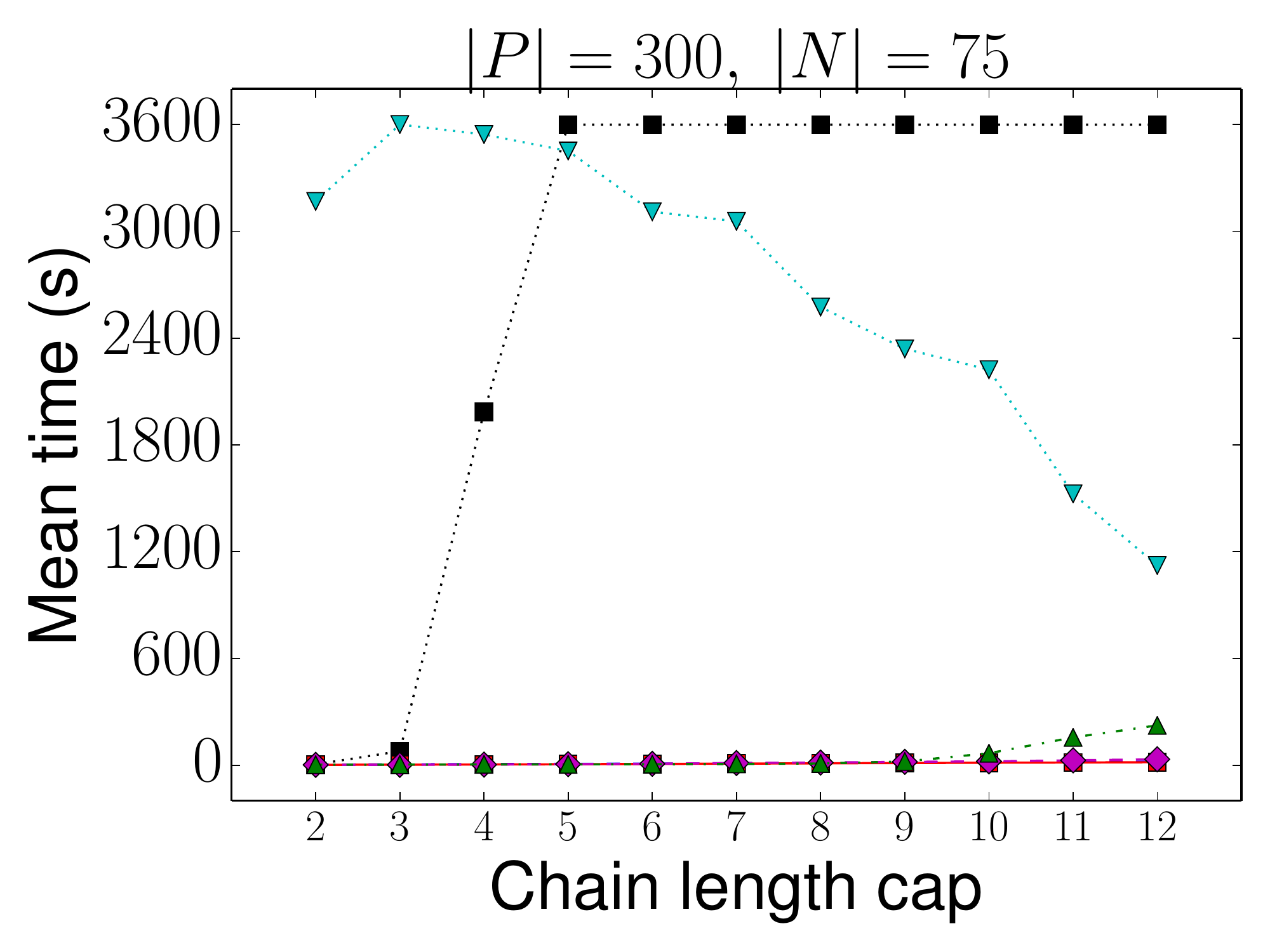}
  \end{minipage}%
  \hspace*{-0.1cm}
  \begin{minipage}{0.33\linewidth}
    \centering
    \includegraphics[width=1.0\linewidth]{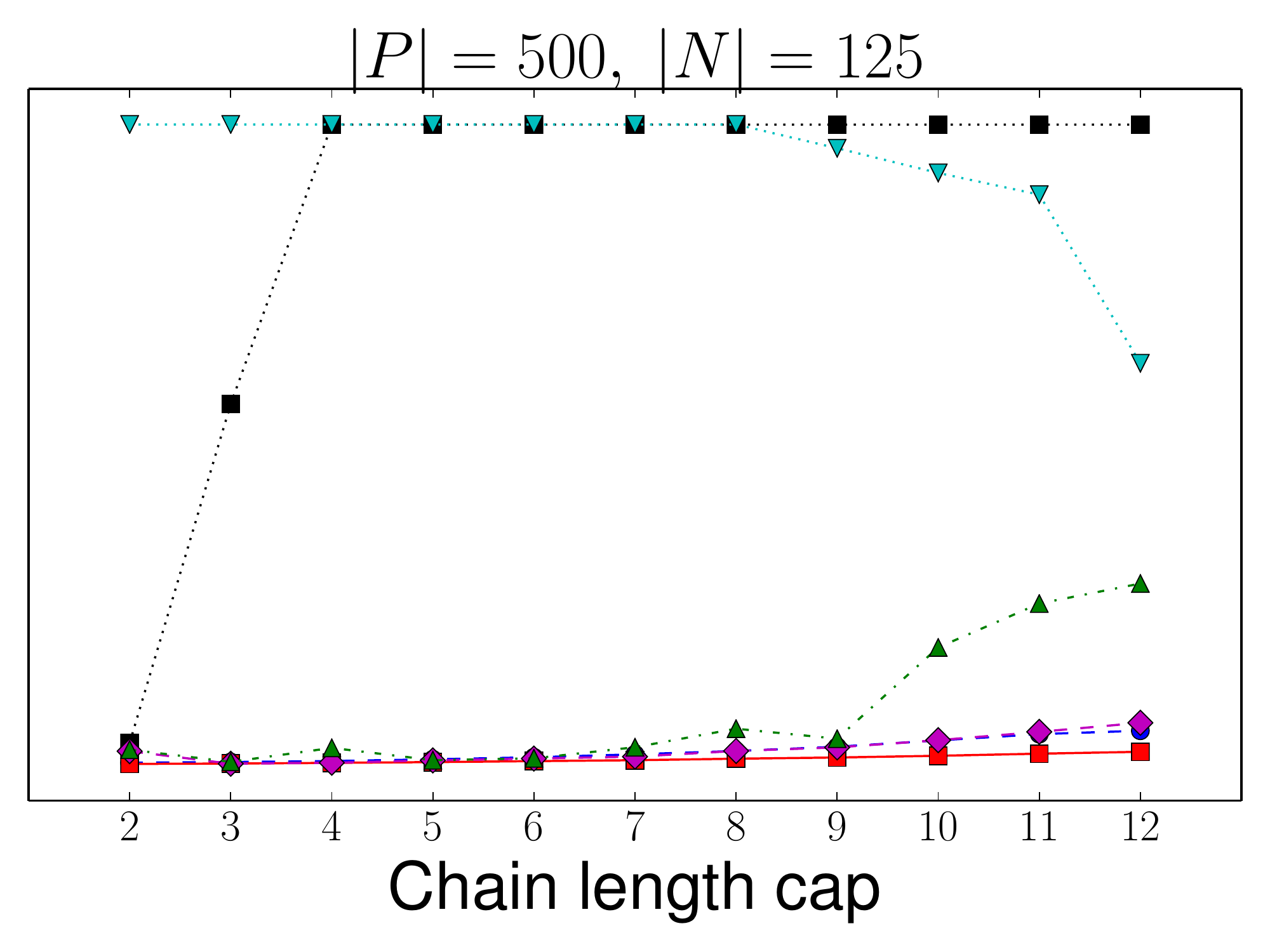}
  \end{minipage}%
  \hspace*{-0.1cm}
  \begin{minipage}{0.33\linewidth}
    \centering
    \includegraphics[width=1.0\linewidth]{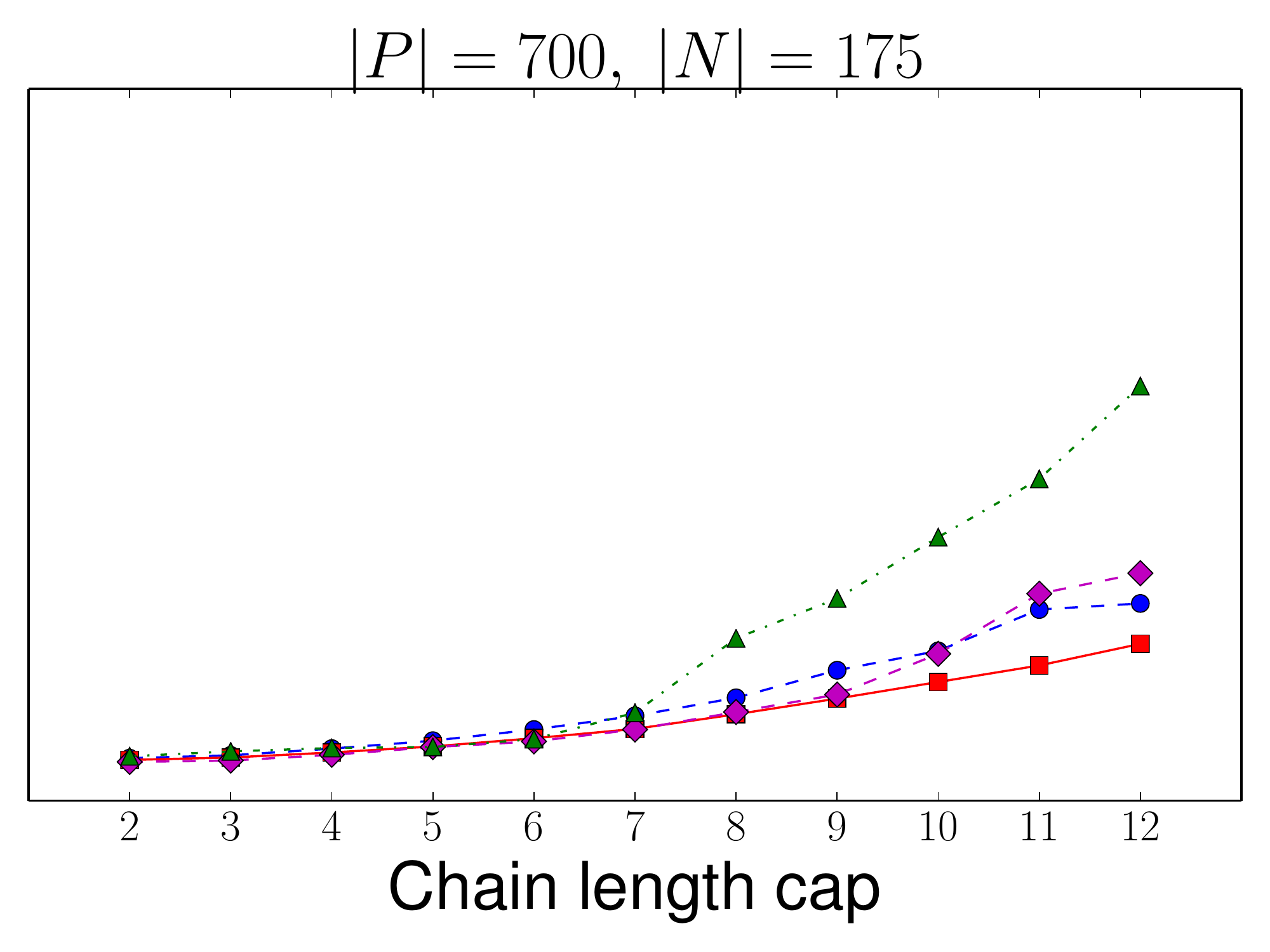}
  \end{minipage}%

  \caption{Mean run time as the number of patient-donor pairs $|P| \in \{300,500,700\}$ increases (left to right), as the percentage of NDDs in the pool increases $|N| = \{1\%, 2\%, 5\%, 25\%\} \text{ of } |P|$ (top to bottom), for varying finite chain caps.}
  \label{fig:increasing-v-increasing-a}
\end{figure*}

In addition to their increased scalability, we note two additional benefits of the PICEF and HPIEF models proposed in this paper: reduced variance in run time, and relative ease of implementation when compared to other state-of-the-art solution techniques.  In both the real and simulated experimental results, we find that the run time of both the PICEF and HPIEF formulations is substantially less variable than the branch-and-price-based and constraint-generation-based IP solvers.  While the underlying problem being solved is NP-hard, and thus will always present worst-case instances that take substantially longer than is typical to solve, the increased predictability of the run time of these models relative to other state-of-the-art solutions---including those that are presently fielded---is attractive.  Second, we note that \emph{significant} engineering effort is involved in the creation of custom branch-and-price and constraint-generation-based codes, while both PICEF and HPIEF are implemented with relative ease, relying on only a single call to a black box IP solver.

\section{Failure-aware kidney exchange}\label{sec:failure}
Real-world exchanges all suffer to varying degrees from ``last-minute'' failures, where an algorithmic match or set of matches fails to move to transplantation.  This can occur for a variety of reasons, including more extensive medical testing performed before a surgery, a patient or donor becoming too sick to participate, or a patient receiving an organ from another exchange or from the deceased donor waiting list.

To address these post-match arc failures,~\citeN{Dickerson13:Failure} augments the standard model of kidney exchange to include
a success probability $p$ for each arc in the graph.  They show how to solve this model using branch and price, where the pricing problem is solved in time exponential in the chain and cycle cap.  Prior compact formulations---and, indeed, prior ``edge formulations'' like those due to~\citeN{Abraham07:Clearing},~\citeN{Constantino13:New}, and~\citeN{Anderson15:Finding}---are not expressive enough to allow for generalization to this model.  Intuitively, while a single arc failure prevents an entire cycle from executing, chains are capable of incremental execution, yielding utility from the NDD to the first arc failure.  Thus, the expected utility gained from an arc in a chain is dependent on where in the chain that arc is located, which is not expressed in those models.

\subsection{PICEF for failure-aware matching}
With only minor modification, PICEF allows for implementation of failure-aware kidney exchange, under the restriction that each arc is assumed to succeed with equal probability $p$.  While this assumption of equal probabilities is likely not true in practice,~\citeN{Dickerson13:Failure} motivate why a fielded implementation of this model would potentially choose to equalize all failure probabilities: namely, so that already-sick patients---who will likely have higher failure rates---are not further marginalized by this model.  Thus, given a single success probability $p$, we can
\iftoggle{isFullVersion}{%
adjust the PICEF objective function to return the maximum expected weight matching as follows: 

\begin{subequations}
  {\small
\begin{align}
\max \qquad \sum_{(i,j)\in A} \sum_{k\in \K'(i,j)}  p^k w_{ij} y_{ijk} 
        + \sum_{c \in \cycles} p^{|c|} w_c z_c & \label{eq:picef-failure-obj}
\end{align}
  }
\end{subequations}

Objective~(\ref{eq:picef-failure-obj})
}{%
  simply rewrite the PICEF objective function as
  $\sum_{(i,j)\in A} \sum_{k\in \K'(i,j)}  p^k w_{ij} y_{ijk} + \sum_{c \in \cycles} p^{|c|} w_c z_c$,
  which returns the maximum expected weight matching.

  This objective
}%
is split into two parts: the utility received from arcs in chains, and the utility received from cycles.  For the latter, a cycle $c$ of size $|c|$ has probability $p^{|c|}$ of executing; otherwise, it yields zero utility.  For the former, if an arc is used at position $k$ in a chain, then it yields a $p^k$ fraction of its original weight---that is, the probability that the underlying chain will execute at least through its first $k$ arcs. 

\subsection{Failure-aware polynomial pricing for cycles}\label{ssec:failure-pricing}

The initial failure-aware branch-and-price work by~\citeN{Dickerson13:Failure} generalized the pricing strategy of~\citeN{Abraham07:Clearing}, and thus suffered from a pricing problem that ran in time exponential in cycle and chain cap. \citeN{Glorie14:Kidney} and~\citeN{Plaut16:Fast} discussed polynomial pricing algorithms for cycles---but not chains~\cite{Plaut16:Hardness}---in the \emph{deterministic} case. Using the algorithm of~\citeN{Plaut16:Fast} as a subroutine, we present an algorithm which solves the failure-aware, or \emph{discounted}, pricing problem for cycles in polynomial time, under the restriction that all arcs have equal success probability $p$.

In the deterministic setting, the price of a cycle $c$ is $\sum_{(i,j) \in c} w_{ij} - \sum_{j\in c}\delta_j$, where $w_{ij}$ is the weight of arc $(i,j)$, and $\delta_j$ is the dual value of vertex $j$ in the LP. \citeN{Glorie14:Kidney} show how the arc weights and dual values can be collapsed into just arc weights, and reduce the deterministic pricing problem to finding negative-weight cycles of length at most $K$ in a directed graph. In this section, we use ``length'' to denote the number of vertices in a cycle, not its weight.

The discounted price of a cycle is $p^{|c|}\sum_{(i, j) \in c} w_{ij} - \sum_{j\in c}\delta_j$. Since the utility of an arc now depends on what cycle it ends up in, we cannot collapse arc weights and dual values without knowing the length of the cycle containing it.

With this motivation, we augment the algorithm to run $O(K)$ iterations for each source vertex: one for each possible final cycle length. On each iteration, we know exactly how much arc weights will be worth in the final cycle, so we can reduce the discounted pricing problem to the deterministic pricing problem.

Pseudocode for the failure-aware cycle pricing algorithm is given by \textsc{GetDiscountedPositivePriceCycles}. Let $w$ and $\delta$ be the arc weights and dual values respectively in the original graph. The function \textsc{GetNegativeCycles} is the deterministic pricing algorithm due to~\citeN{Plaut16:Fast} which returns at least one negative cycle of length at most $K$, or shows that none exist.

The algorithm of~\citeN{Plaut16:Fast} has complexity $O(|V||A|K^2)$. Considering all $K-1$ possible cycle lengths brings the complexity of our algorithm to $O(|V||A|K^3)$.

\begin{algorithm}[tb]
\centering

\begin{algorithmic}[1]
  \Function{GetDiscountedPositivePriceCycles}{$D=(V,A), K, p, w, \delta$}
  \State ${\cal C} \gets \emptyset$ 
  \ForAll{$k = 2...K$} \Comment{Consider all possible cycle lengths}
    \State $w_k(i,j) \gets \delta_j - p^k w_{ij} \ \ \forall (i,j)\in A$ \Comment{Reduction of~\citeN{Glorie14:Kidney}} 
  \State ${\cal C} \gets {\cal C} \cup \Call{GetNegativeCycles}{D, k, w_k}$
  \EndFor  
  \Return ${\cal C}$
\EndFunction

\end{algorithmic}
\caption{Polynomial-time failure-aware pricing for cycles.}
\label{alg:failure-pricing}
\end{algorithm}
 
\begin{restatable}{theorem}{thmfailureaware}%
\label{thm:failure-aware}
If there is a discounted positive price cycle in the graph, Algorithm~\ref{alg:failure-pricing} will return at least one discounted positive price cycle.
\end{restatable}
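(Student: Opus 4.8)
The plan is to reduce each iteration of the loop to an instance of the \emph{deterministic} pricing problem, using the following weight identity. Fix $k \in \{2,\dots,K\}$ and let $c$ be any cycle of length at most $k$. Since each vertex of $c$ is the head of exactly one arc of $c$, the total weight of $c$ under the reduced weights $w_k$ is
\[
  \sum_{(i,j)\in c} w_k(i,j) \;=\; \sum_{(i,j)\in c}\bigl(\delta_j - p^k w_{ij}\bigr) \;=\; \sum_{j\in c}\delta_j \;-\; p^k\!\!\sum_{(i,j)\in c}\! w_{ij}.
\]
In particular, when $|c| = k$ this equals $-\bigl(p^{|c|}\sum_{(i,j)\in c} w_{ij} - \sum_{j\in c}\delta_j\bigr)$, i.e.\ the negation of the discounted price of $c$; hence a length-$k$ cycle has strictly positive discounted price if and only if it has strictly negative weight under $w_k$. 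This is the failure-aware analogue of the Glorie et al.\ reduction, specialised to a fixed target length.

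Next I would use this identity to prove that the loop is complete. Suppose $D$ contains a discounted positive price cycle, and let $c^\star$ be one, with length $k^\star = |c^\star| \in \{2,\dots,K\}$ (recall $D$ has no loops, so every cycle has length at least $2$). Consider the iteration with $k = k^\star$. By the identity, $c^\star$ has strictly negative weight under $w_{k^\star}$ and length $k^\star \le K$, so when \textsc{GetNegativeCycles} is invoked with digraph $D$, length bound $k^\star$, and weights $w_{k^\star}$, the completeness guarantee of the algorithm of Plaut et al.\ (which returns at least one negative-weight cycle of length at most the supplied bound whenever such a cycle exists) ensures it returns some cycle $c'$ with $|c'| \le k^\star$ and $\sum_{(i,j)\in c'} w_{k^\star}(i,j) < 0$. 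Thus $\mathcal{C}$ is nonempty on termination; it remains to check that $c'$ really is a discounted positive price cycle.

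I would then verify the positivity of the discounted price of $c'$. From $\sum_{(i,j)\in c'} w_{k^\star}(i,j) < 0$ and the identity above, $p^{k^\star}\sum_{(i,j)\in c'} w_{ij} > \sum_{j\in c'}\delta_j$. Every arc weight is nonnegative, so $\sum_{(i,j)\in c'} w_{ij} \ge 0$; and since $p \in [0,1]$ with $|c'| \le k^\star$ we have $p^{|c'|} \ge p^{k^\star}$. Multiplying gives $p^{|c'|}\sum_{(i,j)\in c'} w_{ij} \ge p^{k^\star}\sum_{(i,j)\in c'} w_{ij} > \sum_{j\in c'}\delta_j$, i.e.\ the discounted price $p^{|c'|}\sum_{(i,j)\in c'} w_{ij} - \sum_{j\in c'}\delta_j$ of $c'$ is strictly positive. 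Hence $c' \in \mathcal{C}$ is a discounted positive price cycle, which proves the theorem. (The same computation in fact shows that every cycle the algorithm adds to $\mathcal{C}$ has positive discounted price, so no spurious cycles are returned.)

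The main obstacle is precisely the gap between the length bound $k$ used in iteration $k$ and the actual length of a cycle handed back by \textsc{GetNegativeCycles}: the clean ``$w_k$-weight $=$ negated discounted price'' correspondence holds only for cycles of length exactly $k$, so a strictly shorter returned cycle has to be argued separately — and that is exactly where nonnegativity of the arc weights together with $p \le 1$ enter, through the monotonicity $p^{|c'|} \ge p^{k^\star}$. A secondary point that must be nailed down is the exact contract of the subroutine \textsc{GetNegativeCycles} inherited from Plaut et al.: namely that its numeric argument is an upper bound on the length of the cycles it searches for and that it is complete for that bound, since the argument relies on invoking it with length cap exactly $k^\star = |c^\star|$.
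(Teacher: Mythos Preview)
Your proof is correct and follows essentially the same approach as the paper: fix a discounted positive price cycle $c^\star$ of length $k^\star$, observe it is $w_{k^\star}$-negative so \textsc{GetNegativeCycles} returns some $c'$ of length at most $k^\star$, and then use $p^{|c'|} \ge p^{k^\star}$ together with nonnegativity of the arc weights to conclude $c'$ has positive discounted price. Your parenthetical remark on soundness (every returned cycle is genuinely positive-price) and your explicit identification of the length-mismatch subtlety are nice additions beyond what the paper spells out.
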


\section{Conclusions \& Future Research}\label{sec:conclusions}

In this paper, we addressed the tractable clearing of kidney exchanges with short
cycles and long, but bounded, chains.  This is motivated by kidney exchange practice, where chains are often long but bounded in length due to post-match arc failure.  We introduced three IP formulations, two of which are compact, and favorably compared their LPRs to a state-of-the-art formulation with a tight relaxation.  Then, on real data from the UNOS US nationwide exchange and the NLDKSS United Kingdom nationwide exchange, as well as on generated data, we showed that our new models outperform all other solvers on
realistically-parameterized kidney exchange problems--often dramatically.  We also explored practical extensions of our models, such as the use of branch and price for additional scalability, and an extension to the failure-aware kidney exchange case that more accurately mimics reality.

Beyond the immediate importance of more scalable static kidney exchange solvers for use in fielded exchanges, solvers like the ones presented in this paper are of practical importance in more advanced---and as yet unfielded---approaches to clearing kidney exchange.  In reality, patients and donors arrive to and depart from the exchange dynamically over time~\cite{Unver10:Dynamic}.  Approaches to clearing \emph{dynamic kidney exchange} often rely on solving the static problem many times~\cite{Awasthi09:Online,Dickerson12:Dynamic,Anderson14:Stochastic,Dickerson15:FutureMatch,Glorie15:Robust}; thus, faster static solvers result in better dynamic exchange solutions.  Use of the techniques in this paper---or adaptations thereof---as subsolvers is of interest.

From a theoretical point of view, extending the comparison of LPRs to a complete ordering of all LPRs amongst models of kidney exchange---especially for different parameterizations of the underlying model, like the inclusion of chains or arc failures---would give insight as to which solver is best suited for an exchange running under a specific set of business constraints.

\vspace{-4mm}
\begin{acks}
\vspace{-2mm}
The authors would like to thank Ross Anderson, Kristiaan Glorie, Xenia Klimentova, Nicolau Santos, and Ana Viana for valuable discussions regarding this work and for making available their kidney exchange software for the purposes of conducting our experimental evaluation.
\end{acks}
\vspace{-3mm}

\iftoggle{isFullVersion}{}{%
  \let\oldbibliography\thebibliography
  \renewcommand{\thebibliography}[1]{\oldbibliography{#1}
    \setlength{\itemsep}{-0.5pt}} 
}

\iftoggle{isFullVersion}{%
\bibliographystyle{ACM-Reference-Format-Journals}
\bibliography{dairefs}
}{%
\input{hierarchy.ec2016.bbl_FOR_CAMERAREADY}
}


\iftoggle{isFullVersion}{%

\elecappendix
\medskip

\section{Additional Proofs for the PIEF model}

We now provide additional theoretical results pertaining to the position-indexed edge formulation (PIEF) model, and proofs to theoretical results stated in the main paper. 

\subsection{Validity of the PIEF model}
\label{sec:PIEFcorrect}
\begin{lemma}\label{thm:pief-validity-a}
Any assignment of values to the $x_{ijk}^l$ that respects the
PIEF constraints yields a vertex-disjoint set of cycles of length no greater
than $K$.
\end{lemma}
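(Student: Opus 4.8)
The plan is to analyze the selected arcs one graph copy at a time and then glue the copies together using the global vertex-capacity constraint~\eqref{eq:pief_a}. Fix $l \in V$ and let $S^l$ be the set of arcs with $x_{ijk}^l = 1$. The first thing I would record are the elementary consequences of the definition of $\K(i,j,l)$: in $D^l$ the only arcs that may carry position $1$ are those leaving $l$, the only arcs that may carry position $K$ are those entering $l$, and every other selected arc of $D^l$ has both endpoints strictly greater than $l$ with a position in $\{2,\dots,K-1\}$. Next I would establish two degree bounds. From~\eqref{eq:pief_a}, every vertex $i$ receives at most one incoming selected arc \emph{in total} (over all copies and all positions), hence in particular at most one within $D^l$. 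Summing the flow-conservation equalities~\eqref{eq:pief_b} over $k$, and using that a vertex $i>l$ has no outgoing position-$1$ arc and no incoming position-$K$ arc, one gets that such an $i$ also has at most one outgoing selected arc in $D^l$, and that if it has one incoming arc (at some position $k_0\le K-1$) then its unique outgoing arc is at position $k_0+1$.

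The core of the argument is a forward/backward tracing. Suppose $(a,b)\in S^l$ sits at position $k$. If $b=l$ we stop; otherwise $b>l$, so $k\le K-1$ (a position-$K$ arc must enter $l$), and~\eqref{eq:pief_b} at $b$ for index $k$, combined with the capacity bound forcing the incoming count at $b$ to be exactly one at position $k$ and zero elsewhere, yields a \emph{unique} selected arc leaving $b$ at position $k+1$. Iterating, the positions strictly increase and are bounded by $K$, so the walk must reach $l$ (when the running position would otherwise exceed $K$, the current arc is forced to position $K$ and hence into $l$); symmetrically, tracing predecessors the positions strictly decrease to $1$, so the walk emanates from $l$ along a position-$1$ arc. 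Hence every arc of $S^l$ lies on a simple cycle through $l$ of length at most $K$ — simplicity because a repeated vertex would receive two distinct incoming selected arcs, contradicting~\eqref{eq:pief_a}. Finally, $l$ has at most one outgoing position-$1$ arc: two such arcs would initiate two walks that each close up at $l$ along interiors sharing no vertex (a shared interior vertex again gives a doubled incoming arc), so $l$ itself would receive two incoming selected arcs, again contradicting~\eqref{eq:pief_a}. Therefore $S^l$ is either empty or exactly one cycle through $l$ of length at most $K$.

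To conclude, I would glue the copies. If a vertex $v$ lay on the cycle extracted from $D^l$ and also on the one extracted from $D^{l'}$ with $l\ne l'$, it would have an incoming selected arc in each copy, hence two incoming selected arcs overall, contradicting~\eqref{eq:pief_a}. So the cycles arising from distinct copies are vertex-disjoint, and the set of all selected arcs $\bigcup_{l} S^l$ — which by the previous paragraph is precisely this family of cycles — is a vertex-disjoint packing of cycles of length at most $K$, as claimed.

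I expect the main obstacle to be making the tracing step fully rigorous: one must interleave the \emph{local} equalities~\eqref{eq:pief_b} with the \emph{global} inequality~\eqref{eq:pief_a} at every step to guarantee simultaneously that the walk cannot branch, cannot terminate prematurely, cannot revisit a vertex, and must close up at $l$ within $K$ arcs. The boundary positions $1$ and $K$ — where~\eqref{eq:pief_b} provides no constraint at $l$ — and the uniqueness of the position-$1$ arc out of $l$ are the delicate corners that need the separate, careful handling sketched above.
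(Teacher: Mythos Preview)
Your proof is correct and follows essentially the same strategy as the paper's: analyze each graph copy $D^l$ separately, use flow conservation~\eqref{eq:pief_b} to trace the selected arcs into a single cycle through $l$ of length at most $K$, and invoke the capacity constraint~\eqref{eq:pief_a} for uniqueness within a copy and vertex-disjointness across copies. The only organizational difference is that the paper starts from the maximum occupied position $k_{\max}$ and traces backward (which yields uniqueness of the cycle directly, since any edge at $k_{\max}$ must point to $l$), whereas you trace forward and backward from an arbitrary selected arc and then argue separately that $l$ has at most one outgoing position-$1$ arc.
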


\begin{proof}
We show this by demonstrating that in each graph copy, the set of selected
edges is either empty, or composes a single cycle of length no greater than
$K$.

Let $l \in P$ be given such that at least one edge is selected in graph copy
$D^l$, and let $k_{\text{max}}$ be the highest position $k$ such that
$x_{ijk}^l=1$ for some $i,j$. Any selected edge $(i,j)$ at position
$k_{\text{max}}$ in graph copy $D^l$ must point to $l$, as otherwise the flow
conservation constraint (\ref{eq:pief_b}) would be violated at vertex $j$.
Furthermore, there must be no more than one edge selected at position
$k_{\text{max}}$ in graph copy $D^l$, as otherwise the capacity constraint
(\ref{eq:pief_b}) for vertex $l$ would be violated.

The flow conservation constraints (\ref{eq:pief_a}) ensure that we can follow a
path backwards from the selected edge at position $k_{\text{max}}$ to a selected
edge at position 1, and also that at most one edge is selected at each position.
Since the edge at position 1 must start at vertex $l$ by the construction of $\K(i,j,l)$, we have
shown that graph copy $D^l$ contains a selected cycle beginning and ending
at $l$, and that this graph copy does not contain any other selected edges.

Constraint (\ref{eq:pief_a}) ensures that the vertex-disjointness condition is satisfied.
\end{proof}

\begin{lemma}
For any vertex-disjoint set of cycles of length no greater than $K$, there is an assignment of values to the $x_{ijk}^l$ respecting the PIEF constraints.
\end{lemma}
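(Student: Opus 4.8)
The plan is to exhibit the required assignment explicitly, cycle by cycle, and then verify the three constraint families by direct inspection. Given a vertex-disjoint set $S$ of cycles, each of length at most $K$, take any $c \in S$, let $l$ be the lowest-numbered vertex occurring in $c$, and write $c$ in its canonical representation $c = (a_1,\dots,a_{|c|})$ in which $a_1$ leaves $l$; thus $a_t$ runs from $v_{t-1}$ to $v_t$, where $v_0 = v_{|c|} = l$ and $v_1,\dots,v_{|c|-1}$ are the remaining vertices of $c$. Set $x_{ijk}^l = 1$ whenever $a_k = (i,j)$ for this $c$ (with its associated $l$), and set every other variable to $0$. All values lie in $\{0,1\}$, so (\ref{eq:pief_c}) holds trivially, and it remains to check that every variable set to $1$ actually exists in the model and that (\ref{eq:pief_a}) and (\ref{eq:pief_b}) hold.

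First I would check well-definedness. Since every vertex of $c$ is $\geq l$, each arc of $c$ lies in $A^l$. The arc $a_1$ leaves $l$, so its position $1 \in \K(l,v_1,l) = \{1\}$; each intermediate arc $a_t$ with $2 \le t \le |c|-1$ has both endpoints $>l$ and satisfies $2 \le t \le |c|-1 \le K-1$, so $t \in \K(v_{t-1},v_t,l) = \{2,\dots,K-1\}$; and $a_{|c|}$ enters $l$ with $2 \le |c| \le K$, so $|c| \in \K(v_{|c|-1},l,l) = \{2,\dots,K\}$. Hence every variable assigned the value $1$ is present. For the capacity constraint (\ref{eq:pief_a}) at a vertex $i$: because the cycles of $S$ are vertex-disjoint, $i$ belongs to at most one $c \in S$ and is the lowest vertex of at most one cycle, so at most one graph copy $D^l$ carries nonzero variables targeting $i$; and inside that copy $c$ contains exactly one arc entering $i$ (a vertex has in-degree $1$ in a cycle). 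Thus the left-hand side of (\ref{eq:pief_a}) is $0$ or $1$.

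Finally, the flow conservation constraint (\ref{eq:pief_b}) for a triple $(l,i,k)$. If no cycle of $S$ has lowest vertex $l$, or if $i$ is not a vertex of the (unique) cycle $c$ whose lowest vertex is $l$, then all terms on both sides are $0$. Otherwise $i = v_t$ for a unique $t$ with $1 \le t \le |c|-1$; the only arc entering $i$ in copy $D^l$ is $a_t$ at position $t$, and the only arc leaving $i$ is $a_{t+1}$ at position $t+1$. One checks, exactly as in the well-definedness step, that $t \in \K(v_{t-1},v_t,l)$ and $t+1 \in \K(v_t,v_{t+1},l)$, so these arcs are indeed among the summed terms; consequently both sides of (\ref{eq:pief_b}) equal $1$ when $k = t$ and $0$ for every other $k \in \{1,\dots,K-1\}$. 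This verifies all of (\ref{eq:pief_a})--(\ref{eq:pief_c}), so the assignment is feasible.

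I expect the only mildly delicate point to be the boundary bookkeeping: making sure the position-$1$ arc out of $l$, the position-$|c|$ arc into $l$, and the index range $\{1,\dots,K-1\}$ in (\ref{eq:pief_b}) all line up with the case split in the definition of $\K(i,j,l)$, together with the edge cases $|c| = 2$ (no intermediate arcs) and $|c| = K$. There is no conceptual difficulty; the proof is essentially a direct translation of ``a cycle of length at most $K$ routed through its minimum vertex'' into the variables of the single graph copy $D^l$.
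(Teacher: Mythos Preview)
Your proposal is correct and follows exactly the natural construction the paper has in mind; the paper's own proof is the single line ``This assignment can be constructed trivially,'' and you have simply written out that triviality in full, including the boundary checks for positions $1$ and $|c|$ and the verification of the flow constraint at each $(l,i,k)$.
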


\begin{proof}
This assignment can be constructed trivially.
\end{proof}

\begin{theorem}
\label{thm:PIEFcorrect}
The PIEF model yields an optimal solution to the kidney exchange problem.
\end{theorem}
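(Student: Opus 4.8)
The plan is to lift the two lemmas just proved into a single weight-preserving correspondence between feasible integer points of the PIEF model and vertex-disjoint packings of cycles of length at most $K$, and then simply read off optimality. Since the instance has no NDDs, the kidney exchange problem here is exactly to find a maximum-weight vertex-disjoint set of cycles of length at most $K$. It therefore suffices to show that (i) every feasible $x$ yields such a packing of equal weight (Lemma~\ref{thm:pief-validity-a}, together with the weight check below), (ii) every such packing arises from some feasible $x$ (the second lemma above), and (iii) corresponding objects have equal weight; then the maximum of (\ref{eq:pief_obj}) over the PIEF feasible region equals the maximum packing weight, and the cycles read off an optimal $x$ form an optimal exchange.

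For the correspondence to be genuinely a bijection, I would pin down one bookkeeping point: each cycle $c$ is ``housed'' in exactly one graph copy, namely $D^l$ where $l$ is the lowest-numbered vertex of $c$. It cannot be selected in $D^{l'}$ for $l' > l$, since that would force every vertex of $c$ into $V^{l'}$, contradicting $l < l'$; and by Lemma~\ref{thm:pief-validity-a} the unique cycle that can be selected in any copy $D^{l'}$ is one through $l'$, which $c$ is not when $l' < l$. Within its home copy, the canonical arc-list representation of $c$ used to define arc positions---the one starting at the lowest vertex of $c$---is exactly what the position sets $\K(i,j,l)$ allow, since position $1$ is permitted only on an arc leaving $l$ and position $K$ only on an arc entering $l$. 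For the weight check (iii), objective (\ref{eq:pief_obj}) sums $w_{ij}$ over all selected $x_{ijk}^l$; under the correspondence each arc of each packed cycle $c$ is counted exactly once, as the position-$k$ arc of $c$ in the copy housing $c$, so the objective equals $\sum_c w_c$, the weight of the packing.

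I expect the main obstacle to be precisely this bookkeeping: ruling out that a cycle is representable in more than one graph copy, and confirming that ``start at the lowest-numbered vertex'' is forced by the definition of $\K(i,j,l)$ rather than being an arbitrary convention one must additionally justify. The structural content---that feasible assignments and cycle packings correspond in both directions---is already supplied by the two lemmas, and once the correspondence is nailed down, equality of the objectives is a one-line accounting argument, from which optimality of the IP optimum is immediate.
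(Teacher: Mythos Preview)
Your proposal is correct and follows the same route as the paper: the theorem is stated immediately after the two lemmas and is treated as their direct consequence, with no separate proof given. Your write-up simply makes explicit the weight-preservation step and the uniqueness of the graph copy housing each cycle, both of which the paper leaves implicit; note, though, that bijectivity is more than you need---the two lemmas already give that the IP optimum is both at most and at least the optimal packing weight, so surjectivity in each direction (plus your weight check) suffices.
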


\subsection{Proofs for the LPR of PIEF}
\label{sec:PIEFLPRproofs}
The following lemma is used in the proof of Theorem~\ref{thm:pief-vs-cf}, and its proof
is included here for completeness.

\begin{lemma}\label{lem:closed-walk}
Let a sequence of arcs $W=(a_1,\dots,a_{|W|})$ in a directed graph $D$ be given, such that $W$ is a closed walk---that is, the target vertex of each arc is the source vertex of the following arc, and the sequence starts and ends at the same vertex.  (It is permitted for an arc to appear more than once in $W$.)  Let $X$ be the multiset
$\{a_1,\dots,a_{|W|}\}$. Then we can partition $X$ into $C=\{c_1, \dots, c_{|C|}\}$, where each $c_i \in C$ is a set of arcs that
form a cycle in $D$.
\end{lemma}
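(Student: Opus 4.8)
The plan is to induct on the length $|W|$ of the closed walk. Write its vertex sequence as $v_0,v_1,\dots,v_{|W|}$, so that $a_t=(v_{t-1},v_t)$ for each $t$ and $v_0=v_{|W|}$ since $W$ is closed. If $|W|=0$, take $C=\emptyset$; this is the base case. (Since $D$ has no loops, any nonempty closed walk has $|W|\ge 2$.)

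For the inductive step with $|W|\ge 2$, first suppose the vertices $v_0,v_1,\dots,v_{|W|-1}$ are pairwise distinct. Then the arcs $a_1,\dots,a_{|W|}$ are pairwise distinct as well (they have distinct source vertices), so $X$ is a genuine set of arcs forming a single cycle on $v_0,\dots,v_{|W|-1}$, and $C=\{X\}$ works. Otherwise, let $j$ be the least index for which $v_j\in\{v_0,\dots,v_{j-1}\}$, and let $i<j$ satisfy $v_i=v_j$. By minimality of $j$, the vertices $v_i,v_{i+1},\dots,v_{j-1}$ are pairwise distinct, so the arc set $c:=\{a_{i+1},\dots,a_j\}$ consists of $j-i\ge 2$ distinct arcs and forms a cycle in $D$ (traversing $v_i,v_{i+1},\dots,v_{j-1}$ and returning to $v_i=v_j$).

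Now excise this segment: the sequence $W':=(a_1,\dots,a_i,a_{j+1},\dots,a_{|W|})$ has vertex sequence $v_0,\dots,v_i(=v_j),v_{j+1},\dots,v_{|W|}=v_0$, so it is again a closed walk, of length $|W|-(j-i)<|W|$. Writing $X'$ for the arc multiset of $W'$, we have $X=X'\uplus c$ as multisets, so applying the inductive hypothesis to $W'$ and adjoining $c$ to the resulting partition gives the desired decomposition of $X$.

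I expect the only delicate point to be the multiset bookkeeping: an arc may appear several times in $W$, so $X$, $X'$, and the partition $C$ must all be handled as multisets (distinct members of $C$ may coincide as arc sets but are counted with multiplicity), and the recursion must remove exactly the chosen occurrences $a_{i+1},\dots,a_j$. Choosing $j$ to be the \emph{first} repeated vertex is precisely what guarantees the extracted segment visits pairwise-distinct vertices, hence is a genuine cycle rather than merely a shorter closed walk; the remaining verifications are routine.
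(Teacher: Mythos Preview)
Your proof is correct and takes essentially the same approach as the paper: both find a repeated vertex in the walk, extract the cycle on the segment between its two occurrences (which visits distinct vertices by minimality), delete that segment to obtain a shorter closed walk, and repeat. The only cosmetic difference is that you phrase this as induction on $|W|$ while the paper phrases it as an iterative algorithm; your treatment of the multiset bookkeeping is in fact more explicit than the paper's.
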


\begin{proof} The following algorithm can be used to construct the set $C$.
\begin{enumerate}
  \item Let $C=\{\}$.
  \item\label{step:label_vertices}For each $i \in \{1,\dots,|W|\}$,
    let $s_i$ be the source vertex of $a_i$.
  \item \label{step:choose}If the sequence $(s_i)_{1 \leq i \leq |W|}$ contains no repeated
    vertices, then $W$ must be a cycle; go to step~\ref{step:one_cycle}.
  \item \label{step:sub_cycle} Choose $i,j \in \{1,\dots,|W|\}$ with $i < j$,
    such that $s_i = s_j$ and all the $s_k$ are
    distinct for $i \leq k < j$. The arcs $(a_k)_{i \leq k < j}$ form a cycle;
    remove these from $W$ and add the set of removed arcs to $C$.
    (Observe that $W$ remains a closed walk). Re-index the arcs in the new,
    shorter $W$ as $1,\dots,|W|$.
    Return to step~\ref{step:label_vertices}.
  \item \label{step:one_cycle}Add $\{a:a\text{ appears in } W\}$ to $C$ and terminate.
\end{enumerate}
\vspace{-6.3mm}\hspace{79mm}
\end{proof}
\vspace{0.1mm}


\thmpiefvscf*

\begin{proof}
$\zCF{} \preceq \zPIEF{}$. Let $(z_c^*)_{c \in \cycles}$ be an optimal
solution to the LPR of the cycle formulation, with objective value
$\zCF{}$. We will construct a solution to the LPR of PIEF whose objective
value is also $\zCF{}$. We translate the $z_c^*$ into an assignment of
values to the $x_{ijk}^l$ in a natural way, as follows.  For each $c \in
\cycles$, let $l$ be the index of the
lowest-numbered vertex appearing in $c$.  Number the positions of arcs of $c$ in order as
$\{1,\dots,|c|\}$, beginning with the arc leaving $l$.

For each vertex $l \in V$, each arc $(i,j) \in A^l$, and each position $k \in
\K(i,j,l)$, let $\cycles(i,j,k,l)$ be the set of cycles in $\cycles$ whose
lowest-numbered vertex is $l$ and which contain $(i,j)$ at position $k$. Let

\[
x_{ijk}^l = \sum_{c \in \cycles(i,j,k,l)}
    z_c^*.
\]

This construction yields a solution which
satisfies the PIEF constraints and has objective value $\zCF{}$.

$\zPIEF{} \preceq \zCF{}$.  Let an optimal solution $(x_{ijk}^{l})$ to the LPR
of PIEF be given, with objective value $\zPIEF{}$.  Our strategy
is to begin by assigning zero to each cycle formulation variable $z_c (c \in
\cycles)$, and to make a series of decreases in PIEF variables and corresponding increases in cycle formulation variables, ending when all of the PIEF variables are set to zero.
We maintain three invariants after each such step.  First, the sum of the
PIEF and cycle formulation objective values remains $\zPIEF{}$.  Second, the
constraints of the relaxed PIEF are satisfied. Third, the following vertex
capacity constraint---which combines the capacity constraints from the
PIEF and the cycle formulation---is satisfied for each vertex.

\[
\sum_{l\in V} \sum_{i:(i,j)\in A^l} \sum_{k\in \K(i,j,l)} x_{ijk}^l +
   \sum_{c : j \in c} z_c \leq 1 \qquad j \in P \label{eq:cf_pief_capacity}
\]

If any of the PIEF variables takes a non-zero value, then we can select $i_1, i_2, l$ such that $i_1=l$ and $x_{i_1 i_2 1}^l > 0$. By the PIEF flow conservation constraints (\ref{eq:pief_b}), we can select a closed walk $W=((i_1, i_2), (i_2, i_3), \dots, (i_{k'-1},i_{k'}))$ of at most $K$ arcs in $D^l$ such that $x_{i_k i_{k+1} k}^l > 0$ for $1 \leq k < k'$, and such that $i_1=i_{k'}$. Let $x_{\min}$ be the smallest non-zero value taken by any of the $x_{i_k i_{k+1} k}^l$.

By Lemma~\ref{lem:closed-walk}, the set of arc positions $\{1, \dots, k'-1\}$ can be decomposed into a set of sets $C$, such that for each $c \in C$ we have that the arcs $\{a_k : k \in c\}$ can be arranged to form a cycle; we denote this as $\text{cyc}(c)$.  For each $c \in C$, we subtract $x_{\min}$ from $x_{i_k i_{k+1} k}^l$ for each $k \in c$, and we add $x_{\min}$ to $z_{\text{cyc}(c)}$.  This transformation strictly decreases the count
of $x_{ijk}^l$ variables that take a non-zero value. By repeatedly carrying out
this step, we will reach a point where all of the $x_{ijk}^l$ variables take
the value zero, and where the $z_c$ variables respect the cycle formulation
constraints and give objective value $\zPIEF{}$.\end{proof}


\section{Additional Proofs for the PICEF model}
We now provide additional theoretical results pertaining to the position-indexed chain-edge formulation (PICEF) model, and proofs to theoretical results stated in the main paper. 

\subsection{Validity of the PICEF model}
\label{sec:PICEFvalid}
\begin{lemma}\label{lemma:picef_chains}
Any assignment of values to the $y_{ijk}$ that respects constraints
(\ref{eq:picef_b}) and (\ref{eq:picef_c}) and such that

\begin{equation}\label{eq:chain_flow}
\sum_{j:(j,i)\in A} \sum_{k\in \K(j,i)} y_{jik} \leq 1
\end{equation}

for all $i \in P$, yields a vertex-disjoint set chains of length no greater than $L$.
\end{lemma}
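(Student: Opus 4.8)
The plan is to show that a $0/1$ vector $y$ respecting (\ref{eq:picef_b}), (\ref{eq:picef_c}) and (\ref{eq:chain_flow}) selects a subgraph of $D$ whose weakly connected components are exactly the chains we want. First I would note that each selected arc occupies a unique position: if $y_{ijk} = y_{ijk'} = 1$ with $k \neq k'$ then $j \in P$ (vertices in $N$ have no incoming arcs), so the instance of (\ref{eq:chain_flow}) at vertex $j$ has left-hand side at least $y_{ijk} + y_{ijk'} = 2 > 1$, a contradiction. Hence I may form $H = (V, \{(i,j) : y_{ijk} = 1 \text{ for some } k\})$ and speak of the position $\mathrm{pos}(i,j) \in \K'(i,j)$ of each arc of $H$.

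Next I would determine the degree structure of $H$. Every vertex has in-degree at most $1$ in $H$: this is (\ref{eq:chain_flow}) for vertices in $P$ and is vacuous for vertices in $N$. Every vertex also has out-degree at most $1$: for $i \in N$ this is (\ref{eq:picef_b}); for $i \in P$, if $i$ had two distinct selected out-arcs, at positions $p$ and $p'$ (necessarily both $\ge 2$), then (\ref{eq:picef_c}) applied at $i$ with $k = p - 1$ and with $k = p' - 1$ shows $i$ has a selected in-arc at position $p - 1$ and one at position $p' - 1$, so the in-degree bound forces $p = p'$, and then (\ref{eq:picef_c}) at level $p - 1$ is violated (its right-hand side is $2$, its left-hand side at most $1$). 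It follows that every component of $H$ is a simple directed path or a simple directed cycle. To exclude cycles I would use the one-sided flow inequality: whenever $v \in P$ has a selected out-arc at a position $p \ge 2$, (\ref{eq:picef_c}) at $v$ with $k = p - 1$, together with uniqueness of $v$'s in-arc, forces that in-arc to sit at position exactly $p - 1$; hence arc positions strictly increase as one walks forward along $H$, which is impossible around a cycle (and in passing also caps the length of a path, since positions never exceed $L$).

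Finally I would read off the chains. A nontrivial path component of $H$ must begin at a vertex of $N$: its first vertex has in-degree $0$, and were it in $P$ its out-arc (at position $\ge 2$) would, via (\ref{eq:picef_c}), force a nonexistent in-arc. Starting from that NDD, the identity $p = q + 1$ relating the position $p$ of a vertex's out-arc to the position $q$ of its in-arc gives arc positions $1, 2, 3, \dots$ along the path, so a path with $m$ arcs occupies positions $1, \dots, m$, whence $m \le L$. Vertex-disjointness of these chains is automatic, as distinct components of $H$ share no vertices. I expect the main obstacle to be extracting this rigidity from (\ref{eq:picef_c}): unlike the flow-conservation \emph{equality} available for PIEF in Lemma~\ref{thm:pief-validity-a}, (\ref{eq:picef_c}) is only an \emph{inequality} and yields just the implication ``an out-arc at position $k+1$ forces an in-arc at position $k$,'' so out-degree $\le 1$, the absence of cycles, and the consecutiveness of positions must all be derived by chaining that one-directional implication with the at-most-one-incoming-arc condition (\ref{eq:chain_flow}) rather than by a symmetric flow count.
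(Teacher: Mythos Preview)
Your proof is correct and complete. It differs from the paper's argument in organization rather than in any essential idea, but the difference is worth noting.

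The paper proceeds constructively: starting from each NDD with a selected out-arc, it greedily extends a chain one position at a time (using (\ref{eq:picef_c}) plus (\ref{eq:chain_flow}) to guarantee at most one selected out-arc at each successive position), then separately argues that the resulting chains are vertex-disjoint, and finally traces backward from an arbitrary selected arc to show it lies on one of the constructed chains. Your argument is structural: you form the subgraph $H$ of selected arcs, pin down its degree structure (in- and out-degree at most one), rule out cycles via the strict monotonicity of positions, and then read the chains off as the nontrivial path components of $H$. The paper's version is closer to an algorithm; yours is closer to a classification of components, and it avoids the separate ``every selected arc is covered'' step because components automatically partition the arc set. Both routes ultimately rely on the same one-directional implication from (\ref{eq:picef_c}) combined with (\ref{eq:chain_flow}); you simply package it once (``out-arc at position $p$ forces the unique in-arc to sit at position $p-1$'') and reuse it for out-degree, acyclicity, the NDD start, and the length bound, whereas the paper invokes it locally at each step of the construction.
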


\begin{proof}
We say that arc $(i,j)$ is \textit{selected} at position $k$ if and only if
$z_{ijk} = 1$.

Our proof has three parts. We first give a procedure to construct a set $S$ of
chains of length no greater than $L$, where each chain in $S$ consists only of
selected edges. We then show that these chains are vertex-disjoint, and that
any selected edge appears in some chain in $S$.

By constraint~(\ref{eq:picef_b}), each $i \in N$ has at most one selected
outgoing arc. For each $i \in N$ that has an outgoing arc $(i,j_1)$, we begin
to construct chain $c$ by letting $c=((i,j_1))$---a sequence containing one
edge.  Vertex $j_1$ has at most one selected outgoing arc at position 2, by
constraint~(\ref{eq:picef_c}).  If such an arc $(j_1,j_2)$ exists, we add it to
our chain. We continue to add edges $(j_2,j_3), (j_3,j_4), \dots$, until we
reach $k$ such that a selected edge from $j_k$ at position $k+1$ does not
exist.  The chain $c$ will therefore be a path of selected edges at positions
$1, \dots, |c|$, where the length of $c$ can be no greater than $L$ since no
variable in the model has position greater than $L$.  Add the chain $c$ to $S$.

By constraint (\ref{eq:picef_b}), no vertex in $N$ can appear in two chains in
$S$.  By constraint (\ref{eq:picef_a}), the same is true for vertices in $P$.
Hence, the chains in $S$ are vertex disjoint.

To complete the proof, we show that any selected edge must be part of one of
these chains in $S$. Let a variable $z_{ijk}$ taking the value 1 be given. By
applying constraint (\ref{eq:picef_c}) repeatedly, we can see that there must exist
be a path of length $k$ from an NDD $h$ to $j$, containing only selected edges.
Let $c \in S$ be the chain starting at $h$. Since no vertex in $c$ has two
selected outgoing arcs, there must must exist a unique path of length $k$ from $h$, and
$(i,j)$ must therefore be the $k$th edge of $c$.
\end{proof}

\begin{lemma}\label{thm:picef_validity_a}
Any assignment of values to the $y_{ijk}$ and $z_c$ that respects the PICEF constraints yields a vertex-disjoint set cycles of length no greater than $K$ and chains no greater than $L$.
\end{lemma}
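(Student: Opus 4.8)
The plan is to reduce this statement to the chain-only result already established in Lemma~\ref{lemma:picef_chains}, together with the trivial observation that every variable $z_c$ is, by construction, indexed by a cycle $c \in \cycles$ of length at most $K$. So the only real content is showing that the selected cycles do not collide with each other or with the chains produced by the previous lemma.

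First I would isolate the chain substructure. For each patient vertex $i \in P$, the PICEF capacity constraint~(\ref{eq:picef_a}) reads $\sum_{j:(j,i)\in A}\sum_{k\in\K'(j,i)} y_{jik} + \sum_{c\in\cycles:\, i \text{ appears in } c} z_c \leq 1$; discarding the non-negative cycle sum yields exactly the hypothesis~(\ref{eq:chain_flow}) of Lemma~\ref{lemma:picef_chains}. Since constraints~(\ref{eq:picef_b}) and~(\ref{eq:picef_c}) (and integrality) are also part of the PICEF constraint set, Lemma~\ref{lemma:picef_chains} applies verbatim: the $y_{ijk}$ values determine a vertex-disjoint family $S$ of chains, each of length at most $L$, and every selected chain arc lies on a (unique) chain of $S$. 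Next I would handle the cycles: any $c$ with $z_c = 1$ is a cycle of length $\le K$ simply because $\cycles$ contains only such cycles, so nothing is left to check there except disjointness.

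For disjointness, I would again appeal solely to~(\ref{eq:picef_a}). If some patient vertex $i$ lay on two selected cycles, then $\sum_{c:\, i \text{ appears in } c} z_c \ge 2$, contradicting~(\ref{eq:picef_a}). If $i$ lay both on a selected cycle and on a chain of $S$, note that every patient vertex appearing on a chain has an incoming chain arc, so $\sum_{j:(j,i)\in A}\sum_{k} y_{jik} \ge 1$ while also $\sum_{c:\, i \text{ appears in } c} z_c \ge 1$, again contradicting~(\ref{eq:picef_a}). No conflict can arise at an NDD vertex, since NDDs have no incoming arcs and hence cannot appear in any cycle, and disjointness of the chains of $S$ among themselves was already given by Lemma~\ref{lemma:picef_chains}. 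Combining, the selected cycles and chains are pairwise vertex-disjoint.

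I do not anticipate a genuine obstacle: the single point requiring a moment's care is the remark that every patient on a chain of $S$ has an incoming chain arc, since this is precisely what makes the one inequality~(\ref{eq:picef_a}) simultaneously enforce chain--cycle and cycle--cycle disjointness. Everything else is bookkeeping once Lemma~\ref{lemma:picef_chains} is invoked.
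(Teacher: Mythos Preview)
Your proposal is correct and follows essentially the same approach as the paper: both reduce to Lemma~\ref{lemma:picef_chains} via constraint~(\ref{eq:picef_a}) implying~(\ref{eq:chain_flow}), and both use~(\ref{eq:picef_a}) to establish cycle--cycle and cycle--chain disjointness. The only minor variation is that for cycle--chain disjointness the paper traces back through the flow constraint~(\ref{eq:picef_c}) to find an incoming arc at the shared vertex, whereas you obtain that incoming arc directly from the chain structure guaranteed by Lemma~\ref{lemma:picef_chains}; both routes arrive at the same contradiction of~(\ref{eq:picef_a}).
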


\begin{proof}
We call a cycle $c$ such that $z_c=1$ a \textit{selected cycle}, and an arc
$(i,j)$ such that $z_{ijk}=1$ for some $k$ a \textit{selected arc}.

By (\ref{eq:picef_a}), the selected cycles are vertex disjoint.  By
Lemma~\ref{lemma:picef_chains} the selected arcs compose a set of
vertex-disjoint chains, each of which has length bounded by $L$ (The conditions
of the lemma are satisfied since constraint~(\ref{eq:picef_a}) implies
(\ref{eq:chain_flow})).

It remains to show that no selected cycle shares a vertex with a selected arc.
Suppose, to the contrary, that some selected cycle $c$ shares vertex $i \in P$
with a selected arc $a$.  Vertex $i$ cannot be the target of $a$, since constraint
(\ref{eq:picef_a}) would be violated if $i$ appears both in selected cycle $c$
and as the target of selected arc $a$. Hence $a=(i,j)$ for some $j \in P$. By
constraint (\ref{eq:picef_c}), $i$ must be the target of another selected arc,
$a'$. Therefore, $i$ appears in $c$ and is the target of $a'$, violating
constraint (\ref{eq:picef_a}).  \end{proof}

\begin{lemma} For any valid set of vertex-disjoint cycles and chains, there
is an assignment of values to the $y_{ijk}$ and $z_c$ respecting the PICEF
constraints.  \end{lemma}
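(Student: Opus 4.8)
The plan is to read the PICEF variable values directly off the given packing and then verify each constraint; the construction is the obvious one, so the argument is really a bookkeeping check. Let $M$ be the given set, consisting of vertex-disjoint cycles $c_1,\dots,c_r$ (each of length at most $K$) together with vertex-disjoint NDD-initiated chains $P_1,\dots,P_s$ (each of length at most $L$). I would set $z_{c_t}=1$ for $t=1,\dots,r$ and $z_c=0$ for every other cycle $c\in\cycles$; and for each chain $P_t$, written as a vertex sequence $(v_0,v_1,\dots,v_m)$ with $v_0\in N$ and $v_1,\dots,v_m\in P$, I would set $y_{v_0 v_1 1}=1$ and $y_{v_{k-1}v_k k}=1$ for $k=2,\dots,m$, leaving all remaining $y$ variables at $0$.

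First I would confirm that every variable used above actually exists in the model, i.e.\ that each position index lies in the corresponding set $\K'$: the arc leaving $v_0$ is used at position $1\in\{1\}=\K'(v_0,v_1)$ since $v_0\in N$, and each later arc $(v_{k-1},v_k)$ is used at position $k$ with $2\le k\le m\le L$ and $v_{k-1}\in P$, so $k\in\{2,\dots,L\}=\K'(v_{k-1},v_k)$. Integrality ((\ref{eq:picef_d}), (\ref{eq:picef_e})) is immediate. Then I would check the three structural constraints using that the elements of $M$ are pairwise vertex-disjoint. Constraint (\ref{eq:picef_b}) holds because each NDD initiates at most one chain of $M$. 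Constraint (\ref{eq:picef_a}) holds because a patient $i\in P$ lies in at most one element of $M$: if it lies in a cycle $c_t$ then $z_{c_t}=1$ and no selected chain arc enters $i$; if $i=v_k$ (with $k\ge 1$) in a chain then exactly one selected arc, $(v_{k-1},v_k)$ at position $k$, enters $i$ and $i$ is in no selected cycle; and the left-hand side is $0$ if $i$ is unused. Constraint (\ref{eq:picef_c}) holds position-by-position: for $i\in P$ and a position $k$, if $i=v_k$ in some chain the left side is $1$ and the right side is $1$ when $i$ is not the last chain vertex and $0$ otherwise; if $i$ appears in a chain but not as its position-$k$ vertex, or does not appear in any chain, both sides are $0$, since the only selected arc leaving $i$ (if any) sits at a position different from $k+1$.

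I do not expect a real obstacle here: the one thing that needs care is the bookkeeping just described --- keeping every position index inside its admissible set $\K'$, and checking the flow inequality (\ref{eq:picef_c}) at \emph{every} position $k$ rather than only the ``active'' one. As a by-product, evaluating the objective (\ref{eq:picef_obj}) at this assignment gives exactly the weight of $M$, so together with Lemma~\ref{thm:picef_validity_a} this establishes that an optimal PICEF solution corresponds to an optimal vertex-disjoint packing of cycles and chains, i.e.\ the correctness of PICEF.
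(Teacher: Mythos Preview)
Your construction and verification are correct; this is exactly the ``trivial'' assignment the paper has in mind, and you have simply spelled out the bookkeeping that the paper omits. The paper's own proof of this lemma is the single sentence ``This assignment can be constructed trivially,'' so your approach is the same, only more explicit.
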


\begin{proof}
This assignment can be constructed trivially.
\end{proof}

\begin{theorem}\label{thm:PICEFcorrect}
The PICEF model yields an optimal solution to the kidney exchange problem.  \end{theorem}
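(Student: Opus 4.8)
The plan is to derive the theorem as an immediate consequence of the three preceding lemmas, by establishing a weight-preserving correspondence between feasible integer assignments of the PICEF variables and valid vertex-disjoint packings of cycles of length at most $K$ and chains of length at most $L$, and then observing that the two optimization problems therefore have the same optimal value.

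First I would take the ``soundness'' direction. By Lemma~\ref{thm:picef_validity_a}, any assignment to the $y_{ijk}$ and $z_c$ satisfying (\ref{eq:picef_a})--(\ref{eq:picef_e}) decomposes into a vertex-disjoint set of selected cycles (each of length at most $K$) and selected chains (each of length at most $L$). I then need to check that the objective (\ref{eq:picef_obj}) of this assignment equals the total weight of that packing. The cycle term $\sum_{c\in\cycles} w_c z_c$ contributes exactly $w_c$ for each selected cycle, which by definition is the sum of that cycle's arc weights. For the chain term $\sum_{(i,j)\in A}\sum_{k\in\K'(i,j)} w_{ij} y_{ijk}$, I would reuse the reconstruction from the proof of Lemma~\ref{lemma:picef_chains}: every selected arc occurs at a unique position of a unique chain in the reconstructed family $S$, and conversely each arc of a chain in $S$ is a selected arc at its position in that chain. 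Hence the chain term is exactly the total arc weight of the chains in $S$, and summing the two terms gives the weight of the packing.

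Next I would take the ``completeness'' direction using the third (unlabelled) lemma: given any valid vertex-disjoint collection of cycles (length $\le K$) and chains (length $\le L$), set $z_c=1$ for each cycle in the collection and $y_{ijk}=1$ whenever arc $(i,j)$ occurs at position $k$ of a chain, with all other variables $0$; this respects the PICEF constraints and, by the same bookkeeping, has objective value equal to the weight of the collection. Combining the two directions, the set of objective values attainable by feasible PICEF solutions is exactly the set of weights of valid packings, so the optimum of the PICEF IP equals the optimal kidney exchange value, and reading off the cycles and chains from an optimal integer solution (as in Lemma~\ref{thm:picef_validity_a}) yields an optimal kidney exchange solution. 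The only point requiring care — and the step I would be most careful about — is the arc-accounting in the chain term: one must be sure that in a feasible assignment every selected $y_{ijk}$ is used by exactly one chain at exactly one position, so that no weight is double-counted or dropped; this is precisely what the three-part argument of Lemma~\ref{lemma:picef_chains} delivers, and the rest is routine.
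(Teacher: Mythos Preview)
Your proposal is correct and follows essentially the same approach as the paper: the theorem is presented there without an explicit proof, as an immediate corollary of Lemmas~\ref{lemma:picef_chains} and~\ref{thm:picef_validity_a} together with the completeness lemma. Your write-up is in fact more careful than the paper's, since you explicitly verify that the objective value of a feasible PICEF assignment equals the weight of the corresponding packing (and vice versa), a point the paper leaves implicit.
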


\subsection{Proofs for the LPR of PICEF}
\label{sec:PICEFLPRproofs}

\thmpicefvscf*

\begin{proof}
  $\zCF \preceq \zPICEF{}$.
  Consider an optimal
  solution to the LPR of the cycle formulation.  We show how to
  construct an equivalent (optimal) solution to the LPR of PICEF.
  For $c \in \cycles$, we transfer the value of $z_c$ directly from the cycle formulation
  solution to the PICEF solution. For each $(i,j) \in A$ and each $k \in \K'(i,j)$, let
\[
  	y_{ijk} = \sum_{(i,j) \text{ appears at position $k$ of } c} z_c.
\]
  
  This solution has the same objective value as the
  cycle formulation solution, and satisfies the constraints of the LPR of
  PICEF.

  $\zCF{} \prec \zPICEF{}$.
  Figure~\ref{fig:two-arms} shows a graph for which $\zPICEF{}$ is strictly greater (i.e., worse) than $\zCF{}$.
  Let $K=2$ and $L=4$.
  In this instance, $N = \{1\}$ and $P=\{2, \dots, 7\}$.

  In the cycle formulation, this instance has no admissible cycles, and the only
  admissible chains are
  $1 \rightarrow 2 \rightarrow 3 \rightarrow 4$,
  $1 \rightarrow 5 \rightarrow 6 \rightarrow 7$,
  and their prefixes. Since the longest chain has length 3 and any the sum of chain-variables
  containing vertex 1 may not exceed 1, we can see that the optimal objective value to the LPR
  of the cycle formulation is 3.

  We can achieve an objective value of $7/2$ to the LPR of PICEF, by letting
  $y_{121} = y_{232} = y_{343} = y_{151} =y_{562} = y_{673} = y_{754} = 1/2$.
\end{proof}

  \begin{figure}[ht!bp]
    \centering
    \begin{tikzpicture}[scale=0.8,transform shape]

  \node (alt1) at (0, 0) [whitesq] {$1$};

  \node (p1) at (1, .55) [whitecirc]    {$2$};
  \node (p2) at (2.2, .55) [whitecirc]  {$3$};
  \node (p3) at (3.4, .55) [whitecirc]    {$4$};
  \node (p4) at (1, -.55) [whitecirc]   {$5$};
  \node (p5) at (2.2, -.55) [whitecirc] {$6$};
  \node (p6) at (3.4, -.55) [whitecirc]   {$7$};

  \draw [-latex] (alt1) -- (p1);
  \draw [-latex] (p1) -- (p2);
  \draw [-latex] (p2) -- (p3);

  \draw [-latex] (alt1) -- (p4);
  \draw [-latex] (p4) -- (p5);
  \draw [-latex] (p5) -- (p6);
  \draw [-latex, bend left] (p6) to (p4);

\end{tikzpicture}
    \caption{A graph where $\zPICEF{}$ is strictly greater than $\zCF{}$.}\label{fig:two-arms}
  \end{figure}
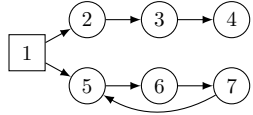


  \thmpicefvscflarge*
  
\begin{proof}
 We give a family of graphs, parameterised by $K$ and $L$, for which $\zPICEF{}$ is strictly greater than $\zCF{}$.  Given a constant cycle cap of $K$ and a chain cap of $L$ (which can effectively be infinite, if $L=|V|$), the graphs are constructed as follows.  For $i \in [L-K-1]$, create a cycle $\langle v^i_1, v^i_2, \ldots, v^i_{K+1} \rangle$ such that $v^{i+1}_1 = v^{i}_2$ for each $i \in [1, L-K-2]$; the cycle is otherwise disjoint from the rest of the graph.  Connect a single altruist $a$ to $v^1_1$; the altruist is otherwise disjoint from the rest of the graph.  Figure~\ref{fig:udders} visualizes the constructed graph.

  \begin{figure}[ht!bp]
    \centering
    \begin{tikzpicture}[scale=0.8,transform shape]

  \node (alt) at (0, 0) [whitesq] {$a$};

  \node (t0) at (1, 0) [whitecirc] {$v^1_1$};
  \node (t1) at (3, 0) [whitecirc] {$v^2_1$};
  \node (t2) at (5, 0) [whitecirc] {$v^3_1$};
  \node (t3) at (7, 0) [whitecirc] {$\ldots$};
  \node (tlm1) at (9, 0) [whitecirc] {};
  \node (tl) at (11, 0) [whitecirc] {};

  \node (u0_k) at ($(1.5,-1)$) [whitecirc] {};
  \node (u0_1) at ($(2.5,-1)$) [whitecirc] {};
  \node (u1_k) at ($(3.5,-1)$) [whitecirc] {};
  \node (u1_1) at ($(4.5,-1)$) [whitecirc] {};
  \node (u2_k) at ($(5.5,-1)$) [whitecirc] {};
  \node (u2_1) at ($(6.5,-1)$) [whitecirc] {};
  \node (ul_k) at ($(9.5,-1)$) [whitecirc] {};
  \node (ul_1) at ($(10.5,-1)$) [whitecirc] {};
  
  \draw [-latex] (alt.east) -- (t0.west);
  \draw [-latex] (t0.east) -- (t1.west);
  \draw [-latex] (t1.east) -- (t2.west);
  \draw [-latex] (t2.east) -- (t3.west);
  \draw [-latex,densely dotted] (t3.east) -- (tlm1.west);
  \draw [-latex] (tlm1.east) -- (tl.west);

  \draw [-latex] (u0_k.north west) -- (t0.south);
  \draw [-latex] (t1.south) -- (u0_1.north east);
  \draw [-latex,densely dotted] (u0_1) to [out=225,in=315] (u0_k);
  
  \draw [-latex] (u1_k.north west) -- (t1.south);
  \draw [-latex] (t2.south) -- (u1_1.north east);
  \draw [-latex,densely dotted] (u1_1) to [out=225,in=315] (u1_k);
    
  \draw [-latex] (u2_k.north west) -- (t2.south);
  \draw [-latex] (t3.south) -- (u2_1.north east);
  \draw [-latex,densely dotted] (u2_1) to [out=225,in=315] (u2_k);

  \draw [-latex] (ul_k.north west) -- (tlm1.south);
  \draw [-latex] (tl.south) -- (ul_1.north east);
  \draw [-latex,densely dotted] (ul_1) to [out=225,in=315] (ul_k);

  \draw [
    thick,
    decoration={
        brace,
        raise=0.35cm
    },
    decorate
  ] (t0.west) -- (tl.east)
  node [pos=0.5,anchor=north,yshift=1.2cm] {$L-K-1 \text{ patient-donor pairs}$};

  \draw [
    thick,
    decoration={
        brace,
        raise=1.35cm,
        mirror
    },
    decorate
  ] (t0.east) -- (t1.west)
  node [pos=0.5,anchor=north,yshift=-1.8cm] {$K+1$};

  \draw [
    thick,
    decoration={
        brace,
        raise=1.35cm,
        mirror
    },
    decorate
  ] (t1.east) -- (t2.west)
  node [pos=0.5,anchor=north,yshift=-1.8cm] {$K+1$};
  
  \draw [
    thick,
    decoration={
        brace,
        raise=1.35cm,
        mirror
    },
    decorate
  ] (t2.east) -- (t3.west)
  node [pos=0.5,anchor=north,yshift=-1.8cm] {$K+1$};
  
  \draw [
    thick,
    decoration={
        brace,
        raise=1.35cm,
        mirror
    },
    decorate
  ] (tlm1.east) -- (tl.west)
  node [pos=0.5,anchor=north,yshift=-1.8cm] {$K+1$};

\end{tikzpicture}
    \caption{Family of graphs where $\zPICEF{}$ is strictly looser than $\zCF{}$.}\label{fig:udders}
  \end{figure}
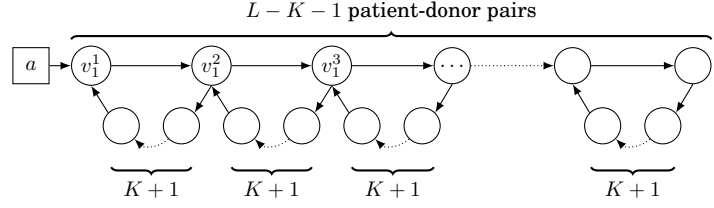

  The maximum cardinality disjoint packing of cycles of length at most $K$ and chains of length at most $L$ is the unique chain $(a, v^1_1, v^2_1, \ldots, v^{L-K-1}_1, v^{L-K-1}_2, \ldots, v^{L-K-1}_{K+1})$.  Thus, $\OPT{} = L$, where $\OPT{}$ is the optimal objective value to the integer program.  Indeed, there are no legal cycles of length at most $K$ in the graph, and at most one chain can be in any feasible solution due to the shared altruist $a$, so the (unique, by construction) longest chain is optimal.
  
  The LPR of the PICEF representation of this instance will assign weight of $1/2$ to each edge in the graph, for a total objective of $\zPICEF{} = 1/2 + (L+K-1)(\frac{K+1}{2})$.

  The LPR of the cycle formulation representation will create variables for each feasible cycle and chain in the graph.  There are no feasible cycles in the graph.  All chains in the graph share the edge $(a,v^1_1)$; thus all chains intersect and all chains contain $a$.   Thus, the sole binding constraint in the LPR of the cycle formulation is that the altruist node $a$ appears in at most one chain.  For chain decision variables $x_c \in [0,1]$, this problem can be rewritten as
  $$
  \max \sum_{c} |c|x_c \ \text{ subject to } \sum_{c} x_c \leq 1
  $$
  This constraint matrix is totally unimodular, and thus the LP optimum is integral (and is the IP optimum, or $\zCF{} = \OPT{} = L$).

The ratio of $\zPICEF$ to $\zCF$ is thus
\[
\frac{1}{2L} + \left(1 + \frac{K-1}{L}\right)\left(\frac{K+1}{2}\right)
\]

which can be made arbitrarily large by increasing $K$.
\end{proof}

\section{Hybrid Formulation}\label{sec:hybrid}
\subsection{Description of the HPIEF model}
In this section, we present a compact generalization of the PIEF model to kidney exchange graphs with non-directed donors.  This stands in contrast to the the PICEF formulation, which has polynomial counts of constraints and edge variables, but an exponential number of cycle variables. By replacing the cycle variables in PICEF with the variables from PIEF and modifying the constraints accordingly, we can create a compact formulation, the \emph{hybrid PIEF (HPIEF)}. Let the variables $x_{ijl}^l$ and the index set $\K(i,j,l)$ be defined as in PIEF. Let the variables $y_{ijk}$ and the index set $\K(i,j)$ be defined as in PICEF. The HPIEF integer program is as follows.

\begin{subequations}
\begin{align}
\max \quad
	\sum_{l\in P} \sum_{(i,j)\in A^l} \sum_{k\in \K(i,j,l)} w_{ij}x_{ijk}^l +
	\sum_{(i,j)\in A} \sum_{k\in \K(i,j)} w_{ij}y_{ijk}
	& \label{eq:hpief_obj}\\
\text{s.t.} \qquad
	\sum_{l\in P} \sum_{j:(j,i)\in A^l} \sum_{k\in \K(j,i,l)} x_{jik}^l +
	\sum_{j:(j,i)\in A} \sum_{k\in \K(j,i)} y_{jik}
	\leq 1 & \qquad i \in P
	\label{eq:hpief_a} \\
\text{Constraints (\ref{eq:pief_b}), (\ref{eq:picef_b}), and (\ref{eq:picef_c})}\nonumber \\
x_{ijk}^l \in \{0,1\} & \qquad
	\begin{aligned}
		l \in P,
		(i,j) \in A^l, \\
		k \in \K(i,j,l)
	\end{aligned}
	\label{eq:hpief_e} \\
y_{ijk} \in \{0,1\} & \qquad (i,j) \in A, k \in \K(i,j) \label{eq:hpief_f}
\end{align}
\end{subequations}

Inequalities~(\ref{eq:hpief_a}) and ~(\ref{eq:picef_b}) are the capacity constraints for
patients and NDDs respectively.

The reductions described in Subsections~\ref{sec:pief-basic-reduction} and \ref{sec:pief-redundancy} can
also be applied to the $x_{ijk}^l$ in HPIEF.


\subsection{Validity of the HPIEF model}

\begin{lemma}
Any assignment of values to the $x_{ijk}^l$ and $y_{ijk}$ that respects the
HPIEF constraints yields a vertex-disjoint set cycles of length no greater than
$K$ and chains no greater than $L$.
\end{lemma}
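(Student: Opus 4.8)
The plan is to reuse, essentially verbatim, the two validity arguments already established for the pure models---Lemma~\ref{thm:pief-validity-a} for cycles and Lemma~\ref{lemma:picef_chains} for chains---and to glue them together using the single combined capacity constraint~(\ref{eq:hpief_a}), exactly as in the proof of Lemma~\ref{thm:picef_validity_a}.

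First I would observe that~(\ref{eq:hpief_a}) dominates the PIEF capacity constraint~(\ref{eq:pief_a}) when restricted to the $x$-variables: for each $i \in P$, discarding the nonnegative $y$-terms from~(\ref{eq:hpief_a}) leaves exactly the inequality~(\ref{eq:pief_a}). Since HPIEF also retains the PIEF flow-conservation constraint~(\ref{eq:pief_b}) and integrality of the $x_{ijk}^l$, the $x$-part of any HPIEF-feasible assignment is feasible for PIEF, so Lemma~\ref{thm:pief-validity-a} applies and tells us the selected $x$-arcs form a vertex-disjoint family of cycles, each of length at most $K$ (with at most one cycle per nonempty graph copy). In particular the within-copy step of that lemma---using the capacity of the distinguished vertex $l$ to forbid two $x$-arcs entering $l$ at the top position---goes through unchanged, since~(\ref{eq:hpief_a}) only strengthens~(\ref{eq:pief_a}).

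Next I would handle the chains. Constraints~(\ref{eq:picef_b}) and~(\ref{eq:picef_c}) appear verbatim in HPIEF, and discarding the $x$-terms from~(\ref{eq:hpief_a}) yields, for every $i \in P$, precisely the hypothesis~(\ref{eq:chain_flow}) of Lemma~\ref{lemma:picef_chains}. That lemma then gives that the selected $y$-arcs compose a vertex-disjoint set of chains, each of length at most $L$. Finally I would show that no cycle vertex coincides with a chain vertex, mirroring the last step of the proof of Lemma~\ref{thm:picef_validity_a}: if some $i \in P$ lies in both a selected cycle and a selected chain, then being in a cycle makes $i$ the target of a selected $x$-arc, and being in a chain makes $i$ either the target of a selected $y$-arc or the source of a selected $y$-arc at some position $k+1 \geq 2$; in the latter case~(\ref{eq:picef_c}) forces $i$ to also be the target of a selected $y$-arc at position $k$. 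Either way the left-hand side of~(\ref{eq:hpief_a}) for vertex $i$ is at least $2$, a contradiction, and combining the three steps proves the lemma.

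I do not expect a real obstacle here; the nearest thing to one is checking that the single constraint~(\ref{eq:hpief_a}) is strong enough to simultaneously play the roles of~(\ref{eq:pief_a}), of~(\ref{eq:chain_flow}), and of the PICEF cycle/chain-disjointness argument. This follows immediately from the fact that~(\ref{eq:hpief_a}) is obtained from each of those inequalities by adding nonnegative terms, so every consequence drawn from them in the earlier proofs remains valid.
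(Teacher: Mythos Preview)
Your proposal is correct and follows essentially the same approach as the paper's own (sketch) proof: reduce to Lemma~\ref{thm:pief-validity-a} for the $x$-variables, to Lemma~\ref{lemma:picef_chains} for the $y$-variables, and then argue cycle/chain disjointness as in Lemma~\ref{thm:picef_validity_a}. You have simply filled in more detail than the paper's sketch, in particular spelling out why~(\ref{eq:hpief_a}) simultaneously implies~(\ref{eq:pief_a}) and~(\ref{eq:chain_flow}) and writing out the disjointness contradiction explicitly.
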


\begin{proof}
(Sketch.) Clearly, if the HPIEF constraints are satisfied then the PIEF
constraints (\ref{eq:pief_a}-\ref{eq:pief_c}) are satisfied also. Therefore, by
Theorem~\ref{thm:pief-validity-a}, the edges selected by the $x_{ijk}^l$ form a
vertex-disjoint set of cycles of length no greater than $K$.

By Lemma~\ref{lemma:picef_chains}, the selected arcs compose a set of
vertex-disjoint chains, each of which has length bounded by $L$.

It remains to show that the selected cycles and chains are vertex-disjoint.
This can be showed straightforwardly, along similar lines to the proof for
Theorem~\ref{thm:picef_validity_a}.
\end{proof}

\begin{lemma} For any valid set of vertex-disjoint cycles and chains, there
is an assignment of values to the $x_{ijk}^l$ and $y_{ijk}$ respecting the HPIEF
constraints.  \end{lemma}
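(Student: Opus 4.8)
The plan is to construct the required assignment directly, just as in the converse lemmas for PIEF and PICEF, by superposing the PIEF-style encoding of the cycles and the PICEF-style encoding of the chains; recall that the $x_{ijk}^l$ of HPIEF are literally the PIEF variables and the $y_{ijk}$ are literally the PICEF variables. First I would deal with the cycles. Given a cycle $c$ in the packing, let $l$ be the lowest-numbered vertex occurring in $c$; since a cycle can use only vertices of $P$, we have $l \in P$, and $c$ is a cycle of the graph copy $D^l$. Writing $c$ in its canonical form so that its first arc leaves $l$, I set $x_{ijk}^l = 1$ for the arc $(i,j)$ occurring at position $k$ of $c$, and leave every other $x$-variable at $0$. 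A quick check confirms that each activated index is admissible, i.e.\ $k \in \K(i,j,l)$: the first arc leaves $l$ and sits at position $1$; the last arc enters $l$ at position $|c| \le K$; and every intermediate arc has both endpoints strictly greater than $l$ and occupies a position in $\{2,\dots,|c|-1\} \subseteq \{2,\dots,K-1\}$.

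Next I would deal with the chains. Each chain in the packing is initiated by a unique NDD $i \in N$; I set $y_{ij1} = 1$ for its first arc $(i,j)$, which is admissible because $i \in N$ forces position $1$, and $y_{i'j'k} = 1$ for the arc $(i',j')$ occurring at position $k \ge 2$, which is admissible because such a source vertex $i'$ lies in $P$ (only the first arc of a chain leaves an NDD) and thus may occupy any position in $\{2,\dots,L\}$, with $k \le L$ since the chain has at most $L$ arcs. All remaining $y$-variables are set to $0$.

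It then remains to verify that this $(x,y)$ satisfies the HPIEF constraints. Integrality (\ref{eq:hpief_e})--(\ref{eq:hpief_f}) is immediate. For the patient capacity constraint (\ref{eq:hpief_a}): a vertex $i \in P$ lying on some cycle contributes exactly $1$ to the $x$-sum (via its unique incoming cycle arc, which was activated only in the single copy $D^l$ that houses that cycle) and $0$ to the $y$-sum; a vertex $i \in P$ lying on some chain contributes $0$ to the $x$-sum and exactly $1$ to the $y$-sum (its unique incoming chain arc); a vertex on neither contributes $0$; so the left-hand side is at most $1$. Flow conservation (\ref{eq:pief_b}) holds within each copy $D^l$ separately: either $D^l$ houses an activated cycle, in which case each of its non-$l$ vertices has one activated incoming arc at some position $k$ and one activated outgoing arc at position $k+1$, so both sides equal $1$ (and both sides equal $0$ at every other vertex and position), or no $x$-variable of copy $D^l$ is active and both sides are $0$. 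The NDD capacity (\ref{eq:picef_b}) holds because each NDD initiates at most one chain and hence has at most one activated position-$1$ arc. Finally, for the chain flow inequality (\ref{eq:picef_c}): whenever its right-hand side is positive, vertex $i$ has an activated outgoing chain arc at position $k+1$, which by construction means $i$ received a chain arc at position $k$, making the left-hand side at least $1$; otherwise the inequality is trivial.

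I do not anticipate a genuine obstacle, mirroring the ``this assignment can be constructed trivially'' proofs of the analogous PIEF and PICEF lemmas. The one place requiring a little care is checking the \emph{single} capacity constraint (\ref{eq:hpief_a}) against both variable families simultaneously: this is precisely where the vertex-disjointness of the input packing is invoked (no patient is charged by both a cycle and a chain, and no patient receives two arcs), together with the observation that activating a cycle only in the copy indexed by its lowest-numbered vertex rules out any double counting across graph copies.
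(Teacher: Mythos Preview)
Your proposal is correct and follows the same approach as the paper, which simply states ``This assignment can be constructed trivially.'' You have merely unpacked that triviality in full detail, giving the natural construction and verifying each constraint explicitly; nothing in your argument deviates from what the authors evidently had in mind.
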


\begin{proof}
This assignment can be constructed trivially.
\end{proof}

\begin{theorem}
The HPIEF model yields an optimal solution to the kidney exchange problem.
\end{theorem}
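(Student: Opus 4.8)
The plan is to reduce the theorem to the two lemmas that immediately precede it, together with a short check that the correspondence between feasible HPIEF assignments and valid cycle/chain packings preserves objective value. Let $\OPT$ be the maximum weight of a vertex-disjoint packing in $D$ of cycles of length at most $K$ and chains of length at most $L$, and let $Z$ be the optimal objective value of the HPIEF integer program (\ref{eq:hpief_obj})--(\ref{eq:hpief_f}). I will argue $Z = \OPT$, from which the theorem follows since the encoded packing can then be recovered from any optimal HPIEF solution via the decomposition procedure used to prove the first preceding lemma.

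For the inequality $Z \le \OPT$, I would take an arbitrary feasible integral assignment to the $x_{ijk}^l$ and $y_{ijk}$ and invoke the first preceding lemma, which guarantees that the selected arcs decompose into a vertex-disjoint family $\mathcal{F}$ of cycles of length at most $K$ and chains of length at most $L$. The key bookkeeping step is to observe that each selected arc is counted exactly once in (\ref{eq:hpief_obj}): a cycle arc $(i,j)$ corresponds to a single variable $x_{ijk}^l$, where $l$ is the lowest-numbered vertex of its cycle and $k$ is its (unique) position in that cycle, and a chain arc corresponds to a single variable $y_{ijk}$ at its unique position $k$ in the chain. Hence the objective value equals $\sum_{c \in \mathcal{F}} w_c \le \OPT$.

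For the reverse inequality $Z \ge \OPT$, I would start from a maximum-weight packing and apply the second preceding lemma --- whose construction simply sets $x_{ijk}^l = 1$ for each cycle arc in its canonical graph copy and position, $y_{ijk} = 1$ for each chain arc at its position, and all remaining variables to $0$ --- to obtain a feasible HPIEF assignment. The same counting argument shows its objective value is exactly the weight of the packing, namely $\OPT$. Combining the two bounds gives $Z = \OPT$.

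The only step requiring any care is the weight-preservation bookkeeping: one must verify that for every valid packing the graph-copy index $l$ and the position index $k$ of each arc are uniquely determined, so that the sum in (\ref{eq:hpief_obj}) neither double-counts nor omits any arc weight. I expect this to be the main (and only mildly delicate) point; everything else is an immediate consequence of the two structural lemmas already established for HPIEF, which in turn lean on the corresponding validity results for PIEF and on Lemma~\ref{lemma:picef_chains}.
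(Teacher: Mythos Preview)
Your proposal is correct and follows essentially the same approach as the paper: the paper states the theorem immediately after the two structural lemmas without giving any explicit proof, treating it as an immediate consequence (exactly as it does for the analogous Theorems~\ref{thm:PIEFcorrect} and~\ref{thm:PICEFcorrect}). Your write-up is in fact more careful than the paper's, since you make explicit the objective-value bookkeeping (uniqueness of the graph-copy index $l$ and position index $k$ for each selected arc) that the paper leaves implicit.
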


\subsection{LPR comparison of HPIEF and PICEF}

\begin{theorem}\label{thm:hpief-vs-picef}
  $\zHPIEF{} = \zPICEF{}$
\end{theorem}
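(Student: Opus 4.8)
The plan is to mimic the proof of Theorem~\ref{thm:pief-vs-cf} almost verbatim, exploiting the fact that HPIEF and PICEF share identical chain variables $y_{ijk}$ and differ \emph{only} in how cycles are encoded: PICEF uses one variable $z_c$ per cycle (exactly as in the cycle formulation), while HPIEF uses the position-indexed edge variables $x_{ijk}^l$ of PIEF, restricted to graph copies $D^l$ with $l \in P$ --- which suffices, since no cycle passes through an NDD. Because Theorem~\ref{thm:pief-vs-cf} already establishes that these two cycle encodings yield LP relaxations of equal strength for the cycles-only problem, the only new work is to check that the translations used there remain feasible once the chain variables and the single combined vertex-capacity constraint are present. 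I will prove both $\zPICEF{} \preceq \zHPIEF{}$ and $\zHPIEF{} \preceq \zPICEF{}$.

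For $\zPICEF{} \preceq \zHPIEF{}$: take an optimal solution $(y^*, z^*)$ to the LPR of PICEF, keep the $y^*$ unchanged, and convert the $z^*$ into $x$-variables exactly as in the ``$\zCF{} \preceq \zPIEF{}$'' half of the proof of Theorem~\ref{thm:pief-vs-cf}: for each cycle $c$ let $l$ be its lowest-numbered vertex, index its arcs $1,\dots,|c|$ starting from the arc leaving $l$, and set $x_{ijk}^l = \sum_{c \in \cycles(i,j,k,l)} z_c^*$. This satisfies the PIEF flow-conservation constraints~(\ref{eq:pief_b}) and reproduces the cycle part of the objective~(\ref{eq:hpief_obj}), while the chain part is unchanged. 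The only constraint mixing the two blocks is the patient-capacity constraint~(\ref{eq:hpief_a}); but for each $i \in P$ the construction gives $\sum_{l}\sum_{j:(j,i)\in A^l}\sum_{k\in\K(j,i,l)} x_{jik}^l = \sum_{c:\,i\in c} z_c^*$, since each cycle through $i$ contributes exactly one incoming arc, at exactly one position of exactly one graph copy. Hence the left-hand side of~(\ref{eq:hpief_a}) equals the left-hand side of the PICEF capacity constraint~(\ref{eq:picef_a}) and is therefore $\leq 1$; constraints~(\ref{eq:picef_b}) and~(\ref{eq:picef_c}) involve only the unchanged $y^*$. So we obtain a feasible HPIEF LPR solution of the same objective value.

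For $\zHPIEF{} \preceq \zPICEF{}$: take an optimal solution $(x^*, y^*)$ to the LPR of HPIEF, keep $y^*$, and decompose the $x^*$-flow into cycle variables by the iterative procedure from the ``$\zPIEF{} \preceq \zCF{}$'' half of the proof of Theorem~\ref{thm:pief-vs-cf}, which repeatedly extracts a closed walk of at most $K$ arcs in some $D^l$, splits it into cycles via Lemma~\ref{lem:closed-walk}, and shifts the common value $x_{\min}$ off the relevant $x$-variables onto the corresponding $z_c$. Throughout, I would carry as an invariant the HPIEF capacity constraint~(\ref{eq:hpief_a}) augmented by the current cycle variables: for each $i \in P$, $\sum_{l}\sum_{j:(j,i)\in A^l}\sum_k x_{jik}^l + \sum_{c:\,i\in c} z_c + \sum_{j:(j,i)\in A}\sum_k y^*_{jik} \leq 1$. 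The $y$-contribution is constant, and each shift preserves the $x$-plus-$z$ portion of this sum, so the invariant is maintained; when all $x$-variables have been driven to zero, the surviving $z_c$ satisfy~(\ref{eq:picef_a}), while~(\ref{eq:picef_b}) and~(\ref{eq:picef_c}) hold because they constrain only $y^*$. The objective is preserved at every step. Combining the two directions yields $\zHPIEF{} = \zPICEF{}$.

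The main obstacle --- such as it is --- is purely bookkeeping: verifying that the correspondence ``weight of cycles through $i$ $\leftrightarrow$ weight of selected arcs entering $i$'' meshes correctly with the chain-edge terms inside the single combined capacity constraint~(\ref{eq:hpief_a})/(\ref{eq:picef_a}), so that feasibility is preserved in both translations. No genuinely new idea beyond the proof of Theorem~\ref{thm:pief-vs-cf} is required, precisely because cycles never touch NDD vertices, so the cycle block and the chain block of each formulation interact only through the patient-capacity constraints.
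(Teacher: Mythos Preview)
Your proposal is correct and follows precisely the approach the paper intends: the paper omits the proof, stating only that it ``is similar to the proof for Theorem~\ref{thm:pief-vs-cf},'' and your argument is exactly that adaptation---keeping the shared chain variables $y_{ijk}$ fixed while translating between the PIEF-style $x_{ijk}^l$ and the cycle-formulation-style $z_c$ encodings, and verifying that the combined patient-capacity constraint~(\ref{eq:hpief_a})/(\ref{eq:picef_a}) is preserved because each cycle through $i$ contributes exactly one incoming-arc unit at $i$. No further idea is needed.
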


The proof is similar to the proof for Theorem~\ref{thm:pief-vs-cf}, and is
therefore omitted.

\section{Additional Background and Proofs for Failure-Aware the PICEF Model}

In this section, we provide a proof of correctness of Algorithm~\ref{alg:failure-pricing}---which implements polynomial-time pricing of cycles for branch and price in the augmented failure-aware PICEF model---and discuss by way of counterexample why the basic deterministic polynomial-time cycle pricing algorithms of~\citeN{Glorie14:Kidney} and~\citeN{Plaut16:Fast} cannot be directly used for this case.

\subsection{Proof of Theorem~\ref{thm:failure-aware}}

\thmfailureaware*
\begin{proof}
Let $c = \langle v_1, v_2, \ldots, v_n\rangle$ be a discounted positive price cycle. Then $p^n\sum_{(i,j)\in c}  w_{ij}- \sum_{j\in c}\delta_j > 0$. Therefore 
$\sum_{j\in c} \delta_j - p^n\sum_{(i,j)\in c}w_{ij} < 0$. Then by definition of $w_k$, we have $\sum_{(i,j)\in c}(\delta_j - p^n w_{ij}) = \sum_{(i,j)\in c} w_n(i,j) < 0$.

This implies that $c$ is a negative cycle in $D$ on the $k = n$ iteration of Algorithm~\ref{alg:failure-pricing}. By the correctness of \textsc{GetNegativeCycles}, if there is a negative cycle in the graph, \textsc{GetNegativeCycles}$(D, n, w_n)$ will return at least one negative cycle of length at most $n$.

Let $c'$ be a returned cycle. Since $c'$ is negative in $D$ on the $k = n$ iteration, we have $\sum_{j\in c'} \delta_j - p^n\sum_{(i,j)\in c'}w_{ij} < 0$. Therefore $p^n\sum_{(i,j)\in c'}  w_{ij}- \sum_{j\in c'}\delta_j > 0$.

Since $|c'| \leq n$ by the correctness of \textsc{GetNegativeCycles}, we have $p^{|c'|} \geq p^n$. Because all arc weights in the original graph are nonnegative, $\sum_{(i,j)\in c'} w_{ij} \geq 0$. Therefore $p^{|c'|}\sum_{(i,j)\in c'} w_{ij} \geq p^n\sum_{(i,j)\in c'} w_{ij}$. Then $p^{|c'|}\sum_{(i,j)\in c'}  w_{ij}- \sum_{j\in c'}\delta_j \geq p^n\sum_{(i,j)\in c'}  w_{ij}- \sum_{j\in c'}\delta_j > 0$, so $c'$ is indeed discounted positive price.

Therefore Algorithm~\ref{alg:failure-pricing} returns at least one discounted positive price cycle.
\end{proof}

\subsection{Insufficiency of previous algorithms for the failure-aware pricing problem}\label{app:failure}

The pricing problem in the deterministic context, where post-match failures are not considered, is known to be solvable in polynomial time for cycles~\cite{Glorie14:Kidney,Plaut16:Fast} but not chains~\cite{Plaut16:Hardness}.  In Section~\ref{ssec:failure-pricing}, we presented Algorithm~\ref{alg:failure-pricing}, a polynomial-time algorithm for the pricing problem for cycles in the failure-aware context, for uniform success probability. In this section, we show how the basic algorithm for the deterministic pricing problem is not sufficient for the failure-aware context.

The algorithm for the deterministic setting initiates a Bellman-Ford style search to find negative cycles. Bellman-Ford is run $P$ times: on each iteration a different vertex representing a donor-patient pair is the source. After Bellman-Ford has been run from the source $s$ for $K - 1$ steps, suppose there is a path $\rho$ from $s$ to some vertex $v$ with weight $w(\rho)$, and there is an arc from $v$ back to $s$ with weight $w_{vs}$. If $w(\rho) + w_{vs} < 0$, then $\rho \cup (v,s)$ is a negative cycle \cite{Glorie14:Kidney,Plaut16:Fast}.

For consistency, in this section we discuss finding discounted \emph{negative} price cycles, which is trivially equivalent to finding discounted positive price cycles by reversing the signs on all arc weights and dual values. Therefore, we are looking for cycles $c$ satisfying $\sum_{j\in c} \delta_j - p^n\sum_{(i,j)\in c}w_{ij} < 0$.

Consider a straightforward modification to the algorithm from the deterministic setting, where each path now separately remembers its accumulated sum of dual values, sum of arc weights, and length. All of these can be easily recorded during the Bellman-Ford update step without altering the algorithm's complexity.

The issue arises when comparing paths. Figure~\ref{fig:failure1} gives an example of this. Consider running Bellman-Ford with $s$ as the source and $K=3$. The path $(s, v_2, v_3)$ is preferable to $(s, v_1, v_3)$, since we will end with the $3$-cycle $\langle s, v_2, v_3\rangle$ which has weight $p^3(\frac{-\eta}{p^3}) + \eta - 1 = -1$. However, suppose $K = 4$, and we removed the arc $(v_3, s)$. Then $\langle s, v_2, v_3\rangle$ is no longer a cycle, and the path $(s, v_1, v_3, v_4)$ will have weight $p^4(\frac{\eta}{p^3} - 1) + \eta -1 = \eta -p\eta - p^4 - 1 > 0$, assuming $\eta$ is large and $p$ is not close to $1$. However, the path $(s, v_1, v_3, v_4)$ would lead to a discounted negative cycle with weight $-p^4$. The algorithm from the deterministic setting cannot compare two paths without knowing the final cycle length.

\begin{figure}[ht!bp]
\centering
\resizebox{!}{2 in}{ 
\begin{tikzpicture}[->,>=stealth',shorten >=1pt,auto,node distance=2.5cm,
  thick,main node/.style={circle,fill=blue!20,minimum size=12mm,draw,font=\sffamily\Large\bfseries}]

  \node[main node] (1) {$s$};
  \node[main node] (2) [above right of=1] [label=above:{$0$}] {$v_1$};
  \node[main node] (3) [below right of=1] [label=above:{$\eta - 1$}] {$v_2$};
  \node [main node] (4) [below right of=2] [label=above:{$0$}] {$v_3$};
  \node[main node] (5) [above right of=2] [label=above:{$0$}] {$v_4$};

  \path[every node/.style={font=\sffamily\small}]
    (1) edge [left]  node[above] {$0$} (2)
         edge [right] node[below left] {$-\eta/p^3$} (3)
    (2)  edge[left] node[below] {$0$} (4)
    (3) edge [left] node[above] {$0$} (4) 
    (4) edge[left] node[above] {$0$} (1)
         edge[bend right] node[right] {$0$} (5)
    (5) edge [bend right=45] node[above left] {$-1$} (1)
      ;
\end{tikzpicture}
}
\caption{Example demonstrating that multiple possible final path lengths must be considered.}\label{fig:failure1}
\end{figure}
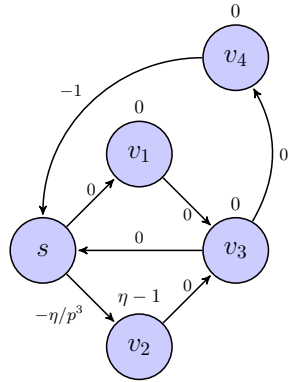

\section{Tabulated experimental results}\label{app:tables}
In this section, we restate the experimental results shown in Figures~\ref{fig:real-data} and~\ref{fig:increasing-v-increasing-a} of Section~\ref{sec:experiments} in the body of the paper, but now including statistics that were not possible to show in that figure.

\begin{landscape}
  {\small%

    \thispagestyle{plain}\begin{table}[ht!bp]
    \centering
    \caption{\\UNOS~match~runs.}
    \begin{tabular}
    {c r | r | r | r | r | r | r | r | r | r | r | r}
Method & & $L = 2$ & $L = 3$ & $L = 4$ & $L = 5$ & $L = 6$ & $L = 7$ & $L = 8$ & $L = 9$ & $L = 10$ & $L = 11$ & $L = 12$\\
\hline\\\\
 \textsc{PICEF} &  Mean & $0.45$& $0.46$& $0.51$& $0.58$& $0.63$& $0.65$& $0.71$& $0.79$& $0.85$& $0.93$& $0.98$\\
  &  Stdev & $0.06$& $0.10$& $0.12$& $0.17$& $0.25$& $0.28$& $0.32$& $0.44$& $0.48$& $0.63$& $0.62$\\
  &  Min & $0.31$& $0.31$& $0.31$& $0.31$& $0.31$& $0.31$& $0.31$& $0.31$& $0.31$& $0.31$& $0.31$\\
  &  Max & $0.62$& $0.97$& $1.02$& $1.42$& $1.72$& $2.12$& $2.22$& $3.12$& $3.52$& $4.82$& $3.62$\\
\hline\\\\
 \textsc{HPIEF} &  Mean & $0.98$& $1.11$& $1.06$& $1.22$& $1.17$& $1.27$& $1.38$& $1.37$& $1.44$& $1.50$& $1.56$\\
  &  Stdev & $0.39$& $0.42$& $0.44$& $0.49$& $0.53$& $0.52$& $0.59$& $0.55$& $0.61$& $0.67$& $0.73$\\
  &  Min & $0.31$& $0.31$& $0.31$& $0.31$& $0.31$& $0.31$& $0.31$& $0.31$& $0.31$& $0.31$& $0.31$\\
  &  Max & $1.87$& $1.97$& $2.37$& $2.57$& $3.27$& $2.62$& $3.58$& $3.07$& $3.42$& $4.07$& $4.32$\\
\hline\\\\
 \textsc{BnP-Poly} &  Mean & $0.18$& $0.19$& $0.19$& $0.20$& $0.21$& $0.25$& $0.47$& $0.73$& $1.37$& $0.35$& $2.99$\\
  &  Stdev & $0.06$& $0.07$& $0.08$& $0.09$& $0.10$& $0.56$& $3.78$& $8.25$& $11.60$& $1.52$& $45.28$\\
  &  Min & $0.03$& $0.03$& $0.03$& $0.03$& $0.03$& $0.03$& $0.03$& $0.03$& $0.03$& $0.03$& $0.03$\\
  &  Max & $0.42$& $0.42$& $0.57$& $0.57$& $0.67$& $9.33$& $63.22$& $138.46$& $154.74$& $24.06$& $759.13$\\
\hline\\\\
 \textsc{CG-TSP} &  Mean & $3.43$& $3.54$& $6.81$& $14.48$& $19.13$& $30.91$& $27.03$& $30.79$& $33.17$& $29.47$& $29.22$\\
  &  Stdev & $0.46$& $1.04$& $30.62$& $116.81$& $215.58$& $303.41$& $249.35$& $303.51$& $306.18$& $302.89$& $302.87$\\
  &  Min & $1.97$& $1.92$& $1.92$& $1.92$& $1.92$& $1.92$& $1.92$& $1.97$& $1.92$& $1.92$& $1.92$\\
  &  Max & $4.68$& $13.34$& $410.07$& $1401.87$& $3600.09$& $3600.10$& $3600.08$& $3600.08$& $3600.13$& $3600.08$& $3600.04$\\
\hline\\\\
 \textsc{BnP-DFS} &  Mean & $0.14$& $0.14$& $0.14$& $0.17$& $0.30$& $1.17$& $26.77$& $32.13$& $58.63$& $78.68$& $120.93$\\
  &  Stdev & $0.04$& $0.04$& $0.06$& $0.16$& $1.01$& $6.17$& $289.50$& $263.92$& $401.73$& $461.29$& $605.34$\\
  &  Min & $0.03$& $0.03$& $0.03$& $0.03$& $0.03$& $0.03$& $0.03$& $0.03$& $0.03$& $0.03$& $0.03$\\
  &  Max & $0.27$& $0.32$& $0.52$& $1.97$& $14.24$& $72.14$& $3600.00$& $3600.00$& $3600.00$& $3600.00$& $3600.00$\\
\hline\\\\
 \textsc{BnP-PICEF} &  Mean & $0.25$& $0.40$& $0.55$& $0.77$& $1.05$& $1.52$& $2.55$& $3.10$& $3.84$& $5.98$& $8.16$\\
  &  Stdev & $0.08$& $0.14$& $0.19$& $0.32$& $0.43$& $1.40$& $10.99$& $11.09$& $8.52$& $21.45$& $36.19$\\
  &  Min & $0.03$& $0.06$& $0.07$& $0.06$& $0.07$& $0.12$& $0.12$& $0.11$& $0.11$& $0.11$& $0.17$\\
  &  Max & $0.42$& $0.77$& $0.97$& $2.77$& $2.42$& $18.20$& $185.24$& $186.95$& $135.21$& $283.75$& $550.37$\\
\hline\\\\
\end{tabular}
    \label{tbl:unos-data}
    \end{table}

    \clearpage\thispagestyle{plain}\begin{table}[ht!bp]
    \centering
    \caption{\\NLDKSS~match~runs.}
    \begin{tabular}
    {c r | r | r | r | r | r | r | r | r | r | r | r}
Method & & $L = 2$ & $L = 3$ & $L = 4$ & $L = 5$ & $L = 6$ & $L = 7$ & $L = 8$ & $L = 9$ & $L = 10$ & $L = 11$ & $L = 12$\\
\hline\\\\
 \textsc{PICEF} &  Mean & $0.12$& $0.15$& $0.18$& $0.21$& $0.27$& $0.32$& $0.33$& $0.36$& $0.43$& $0.45$& $0.48$\\
  &  Stdev & $0.02$& $0.03$& $0.04$& $0.05$& $0.10$& $0.21$& $0.12$& $0.16$& $0.22$& $0.19$& $0.18$\\
  &  Min & $0.09$& $0.09$& $0.10$& $0.12$& $0.13$& $0.14$& $0.16$& $0.17$& $0.18$& $0.17$& $0.19$\\
  &  Max & $0.17$& $0.24$& $0.28$& $0.37$& $0.54$& $1.13$& $0.69$& $0.92$& $1.21$& $1.05$& $1.00$\\
\hline\\\\
 \textsc{HPIEF} &  Mean & $0.26$& $0.27$& $0.28$& $0.31$& $0.34$& $0.38$& $0.43$& $0.45$& $0.49$& $0.54$& $0.57$\\
  &  Stdev & $0.04$& $0.05$& $0.06$& $0.07$& $0.09$& $0.15$& $0.20$& $0.17$& $0.20$& $0.26$& $0.25$\\
  &  Min & $0.19$& $0.19$& $0.20$& $0.20$& $0.21$& $0.21$& $0.22$& $0.22$& $0.24$& $0.24$& $0.26$\\
  &  Max & $0.35$& $0.40$& $0.43$& $0.49$& $0.57$& $0.93$& $1.13$& $0.97$& $1.12$& $1.29$& $1.40$\\
\hline\\\\
 \textsc{BnP-Poly} &  Mean & $0.16$& $0.23$& $0.18$& $92.63$& $0.82$& $0.24$& $0.64$& $0.84$& $0.85$& $212.58$& $216.34$\\
  &  Stdev & $0.06$& $0.27$& $0.07$& $369.70$& $2.00$& $0.15$& $1.11$& $1.20$& $1.45$& $846.86$& $846.07$\\
  &  Min & $0.07$& $0.07$& $0.06$& $0.05$& $0.05$& $0.04$& $0.10$& $0.09$& $0.10$& $0.11$& $0.10$\\
  &  Max & $0.28$& $1.27$& $0.34$& $1571.45$& $8.57$& $0.72$& $4.85$& $4.25$& $5.10$& $3600.00$& $3600.00$\\
\hline\\\\
 \textsc{CG-TSP} &  Mean & $0.95$& $41.01$& $753.96$& $650.07$& $255.26$& $256.39$& $260.80$& $129.00$& $221.10$& $229.37$& $4.69$\\
  &  Stdev & $0.42$& $141.39$& $1349.06$& $1167.96$& $846.22$& $845.67$& $844.40$& $418.79$& $845.20$& $843.49$& $9.84$\\
  &  Min & $0.47$& $0.58$& $0.56$& $0.54$& $0.58$& $0.51$& $0.52$& $0.47$& $0.54$& $0.46$& $0.50$\\
  &  Max & $2.12$& $604.14$& $3600.00$& $3600.00$& $3600.00$& $3600.00$& $3600.00$& $1780.46$& $3600.00$& $3600.00$& $42.72$\\
\hline\\\\
 \textsc{BnP-DFS} &  Mean & $0.13$& $0.18$& $0.78$& $170.12$& $238.76$& $329.46$& $1090.72$& $1473.75$& $1888.86$& $2386.44$& $2506.19$\\
  &  Stdev & $0.08$& $0.16$& $1.37$& $653.91$& $841.62$& $842.37$& $1430.36$& $1540.36$& $1649.63$& $1619.09$& $1540.93$\\
  &  Min & $0.03$& $0.03$& $0.03$& $0.03$& $0.03$& $0.04$& $0.04$& $0.04$& $0.04$& $0.04$& $0.04$\\
  &  Max & $0.36$& $0.71$& $5.83$& $2784.96$& $3600.00$& $3600.00$& $3600.00$& $3600.00$& $3600.00$& $3600.00$& $3600.00$\\
\hline\\\\
 \textsc{BnP-PICEF} &  Mean & $0.19$& $0.23$& $0.30$& $233.49$& $53.32$& $0.83$& $215.66$& $5.25$& $3.18$& $4.89$& $19.68$\\
  &  Stdev & $0.12$& $0.08$& $0.14$& $845.94$& $210.51$& $0.76$& $846.14$& $9.95$& $5.49$& $10.34$& $57.00$\\
  &  Min & $0.09$& $0.09$& $0.11$& $0.23$& $0.27$& $0.33$& $0.41$& $0.47$& $0.41$& $0.53$& $0.65$\\
  &  Max & $0.61$& $0.42$& $0.71$& $3600.00$& $895.37$& $3.65$& $3600.00$& $40.09$& $23.65$& $43.03$& $245.17$\\
\hline\\\\
\end{tabular}
    \label{tbl:uk-data}
    \end{table}

    \clearpage\thispagestyle{plain}\begin{table}[ht!bp]
\centering
\caption{$|P|~=~300,~|A|~=~3$}
\begin{tabular}
{ c r | r  | r  | r  | r  | r  | r  | r  | r  | r  | r  | r }
Method & & $L = 2$ & $L = 3$ & $L = 4$ & $L = 5$ & $L = 6$ & $L = 7$ & $L = 8$ & $L = 9$ & $L = 10$ & $L = 11$ & $L = 12$\\
\hline\\\\
$\textsc{PICEF}$ & Mean & $1.00$& $1.40$& $2.04$& $2.89$& $4.06$& $5.63$& $7.73$& $10.99$& $14.52$& $17.87$& $20.18$\\
& Stdev & $0.12$& $0.32$& $0.51$& $0.77$& $0.92$& $1.28$& $2.23$& $3.13$& $4.54$& $5.90$& $6.89$\\
& Min& $0.77$& $0.94$& $1.27$& $1.87$& $2.69$& $3.60$& $5.05$& $6.30$& $6.78$& $9.33$& $10.28$\\
 & Max& $1.37$& $2.42$& $3.80$& $5.42$& $6.76$& $8.24$& $13.00$& $18.33$& $25.50$& $33.84$& $36.46$\\
\hline\\\\
$\textsc{HPIEF}$ & Mean & $3.04$& $3.47$& $4.16$& $5.14$& $6.52$& $8.12$& $10.40$& $13.67$& $16.74$& $20.59$& $22.29$\\
& Stdev & $0.51$& $0.79$& $0.85$& $1.27$& $1.98$& $2.26$& $2.88$& $3.67$& $3.44$& $5.33$& $6.07$\\
& Min& $2.44$& $2.52$& $2.87$& $3.40$& $4.35$& $5.17$& $6.96$& $8.83$& $12.16$& $9.93$& $10.11$\\
 & Max& $4.10$& $6.35$& $5.85$& $8.35$& $12.59$& $14.49$& $17.92$& $22.78$& $24.24$& $31.85$& $37.55$\\
\hline\\\\
$\textsc{BnP-Poly}$ & Mean & $2.08$& $81.96$& $238.82$& $306.05$& $631.69$& $1074.37$& $1000.78$& $2300.22$& $1328.40$& $2076.98$& $2515.60$\\
& Stdev & $5.30$& $353.34$& $581.46$& $654.90$& $928.03$& $931.64$& $1080.38$& $1060.62$& $1174.75$& $1072.71$& $1347.17$\\
& Min& $0.54$& $0.57$& $0.62$& $0.72$& $0.89$& $1.04$& $0.87$& $4.68$& $1.64$& $2.50$& $1.87$\\
 & Max& $26.56$& $1800.68$& $1801.42$& $1801.40$& $2890.70$& $3600.61$& $3600.72$& $3601.89$& $3600.86$& $3601.85$& $3602.00$\\
\hline\\\\
$\textsc{CG-TSP}$ & Mean & $8.44$& $184.96$& $918.97$& $1221.19$& $1709.03$& $1682.64$& $1632.00$& $1908.47$& $1839.85$& $1993.84$& $1903.64$\\
& Stdev & $1.04$& $403.94$& $913.85$& $1031.16$& $1316.14$& $1132.89$& $1075.43$& $1206.85$& $1128.78$& $1173.08$& $1062.16$\\
& Min& $6.48$& $9.07$& $12.71$& $12.07$& $18.49$& $22.77$& $35.26$& $131.48$& $86.87$& $21.13$& $28.67$\\
 & Max& $10.62$& $1804.94$& $2732.26$& $3600.07$& $3600.10$& $3600.09$& $3600.08$& $3600.11$& $3600.10$& $3600.10$& $3600.16$\\
\hline\\\\
$\textsc{BnP-DFS}$ & Mean & $14.83$& $17.13$& $383.40$& $694.57$& $2644.81$& $3494.87$& $3600.00$& $3600.00$& $3600.00$& $3600.00$& $3600.00$\\
& Stdev & $69.43$& $71.79$& $714.88$& $793.47$& $1026.37$& $285.67$& $0.00$& $0.00$& $0.00$& $0.00$& $0.00$\\
& Min& $0.49$& $0.59$& $1.34$& $14.21$& $284.94$& $2632.71$& $3600.00$& $3600.00$& $3600.00$& $3600.00$& $3600.00$\\
 & Max& $354.99$& $367.77$& $1816.74$& $2166.63$& $3600.00$& $3600.00$& $3600.00$& $3600.00$& $3600.00$& $3600.00$& $3600.00$\\
\hline\\\\
$\textsc{BnP-PICEF}$ & Mean & $2.39$& $14.77$& $110.11$& $210.89$& $635.14$& $1088.77$& $1561.96$& $2420.79$& $1693.39$& $2514.56$& $2791.20$\\
& Stdev & $5.18$& $63.60$& $330.96$& $444.70$& $827.57$& $869.08$& $1061.65$& $877.50$& $1312.65$& $1029.71$& $993.59$\\
& Min& $0.72$& $1.17$& $1.82$& $3.10$& $5.65$& $12.17$& $14.63$& $538.04$& $37.81$& $48.04$& $67.38$\\
 & Max& $24.74$& $326.27$& $1451.10$& $1639.32$& $3043.01$& $3088.56$& $3601.15$& $3600.78$& $3600.65$& $3600.52$& $3600.42$\\
\hline\\\\
\end{tabular}

\label{tbl:data}
\end{table}
    
    \clearpage\thispagestyle{plain}\begin{table}[ht!bp]
\centering
\caption{$|P|~=~300,~|A|~=~6$}
\begin{tabular}
{ c r | r  | r  | r  | r  | r  | r  | r  | r  | r  | r  | r }
Method & & $L = 2$ & $L = 3$ & $L = 4$ & $L = 5$ & $L = 6$ & $L = 7$ & $L = 8$ & $L = 9$ & $L = 10$ & $L = 11$ & $L = 12$\\
\hline\\\\
$\textsc{PICEF}$ & Mean & $1.45$& $2.14$& $2.96$& $4.36$& $5.86$& $8.03$& $10.02$& $12.69$& $17.82$& $23.35$& $26.19$\\
& Stdev & $0.53$& $0.92$& $1.04$& $1.88$& $2.60$& $3.31$& $4.48$& $6.06$& $10.81$& $14.57$& $17.75$\\
& Min& $0.92$& $1.24$& $1.89$& $2.49$& $3.45$& $4.47$& $4.95$& $6.45$& $8.45$& $9.83$& $12.49$\\
 & Max& $3.13$& $5.76$& $5.93$& $10.29$& $12.67$& $17.20$& $23.42$& $29.72$& $58.28$& $66.11$& $96.83$\\
\hline\\\\
$\textsc{HPIEF}$ & Mean & $3.22$& $3.78$& $4.84$& $6.44$& $8.78$& $13.51$& $19.45$& $25.49$& $53.11$& $58.12$& $55.49$\\
& Stdev & $0.56$& $0.67$& $1.17$& $2.13$& $3.38$& $8.70$& $14.74$& $12.80$& $43.89$& $44.44$& $29.12$\\
& Min& $2.49$& $2.92$& $3.42$& $4.00$& $4.80$& $5.22$& $6.60$& $6.90$& $11.34$& $11.09$& $16.52$\\
 & Max& $4.55$& $5.63$& $7.88$& $11.71$& $16.27$& $42.34$& $81.69$& $55.29$& $236.82$& $197.45$& $110.75$\\
\hline\\\\
$\textsc{BnP-Poly}$ & Mean & $73.01$& $222.84$& $815.13$& $1168.37$& $1902.71$& $2611.16$& $2513.36$& $2346.77$& $2470.76$& $2264.23$& $2140.75$\\
& Stdev & $352.77$& $583.10$& $1171.55$& $1205.56$& $1263.29$& $1166.98$& $1282.07$& $1173.08$& $1073.70$& $846.97$& $1011.70$\\
& Min& $0.57$& $0.64$& $0.74$& $0.89$& $1.09$& $2.14$& $2.44$& $12.42$& $25.30$& $264.09$& $94.33$\\
 & Max& $1801.19$& $1801.23$& $3602.15$& $3601.76$& $3601.45$& $3601.79$& $3601.75$& $3601.90$& $3601.85$& $3601.72$& $3600.67$\\
\hline\\\\
$\textsc{CG-TSP}$ & Mean & $63.67$& $1783.81$& $2825.67$& $3274.28$& $3469.97$& $3528.59$& $3507.68$& $3493.12$& $3332.71$& $3320.82$& $3103.94$\\
& Stdev & $249.48$& $1021.00$& $835.84$& $599.43$& $443.61$& $350.18$& $338.77$& $344.12$& $613.11$& $577.68$& $820.61$\\
& Min& $8.01$& $32.68$& $1437.28$& $1811.15$& $1810.83$& $1813.06$& $1989.48$& $2012.35$& $1823.48$& $1821.29$& $975.77$\\
 & Max& $1285.66$& $3600.08$& $3600.09$& $3600.10$& $3600.11$& $3600.09$& $3600.08$& $3600.09$& $3600.10$& $3600.09$& $3600.11$\\
\hline\\\\
$\textsc{BnP-DFS}$ & Mean & $2.34$& $440.67$& $974.91$& $2324.52$& $3540.79$& $3600.00$& $3600.00$& $3600.00$& $3600.00$& $3600.00$& $3600.00$\\
& Stdev & $4.92$& $765.51$& $1247.14$& $1097.18$& $290.09$& $0.00$& $0.00$& $0.00$& $0.00$& $0.00$& $0.00$\\
& Min& $0.54$& $1.09$& $12.73$& $349.24$& $2119.63$& $3600.00$& $3600.00$& $3600.00$& $3600.00$& $3600.00$& $3600.00$\\
 & Max& $22.03$& $1805.45$& $3600.25$& $3600.02$& $3600.01$& $3600.00$& $3600.00$& $3600.00$& $3600.00$& $3600.00$& $3600.00$\\
\hline\\\\
$\textsc{BnP-PICEF}$ & Mean & $2.09$& $124.04$& $871.35$& $1533.14$& $1902.03$& $2567.05$& $2786.84$& $2403.03$& $2456.11$& $2041.20$& $1273.52$\\
& Stdev & $2.80$& $380.96$& $1139.42$& $1238.55$& $1272.68$& $1220.26$& $1066.82$& $1193.46$& $1212.64$& $952.01$& $1046.65$\\
& Min& $0.79$& $1.19$& $1.97$& $3.65$& $7.76$& $14.43$& $21.98$& $43.21$& $78.77$& $112.14$& $168.46$\\
 & Max& $12.41$& $1600.58$& $3262.90$& $3580.66$& $3602.58$& $3601.76$& $3600.80$& $3600.62$& $3600.44$& $3600.37$& $3600.24$\\
\hline\\\\
\end{tabular}

\label{tbl:data}
\end{table}

    \clearpage\thispagestyle{plain}\begin{table}[ht!bp]
\centering
\caption{$|P|~=~300,~|A|~=~15$}
\begin{tabular}
{ c r | r  | r  | r  | r  | r  | r  | r  | r  | r  | r  | r }
Method & & $L = 2$ & $L = 3$ & $L = 4$ & $L = 5$ & $L = 6$ & $L = 7$ & $L = 8$ & $L = 9$ & $L = 10$ & $L = 11$ & $L = 12$\\
\hline\\\\
$\textsc{PICEF}$ & Mean & $1.36$& $2.00$& $2.93$& $5.06$& $7.44$& $14.52$& $17.20$& $21.79$& $24.90$& $31.46$& $29.79$\\
& Stdev & $0.44$& $0.64$& $0.72$& $2.27$& $3.94$& $20.99$& $13.30$& $7.76$& $8.87$& $13.39$& $6.64$\\
& Min& $1.04$& $1.49$& $2.17$& $2.79$& $3.87$& $4.50$& $5.75$& $9.18$& $11.01$& $15.21$& $18.42$\\
 & Max& $3.26$& $4.15$& $5.32$& $12.68$& $21.23$& $114.52$& $77.98$& $38.47$& $58.61$& $85.77$& $47.16$\\
\hline\\\\
$\textsc{HPIEF}$ & Mean & $3.11$& $3.85$& $4.72$& $6.33$& $10.18$& $13.13$& $18.62$& $24.36$& $29.43$& $31.45$& $32.13$\\
& Stdev & $1.02$& $1.57$& $2.09$& $3.26$& $5.96$& $8.86$& $13.97$& $16.56$& $16.73$& $17.77$& $17.75$\\
& Min& $1.32$& $1.32$& $1.32$& $1.32$& $1.32$& $1.32$& $1.32$& $1.32$& $1.32$& $1.32$& $1.32$\\
 & Max& $4.85$& $7.20$& $9.52$& $13.59$& $20.95$& $35.23$& $68.89$& $73.93$& $67.31$& $58.72$& $66.14$\\
\hline\\\\
$\textsc{BnP-Poly}$ & Mean & $74.83$& $611.66$& $991.62$& $1663.90$& $1419.18$& $632.51$& $283.39$& $366.94$& $475.84$& $1038.21$& $1230.37$\\
& Stdev & $353.01$& $867.89$& $872.00$& $1390.32$& $1210.69$& $837.82$& $770.75$& $616.38$& $623.12$& $1099.15$& $1025.71$\\
& Min& $0.74$& $0.89$& $1.07$& $1.57$& $0.39$& $0.37$& $0.37$& $0.37$& $0.37$& $0.37$& $0.37$\\
 & Max& $1804.07$& $2318.25$& $1805.02$& $3601.10$& $3600.77$& $1817.17$& $3600.05$& $2021.28$& $2109.74$& $3600.14$& $3600.01$\\
\hline\\\\
$\textsc{CG-TSP}$ & Mean & $367.16$& $3515.05$& $3600.09$& $3600.09$& $3535.69$& $3529.94$& $3536.72$& $3430.70$& $3461.73$& $3279.96$& $3220.04$\\
& Stdev & $640.14$& $346.51$& $0.04$& $0.03$& $315.51$& $343.66$& $310.45$& $462.86$& $469.31$& $642.91$& $613.91$\\
& Min& $16.67$& $1857.78$& $3600.06$& $3600.06$& $1990.03$& $1846.38$& $2015.84$& $1948.32$& $1833.44$& $1838.52$& $1968.78$\\
 & Max& $1822.65$& $3600.17$& $3600.20$& $3600.16$& $3600.25$& $3600.21$& $3600.20$& $3600.18$& $3600.15$& $3600.24$& $3600.16$\\
\hline\\\\
$\textsc{BnP-DFS}$ & Mean & $141.93$& $683.49$& $1274.19$& $3401.45$& $3600.00$& $3600.00$& $3600.00$& $3600.00$& $3600.00$& $3600.00$& $3600.00$\\
& Stdev & $476.11$& $850.91$& $966.39$& $410.86$& $0.00$& $0.00$& $0.00$& $0.00$& $0.00$& $0.00$& $0.00$\\
& Min& $0.74$& $3.95$& $81.43$& $2091.66$& $3600.00$& $3600.00$& $3600.00$& $3600.00$& $3600.00$& $3600.00$& $3600.00$\\
 & Max& $1801.39$& $1842.76$& $3600.22$& $3600.04$& $3600.00$& $3600.00$& $3600.00$& $3600.00$& $3600.00$& $3600.00$& $3600.00$\\
\hline\\\\
$\textsc{BnP-PICEF}$ & Mean & $149.79$& $445.86$& $856.61$& $1712.61$& $1095.22$& $535.43$& $201.36$& $178.40$& $162.61$& $159.80$& $218.11$\\
& Stdev & $487.53$& $754.76$& $854.16$& $1192.86$& $1126.06$& $803.12$& $479.10$& $343.89$& $83.78$& $70.62$& $111.01$\\
& Min& $0.87$& $1.42$& $2.62$& $4.75$& $9.64$& $18.01$& $28.09$& $48.20$& $70.51$& $53.85$& $79.26$\\
 & Max& $1802.03$& $1803.15$& $1830.94$& $3602.67$& $3601.98$& $1867.16$& $1832.55$& $1843.38$& $404.82$& $339.64$& $523.09$\\
\hline\\\\
\end{tabular}

\label{tbl:data}
\end{table}

    \clearpage\thispagestyle{plain}\begin{table}[ht!bp]
\centering
\caption{$|P|~=~300,~|A|~=~75$}
\begin{tabular}
{ c r | r  | r  | r  | r  | r  | r  | r  | r  | r  | r  | r }
Method & & $L = 2$ & $L = 3$ & $L = 4$ & $L = 5$ & $L = 6$ & $L = 7$ & $L = 8$ & $L = 9$ & $L = 10$ & $L = 11$ & $L = 12$\\
\hline\\\\
$\textsc{PICEF}$ & Mean & $2.06$& $2.97$& $4.03$& $5.42$& $6.96$& $8.90$& $10.38$& $12.41$& $13.75$& $15.14$& $16.95$\\
& Stdev & $0.70$& $1.13$& $1.18$& $1.91$& $2.41$& $3.78$& $2.70$& $2.53$& $2.86$& $2.31$& $3.40$\\
& Min& $1.37$& $1.97$& $2.64$& $3.54$& $4.32$& $5.30$& $6.93$& $7.75$& $8.55$& $10.36$& $10.34$\\
 & Max& $4.47$& $7.43$& $8.08$& $13.19$& $16.37$& $25.68$& $17.05$& $18.69$& $23.80$& $19.35$& $26.39$\\
\hline\\\\
$\textsc{HPIEF}$ & Mean & $3.85$& $4.70$& $5.88$& $7.38$& $8.81$& $11.55$& $13.47$& $16.08$& $18.44$& $19.86$& $20.05$\\
& Stdev & $1.32$& $1.38$& $1.74$& $2.46$& $2.75$& $3.77$& $3.80$& $4.08$& $4.46$& $4.44$& $4.78$\\
& Min& $2.72$& $3.17$& $3.80$& $4.50$& $5.15$& $6.00$& $7.18$& $7.78$& $9.63$& $11.01$& $12.66$\\
 & Max& $9.64$& $9.98$& $12.42$& $16.14$& $17.15$& $21.58$& $20.98$& $22.90$& $26.86$& $28.19$& $31.60$\\
\hline\\\\
$\textsc{BnP-Poly}$ & Mean & $2.61$& $3.36$& $4.41$& $4.91$& $5.42$& $6.97$& $9.74$& $20.70$& $67.07$& $155.99$& $223.71$\\
& Stdev & $0.19$& $0.34$& $0.67$& $0.94$& $1.03$& $2.52$& $6.76$& $31.39$& $94.52$& $379.20$& $426.11$\\
& Min& $2.17$& $2.80$& $3.42$& $3.63$& $3.87$& $3.55$& $3.80$& $4.48$& $6.88$& $5.25$& $8.64$\\
 & Max& $3.02$& $4.08$& $5.53$& $7.36$& $8.21$& $14.15$& $33.77$& $166.95$& $398.11$& $1812.22$& $1806.62$\\
\hline\\\\
$\textsc{CG-TSP}$ & Mean & $3168.00$& $3600.18$& $3544.38$& $3451.82$& $3110.08$& $3056.40$& $2575.13$& $2339.98$& $2222.51$& $1524.77$& $1122.56$\\
& Stdev & $828.44$& $0.07$& $191.05$& $409.89$& $539.91$& $560.83$& $671.91$& $799.95$& $929.42$& $1030.21$& $881.64$\\
& Min& $523.02$& $3600.11$& $2812.72$& $2097.94$& $2237.32$& $2001.36$& $1339.74$& $238.59$& $410.76$& $22.79$& $20.75$\\
 & Max& $3600.25$& $3600.36$& $3600.38$& $3600.43$& $3600.41$& $3600.33$& $3600.61$& $3600.32$& $3600.41$& $3600.16$& $2753.27$\\
\hline\\\\
$\textsc{BnP-DFS}$ & Mean & $4.66$& $78.58$& $1984.70$& $3600.01$& $3600.00$& $3600.00$& $3600.00$& $3600.00$& $3600.00$& $3600.00$& $3600.00$\\
& Stdev & $0.99$& $22.89$& $748.50$& $0.00$& $0.00$& $0.00$& $0.00$& $0.00$& $0.00$& $0.00$& $0.00$\\
& Min& $3.15$& $34.57$& $776.95$& $3600.00$& $3600.00$& $3600.00$& $3600.00$& $3600.00$& $3600.00$& $3600.00$& $3600.00$\\
 & Max& $6.70$& $122.86$& $3143.17$& $3600.02$& $3600.01$& $3600.00$& $3600.01$& $3600.01$& $3600.01$& $3600.01$& $3600.00$\\
\hline\\\\
$\textsc{BnP-PICEF}$ & Mean & $1.76$& $2.12$& $3.51$& $5.61$& $7.96$& $12.27$& $14.82$& $18.70$& $21.42$& $27.41$& $33.25$\\
& Stdev & $0.11$& $0.27$& $1.19$& $2.12$& $2.73$& $3.70$& $4.64$& $4.29$& $5.82$& $8.47$& $10.15$\\
& Min& $1.52$& $1.69$& $1.82$& $3.05$& $2.97$& $6.81$& $7.46$& $10.26$& $11.22$& $13.93$& $17.36$\\
 & Max& $2.05$& $3.02$& $5.96$& $10.74$& $16.08$& $19.74$& $27.74$& $28.28$& $37.68$& $43.10$& $55.48$\\
\hline\\\\
\end{tabular}

\label{tbl:data}
\end{table}

    \clearpage\thispagestyle{plain}\begin{table}[ht!bp]
\centering
\caption{$|P|~=~500,~|A|~=~5$}
\begin{tabular}
{ c r | r  | r  | r  | r  | r  | r  | r  | r  | r  | r  | r }
Method & & $L = 2$ & $L = 3$ & $L = 4$ & $L = 5$ & $L = 6$ & $L = 7$ & $L = 8$ & $L = 9$ & $L = 10$ & $L = 11$ & $L = 12$\\
\hline\\\\
$\textsc{PICEF}$ & Mean & $3.33$& $5.69$& $9.68$& $16.06$& $29.55$& $49.75$& $79.13$& $101.60$& $161.25$& $324.80$& $378.28$\\
& Stdev & $0.60$& $1.25$& $2.73$& $4.38$& $10.00$& $16.02$& $23.53$& $32.33$& $78.74$& $309.10$& $405.84$\\
& Min& $2.67$& $4.17$& $6.35$& $9.61$& $12.19$& $17.20$& $21.20$& $24.36$& $36.23$& $41.63$& $50.26$\\
 & Max& $4.92$& $8.58$& $17.21$& $24.63$& $50.08$& $70.96$& $126.06$& $160.64$& $345.49$& $1333.83$& $1864.47$\\
\hline\\\\
$\textsc{HPIEF}$ & Mean & $13.14$& $15.33$& $17.49$& $20.30$& $24.34$& $30.93$& $40.59$& $50.92$& $102.03$& $187.22$& $246.18$\\
& Stdev & $2.89$& $3.46$& $3.24$& $3.55$& $3.62$& $6.46$& $10.33$& $19.93$& $88.83$& $175.34$& $310.59$\\
& Min& $9.78$& $11.01$& $12.53$& $14.49$& $17.72$& $19.15$& $22.78$& $30.32$& $42.70$& $34.58$& $36.96$\\
 & Max& $23.73$& $24.60$& $24.29$& $26.37$& $30.25$& $49.24$& $73.30$& $135.55$& $475.85$& $831.95$& $1494.91$\\
\hline\\\\
$\textsc{BnP-Poly}$ & Mean & $476.19$& $398.72$& $987.34$& $1995.51$& $1882.08$& $2376.94$& $2595.14$& $3027.64$& $3313.34$& $3529.60$& $3600.32$\\
& Stdev & $924.35$& $875.52$& $1015.36$& $1191.54$& $1290.49$& $1216.20$& $1025.58$& $974.20$& $657.83$& $346.79$& $0.08$\\
& Min& $2.07$& $2.24$& $2.52$& $3.82$& $4.10$& $5.00$& $10.44$& $53.25$& $1804.95$& $1830.66$& $3600.12$\\
 & Max& $3600.11$& $3600.10$& $3600.21$& $3600.42$& $3600.36$& $3600.61$& $3600.61$& $3600.67$& $3600.57$& $3600.50$& $3600.45$\\
\hline\\\\
$\textsc{CG-TSP}$ & Mean & $86.08$& $1519.67$& $2888.52$& $3307.18$& $3500.92$& $3542.08$& $3600.09$& $3600.09$& $3600.09$& $3600.09$& $3600.10$\\
& Stdev & $163.39$& $1194.67$& $827.72$& $675.52$& $356.92$& $267.64$& $0.01$& $0.01$& $0.01$& $0.01$& $0.02$\\
& Min& $17.80$& $43.13$& $1528.27$& $1445.40$& $1937.79$& $2233.38$& $3600.07$& $3600.07$& $3600.07$& $3600.07$& $3600.07$\\
 & Max& $870.78$& $3600.22$& $3600.18$& $3600.18$& $3600.15$& $3600.13$& $3600.13$& $3600.13$& $3600.14$& $3600.14$& $3600.16$\\
\hline\\\\
$\textsc{BnP-DFS}$ & Mean & $584.00$& $609.09$& $1578.12$& $3595.49$& $3600.00$& $3600.00$& $3600.00$& $3600.00$& $3600.00$& $3600.00$& $3600.00$\\
& Stdev & $978.95$& $969.92$& $962.98$& $22.13$& $0.00$& $0.00$& $0.00$& $0.00$& $0.00$& $0.00$& $0.00$\\
& Min& $3.02$& $8.48$& $220.63$& $3487.09$& $3600.00$& $3600.00$& $3600.00$& $3600.00$& $3600.00$& $3600.00$& $3600.00$\\
 & Max& $3600.25$& $3600.31$& $3600.07$& $3600.01$& $3600.00$& $3600.00$& $3600.00$& $3600.00$& $3600.00$& $3600.00$& $3600.00$\\
\hline\\\\
$\textsc{BnP-PICEF}$ & Mean & $489.87$& $408.98$& $1033.60$& $1748.72$& $1900.07$& $2549.32$& $2856.84$& $3275.68$& $3360.62$& $3546.67$& $3600.09$\\
& Stdev & $915.90$& $712.01$& $1148.29$& $1281.58$& $1060.35$& $1105.73$& $841.48$& $652.43$& $561.54$& $261.92$& $0.04$\\
& Min& $3.05$& $4.65$& $8.98$& $22.14$& $47.29$& $98.88$& $1832.39$& $1868.01$& $1892.93$& $2263.54$& $3600.02$\\
 & Max& $3601.12$& $1805.88$& $3601.60$& $3602.22$& $3600.93$& $3600.99$& $3600.45$& $3600.39$& $3600.36$& $3600.21$& $3600.16$\\
\hline\\\\
\end{tabular}

\label{tbl:data}
\end{table}

    \clearpage\thispagestyle{plain}\begin{table}[ht!bp]
\centering
\caption{$|P|~=~500,~|A|~=~10$}
\begin{tabular}
{ c r | r  | r  | r  | r  | r  | r  | r  | r  | r  | r  | r }
Method & & $L = 2$ & $L = 3$ & $L = 4$ & $L = 5$ & $L = 6$ & $L = 7$ & $L = 8$ & $L = 9$ & $L = 10$ & $L = 11$ & $L = 12$\\
\hline\\\\
$\textsc{PICEF}$ & Mean & $3.87$& $6.41$& $10.22$& $16.54$& $26.81$& $54.10$& $111.83$& $321.99$& $286.20$& $392.19$& $323.20$\\
& Stdev & $0.92$& $2.03$& $3.55$& $6.36$& $10.77$& $39.64$& $124.63$& $297.04$& $224.94$& $280.57$& $198.14$\\
& Min& $2.72$& $4.12$& $6.35$& $9.33$& $12.19$& $15.67$& $19.18$& $26.11$& $33.20$& $67.86$& $70.61$\\
 & Max& $6.20$& $13.39$& $21.53$& $33.38$& $55.39$& $190.56$& $518.68$& $1142.68$& $879.51$& $1182.30$& $927.75$\\
\hline\\\\
$\textsc{HPIEF}$ & Mean & $12.79$& $15.37$& $19.41$& $29.69$& $45.92$& $94.99$& $242.75$& $434.08$& $484.99$& $884.31$& $568.17$\\
& Stdev & $1.42$& $2.33$& $3.64$& $7.00$& $18.54$& $72.01$& $268.11$& $309.07$& $327.06$& $695.32$& $469.58$\\
& Min& $11.04$& $12.04$& $14.09$& $16.02$& $18.48$& $21.61$& $28.04$& $31.14$& $36.64$& $127.14$& $66.01$\\
 & Max& $16.11$& $20.78$& $26.40$& $41.07$& $113.75$& $374.04$& $1250.39$& $1096.22$& $1167.39$& $2230.86$& $2155.83$\\
\hline\\\\
$\textsc{BnP-Poly}$ & Mean & $600.79$& $726.93$& $1514.72$& $2378.45$& $2816.66$& $3528.67$& $3034.58$& $3051.69$& $2965.69$& $3229.29$& $3437.41$\\
& Stdev & $969.67$& $877.81$& $1099.85$& $1215.75$& $1132.17$& $351.68$& $1092.85$& $943.16$& $868.24$& $611.16$& $443.14$\\
& Min& $2.27$& $2.90$& $3.67$& $4.28$& $9.79$& $1805.82$& $15.00$& $76.72$& $398.26$& $1948.58$& $2135.48$\\
 & Max& $3600.17$& $1802.41$& $3600.26$& $3600.46$& $3600.59$& $3600.71$& $3600.59$& $3600.52$& $3600.37$& $3600.34$& $3600.25$\\
\hline\\\\
$\textsc{CG-TSP}$ & Mean & $406.15$& $3116.63$& $3600.10$& $3600.11$& $3600.11$& $3600.11$& $3600.10$& $3600.11$& $3600.11$& $3600.11$& $3600.12$\\
& Stdev & $630.53$& $844.15$& $0.02$& $0.02$& $0.02$& $0.04$& $0.01$& $0.03$& $0.02$& $0.04$& $0.07$\\
& Min& $53.28$& $517.29$& $3600.07$& $3600.07$& $3600.07$& $3600.08$& $3600.08$& $3600.08$& $3600.07$& $3600.08$& $3600.07$\\
 & Max& $1839.96$& $3600.18$& $3600.17$& $3600.16$& $3600.17$& $3600.28$& $3600.13$& $3600.24$& $3600.16$& $3600.29$& $3600.40$\\
\hline\\\\
$\textsc{BnP-DFS}$ & Mean & $727.04$& $1249.56$& $2754.81$& $3600.00$& $3600.00$& $3600.00$& $3600.00$& $3600.00$& $3600.00$& $3600.00$& $3600.00$\\
& Stdev & $992.61$& $827.81$& $682.38$& $0.00$& $0.00$& $0.00$& $0.00$& $0.00$& $0.00$& $0.00$& $0.00$\\
& Min& $2.90$& $19.27$& $1571.31$& $3600.00$& $3600.00$& $3600.00$& $3600.00$& $3600.00$& $3600.00$& $3600.00$& $3600.00$\\
 & Max& $3438.77$& $1826.38$& $3600.07$& $3600.00$& $3600.00$& $3600.00$& $3600.00$& $3600.00$& $3600.00$& $3600.00$& $3600.00$\\
\hline\\\\
$\textsc{BnP-PICEF}$ & Mean & $829.37$& $873.28$& $1593.51$& $2389.64$& $2910.67$& $3534.89$& $3056.63$& $2928.76$& $2817.08$& $2619.23$& $2645.57$\\
& Stdev & $1340.08$& $1029.25$& $1054.00$& $1206.53$& $1075.62$& $320.66$& $1044.62$& $932.67$& $988.09$& $763.68$& $660.52$\\
& Min& $3.20$& $6.28$& $11.97$& $26.78$& $90.19$& $1963.99$& $177.12$& $520.29$& $820.70$& $1037.74$& $1193.26$\\
 & Max& $3601.57$& $3601.87$& $3601.30$& $3601.08$& $3600.76$& $3600.50$& $3600.32$& $3600.23$& $3600.18$& $3600.05$& $3600.04$\\
\hline\\\\
\end{tabular}

\label{tbl:data}
\end{table}

    \clearpage\thispagestyle{plain}\begin{table}[ht!bp]
\centering
\caption{$|P|~=~500,~|A|~=~25$}
\begin{tabular}
{ c r | r  | r  | r  | r  | r  | r  | r  | r  | r  | r  | r }
Method & & $L = 2$ & $L = 3$ & $L = 4$ & $L = 5$ & $L = 6$ & $L = 7$ & $L = 8$ & $L = 9$ & $L = 10$ & $L = 11$ & $L = 12$\\
\hline\\\\
$\textsc{PICEF}$ & Mean & $5.13$& $8.81$& $14.79$& $31.28$& $55.40$& $91.01$& $159.51$& $210.90$& $263.30$& $326.07$& $321.93$\\
& Stdev & $3.10$& $4.88$& $7.90$& $16.29$& $31.14$& $40.02$& $74.17$& $106.60$& $110.04$& $128.59$& $115.14$\\
& Min& $3.22$& $5.65$& $8.10$& $14.42$& $18.82$& $28.81$& $48.51$& $51.74$& $74.37$& $81.94$& $90.87$\\
 & Max& $18.24$& $23.86$& $40.36$& $69.45$& $130.15$& $172.46$& $319.40$& $494.03$& $520.89$& $520.57$& $530.59$\\
\hline\\\\
$\textsc{HPIEF}$ & Mean & $13.64$& $16.70$& $25.11$& $50.19$& $64.58$& $88.26$& $125.61$& $171.61$& $223.86$& $231.21$& $207.82$\\
& Stdev & $2.96$& $3.36$& $7.83$& $46.74$& $25.11$& $33.30$& $57.05$& $64.98$& $75.50$& $76.25$& $67.91$\\
& Min& $9.86$& $11.91$& $14.81$& $19.35$& $29.36$& $42.41$& $47.14$& $56.01$& $67.62$& $90.86$& $105.80$\\
 & Max& $21.23$& $24.67$& $44.65$& $229.70$& $122.98$& $155.02$& $248.62$& $286.95$& $357.62$& $369.33$& $312.53$\\
\hline\\\\
$\textsc{BnP-Poly}$ & Mean & $610.17$& $939.96$& $1883.09$& $1242.21$& $621.96$& $428.89$& $618.50$& $896.95$& $1988.17$& $2418.15$& $2784.83$\\
& Stdev & $823.88$& $1032.01$& $1470.65$& $1100.86$& $957.44$& $721.69$& $879.42$& $957.15$& $1234.27$& $1160.63$& $942.70$\\
& Min& $3.12$& $4.00$& $6.65$& $9.96$& $13.62$& $14.65$& $21.21$& $31.33$& $39.06$& $68.33$& $151.56$\\
 & Max& $1802.71$& $3600.11$& $3600.38$& $3600.47$& $3600.23$& $2307.41$& $3600.21$& $3600.13$& $3600.09$& $3600.27$& $3600.16$\\
\hline\\\\
$\textsc{CG-TSP}$ & Mean & $3148.56$& $3600.21$& $3600.23$& $3600.20$& $3600.21$& $3600.22$& $3600.21$& $3600.21$& $3600.20$& $3600.24$& $3600.31$\\
& Stdev & $728.91$& $0.14$& $0.19$& $0.14$& $0.13$& $0.16$& $0.15$& $0.16$& $0.13$& $0.17$& $0.18$\\
& Min& $1009.67$& $3600.10$& $3600.11$& $3600.10$& $3600.12$& $3600.12$& $3600.11$& $3600.12$& $3600.11$& $3600.09$& $3600.04$\\
 & Max& $3600.17$& $3600.64$& $3600.85$& $3600.73$& $3600.63$& $3600.66$& $3600.66$& $3600.68$& $3600.59$& $3600.75$& $3600.67$\\
\hline\\\\
$\textsc{BnP-DFS}$ & Mean & $796.98$& $1108.07$& $3599.33$& $3600.00$& $3600.00$& $3600.00$& $3600.00$& $3600.00$& $3600.00$& $3600.00$& $3600.00$\\
& Stdev & $890.74$& $975.82$& $3.35$& $0.00$& $0.00$& $0.00$& $0.00$& $0.00$& $0.00$& $0.00$& $0.00$\\
& Min& $4.40$& $78.78$& $3582.91$& $3600.00$& $3600.00$& $3600.00$& $3600.00$& $3600.00$& $3600.00$& $3600.00$& $3600.00$\\
 & Max& $1811.33$& $3600.28$& $3600.03$& $3600.01$& $3600.00$& $3600.00$& $3600.00$& $3600.00$& $3600.00$& $3600.00$& $3600.00$\\
\hline\\\\
$\textsc{BnP-PICEF}$ & Mean & $513.65$& $881.36$& $1955.77$& $1417.19$& $557.22$& $454.23$& $677.44$& $1179.01$& $1496.40$& $1578.70$& $1627.19$\\
& Stdev & $804.07$& $1023.76$& $1423.30$& $1143.07$& $686.23$& $201.31$& $347.69$& $493.79$& $579.61$& $534.03$& $515.77$\\
& Min& $3.15$& $6.13$& $20.00$& $47.80$& $105.26$& $207.65$& $292.64$& $654.54$& $583.22$& $720.50$& $798.70$\\
 & Max& $1803.61$& $3601.50$& $3601.39$& $3600.55$& $2011.26$& $1138.30$& $2126.49$& $2554.51$& $2904.98$& $2730.44$& $2797.31$\\
\hline\\\\
\end{tabular}

\label{tbl:data}
\end{table}

    \clearpage\thispagestyle{plain}\begin{table}[ht!bp]
\centering
\caption{$|P|~=~500,~|A|~=~125$}
\begin{tabular}
{ c r | r  | r  | r  | r  | r  | r  | r  | r  | r  | r  | r }
Method & & $L = 2$ & $L = 3$ & $L = 4$ & $L = 5$ & $L = 6$ & $L = 7$ & $L = 8$ & $L = 9$ & $L = 10$ & $L = 11$ & $L = 12$\\
\hline\\\\
$\textsc{PICEF}$ & Mean & $7.08$& $9.33$& $13.04$& $17.54$& $22.90$& $28.08$& $36.50$& $43.25$& $52.92$& $64.75$& $75.29$\\
& Stdev & $1.56$& $1.61$& $2.52$& $3.43$& $5.08$& $5.70$& $5.79$& $7.13$& $10.70$& $11.86$& $20.55$\\
& Min& $4.53$& $6.40$& $9.09$& $11.87$& $16.00$& $19.19$& $25.22$& $27.78$& $34.55$& $49.41$& $48.34$\\
 & Max& $11.31$& $13.87$& $18.28$& $24.26$& $38.67$& $38.91$& $44.74$& $59.57$& $76.93$& $91.35$& $107.55$\\
\hline\\\\
$\textsc{HPIEF}$ & Mean & $14.00$& $17.78$& $23.48$& $32.89$& $44.44$& $61.14$& $79.91$& $103.76$& $139.35$& $174.60$& $193.66$\\
& Stdev & $3.08$& $5.18$& $5.83$& $11.61$& $13.89$& $23.56$& $27.10$& $37.86$& $53.76$& $64.40$& $72.27$\\
& Min& $10.83$& $13.29$& $16.17$& $19.70$& $22.02$& $25.65$& $30.00$& $35.09$& $39.42$& $46.47$& $59.89$\\
 & Max& $26.49$& $39.65$& $42.38$& $73.79$& $68.14$& $126.82$& $117.08$& $163.35$& $230.51$& $255.93$& $291.70$\\
\hline\\\\
$\textsc{BnP-Poly}$ & Mean & $87.19$& $18.76$& $97.20$& $28.25$& $38.46$& $100.95$& $205.02$& $148.89$& $661.64$& $908.36$& $1020.85$\\
& Stdev & $350.96$& $1.91$& $350.66$& $3.94$& $27.81$& $188.11$& $491.05$& $201.06$& $767.87$& $852.81$& $752.28$\\
& Min& $11.04$& $14.57$& $19.19$& $22.15$& $24.80$& $25.13$& $26.05$& $32.74$& $35.45$& $38.06$& $133.57$\\
 & Max& $1805.96$& $22.00$& $1815.00$& $35.22$& $171.97$& $928.79$& $1897.63$& $847.83$& $1892.30$& $2474.09$& $2765.64$\\
\hline\\\\
$\textsc{CG-TSP}$ & Mean & $3600.55$& $3600.76$& $3600.71$& $3600.81$& $3600.80$& $3600.69$& $3600.86$& $3467.07$& $3328.12$& $3206.21$& $2258.82$\\
& Stdev & $0.34$& $0.46$& $0.41$& $0.36$& $0.45$& $0.38$& $0.54$& $454.31$& $625.70$& $708.71$& $1165.52$\\
& Min& $3600.15$& $3600.26$& $3600.27$& $3600.28$& $3600.21$& $3600.21$& $3600.24$& $1838.14$& $1825.68$& $1847.26$& $102.74$\\
 & Max& $3601.35$& $3601.98$& $3601.80$& $3601.61$& $3601.98$& $3601.54$& $3602.45$& $3602.05$& $3601.67$& $3601.78$& $3601.04$\\
\hline\\\\
$\textsc{BnP-DFS}$ & Mean & $125.75$& $2029.37$& $3600.01$& $3600.00$& $3600.00$& $3600.00$& $3600.00$& $3600.00$& $3600.00$& $3600.00$& $3600.00$\\
& Stdev & $346.25$& $466.77$& $0.00$& $0.00$& $0.00$& $0.00$& $0.00$& $0.00$& $0.00$& $0.00$& $0.00$\\
& Min& $32.45$& $1048.41$& $3600.00$& $3600.00$& $3600.00$& $3600.00$& $3600.00$& $3600.00$& $3600.00$& $3600.00$& $3600.00$\\
 & Max& $1821.02$& $2831.76$& $3600.02$& $3600.00$& $3600.00$& $3600.00$& $3600.00$& $3600.00$& $3600.00$& $3600.00$& $3600.00$\\
\hline\\\\
$\textsc{BnP-PICEF}$ & Mean & $78.89$& $8.18$& $13.75$& $26.74$& $37.15$& $48.15$& $80.05$& $100.20$& $140.66$& $187.31$& $238.27$\\
& Stdev & $351.95$& $1.51$& $5.48$& $7.94$& $9.18$& $11.02$& $19.44$& $22.66$& $34.57$& $37.81$& $46.79$\\
& Min& $5.68$& $6.31$& $6.76$& $10.52$& $20.57$& $22.27$& $53.48$& $56.36$& $76.52$& $122.53$& $145.24$\\
 & Max& $1803.09$& $11.65$& $26.37$& $39.92$& $54.66$& $70.94$& $126.77$& $140.54$& $219.19$& $254.26$& $332.33$\\
\hline\\\\
\end{tabular}

\label{tbl:data}
\end{table}

    \clearpage\thispagestyle{plain}\begin{table}[ht!bp]
\centering
\caption{$|P|~=~700,~|A|~=~7$}
\begin{tabular}
{ c r | r  | r  | r  | r  | r  | r  | r  | r  | r  | r  | r }
Method & & $L = 2$ & $L = 3$ & $L = 4$ & $L = 5$ & $L = 6$ & $L = 7$ & $L = 8$ & $L = 9$ & $L = 10$ & $L = 11$ & $L = 12$\\
\hline\\\\
$\textsc{PICEF}$ & Mean & $9.81$& $16.64$& $27.03$& $45.15$& $85.77$& $132.08$& $221.64$& $476.49$& $746.67$& $1060.93$& $1391.81$\\
& Stdev & $3.96$& $8.86$& $14.48$& $15.45$& $36.51$& $54.10$& $143.49$& $473.81$& $674.82$& $624.99$& $953.63$\\
& Min& $6.10$& $10.23$& $16.92$& $28.10$& $46.56$& $64.96$& $83.36$& $123.93$& $153.49$& $229.89$& $310.20$\\
 & Max& $22.29$& $48.46$& $79.24$& $96.61$& $170.04$& $236.84$& $752.40$& $2045.09$& $2417.46$& $2157.87$& $3600.01$\\
\hline\\\\
$\textsc{HPIEF}$ & Mean & $47.46$& $60.30$& $71.00$& $93.18$& $141.93$& $194.25$& $301.22$& $851.62$& $874.49$& $1214.84$& $1688.33$\\
& Stdev & $20.40$& $31.39$& $35.59$& $51.30$& $125.99$& $128.11$& $286.29$& $784.70$& $735.08$& $978.75$& $1196.99$\\
& Min& $24.81$& $28.72$& $34.29$& $42.88$& $52.14$& $77.45$& $86.79$& $89.34$& $101.09$& $175.17$& $149.92$\\
 & Max& $89.99$& $130.90$& $142.10$& $189.76$& $649.25$& $583.77$& $1232.76$& $2557.22$& $3057.21$& $3600.02$& $3600.02$\\
\hline\\\\
$\textsc{BnP-Poly}$ & Mean & $1521.97$& $1159.34$& $1734.26$& $2239.58$& $2454.59$& $2979.44$& $3269.40$& $3274.71$& $3518.71$& $3459.88$& $3420.45$\\
& Stdev & $976.24$& $1121.69$& $1074.63$& $1364.66$& $1328.86$& $834.51$& $809.68$& $656.30$& $316.35$& $475.76$& $496.20$\\
& Min& $6.95$& $7.16$& $8.84$& $11.51$& $12.04$& $1807.77$& $323.14$& $1812.67$& $2037.16$& $1826.98$& $1819.21$\\
 & Max& $3600.07$& $3600.06$& $3600.08$& $3600.23$& $3600.37$& $3600.37$& $3600.38$& $3600.46$& $3600.32$& $3600.55$& $3600.81$\\
\hline\\\\
$\textsc{BnP-PICEF}$ & Mean & $1597.19$& $1239.85$& $1970.94$& $2557.71$& $2661.05$& $3094.01$& $3426.76$& $3436.00$& $3551.63$& $3566.88$& $3600.04$\\
& Stdev & $1054.23$& $1099.56$& $1112.42$& $1099.13$& $1262.80$& $738.47$& $469.54$& $444.97$& $237.36$& $162.52$& $0.01$\\
& Min& $10.47$& $18.77$& $41.81$& $114.79$& $183.42$& $1959.23$& $2143.36$& $2172.23$& $2388.82$& $2770.72$& $3600.01$\\
 & Max& $3601.19$& $3601.02$& $3600.93$& $3601.05$& $3600.62$& $3600.34$& $3600.21$& $3600.14$& $3600.10$& $3600.10$& $3600.06$\\
\hline\\\\
\end{tabular}

\label{tbl:data}
\end{table}

    \clearpage\thispagestyle{plain}\begin{table}[ht!bp]
\centering
\caption{$|P|~=~700,~|A|~=~14$}
\begin{tabular}
{ c r | r  | r  | r  | r  | r  | r  | r  | r  | r  | r  | r }
Method & & $L = 2$ & $L = 3$ & $L = 4$ & $L = 5$ & $L = 6$ & $L = 7$ & $L = 8$ & $L = 9$ & $L = 10$ & $L = 11$ & $L = 12$\\
\hline\\\\
$\textsc{PICEF}$ & Mean & $14.57$& $23.52$& $38.46$& $65.49$& $143.29$& $340.24$& $694.99$& $1085.51$& $1036.88$& $1016.48$& $1027.84$\\
& Stdev & $5.69$& $9.43$& $15.64$& $28.21$& $119.45$& $267.07$& $540.55$& $756.88$& $521.12$& $390.71$& $412.54$\\
& Min& $7.45$& $12.76$& $20.75$& $28.59$& $46.11$& $53.92$& $65.29$& $170.95$& $354.81$& $290.44$& $385.99$\\
 & Max& $30.89$& $55.16$& $82.82$& $128.23$& $656.89$& $1078.52$& $2192.32$& $2548.76$& $2269.19$& $1769.58$& $1887.18$\\
\hline\\\\
$\textsc{HPIEF}$ & Mean & $39.05$& $51.21$& $62.70$& $93.00$& $240.20$& $564.15$& $1038.49$& $1413.02$& $1444.70$& $1473.19$& $1283.89$\\
& Stdev & $9.05$& $23.86$& $24.46$& $43.04$& $249.84$& $484.44$& $719.21$& $952.48$& $816.46$& $716.48$& $527.02$\\
& Min& $30.31$& $32.43$& $40.51$& $45.95$& $62.73$& $102.38$& $105.83$& $232.39$& $281.55$& $335.61$& $516.76$\\
 & Max& $63.12$& $134.94$& $129.85$& $208.47$& $1241.43$& $1786.57$& $2751.71$& $3600.02$& $2877.35$& $2757.60$& $2371.12$\\
\hline\\\\
$\textsc{BnP-Poly}$ & Mean & $1017.01$& $1519.12$& $2381.49$& $2884.76$& $3315.11$& $3266.22$& $3348.78$& $3119.47$& $3418.67$& $3600.08$& $3600.08$\\
& Stdev & $1250.38$& $1406.13$& $977.20$& $1011.44$& $653.27$& $670.13$& $578.91$& $723.61$& $447.02$& $0.05$& $0.04$\\
& Min& $6.83$& $8.59$& $22.14$& $28.53$& $1814.72$& $1822.70$& $1894.07$& $1842.47$& $2084.25$& $3600.02$& $3600.02$\\
 & Max& $3600.06$& $3600.13$& $3600.21$& $3600.34$& $3600.43$& $3600.38$& $3600.39$& $3600.30$& $3600.26$& $3600.21$& $3600.17$\\
\hline\\\\
$\textsc{BnP-PICEF}$ & Mean & $1264.28$& $1612.64$& $2407.02$& $2921.08$& $3412.63$& $3390.81$& $3346.30$& $3378.98$& $3478.75$& $3586.62$& $3600.02$\\
& Stdev & $1295.03$& $1272.61$& $1189.48$& $960.60$& $509.38$& $436.66$& $396.00$& $313.42$& $305.45$& $58.48$& $0.01$\\
& Min& $9.71$& $18.89$& $76.04$& $208.48$& $1957.17$& $2259.70$& $2220.26$& $2502.96$& $2166.24$& $3302.37$& $3600.01$\\
 & Max& $3600.80$& $3600.83$& $3600.68$& $3600.62$& $3600.37$& $3600.24$& $3600.15$& $3600.08$& $3600.07$& $3600.05$& $3600.05$\\
\hline\\\\
\end{tabular}

\label{tbl:data}
\end{table}

    \clearpage\thispagestyle{plain}\begin{table}[ht!bp]
\centering
\caption{$|P|~=~700,~|A|~=~35$}
\begin{tabular}
{ c r | r  | r  | r  | r  | r  | r  | r  | r  | r  | r  | r }
Method & & $L = 2$ & $L = 3$ & $L = 4$ & $L = 5$ & $L = 6$ & $L = 7$ & $L = 8$ & $L = 9$ & $L = 10$ & $L = 11$ & $L = 12$\\
\hline\\\\
$\textsc{PICEF}$ & Mean & $13.13$& $23.38$& $55.18$& $99.27$& $203.74$& $317.91$& $547.05$& $797.67$& $1116.34$& $1293.25$& $1282.57$\\
& Stdev & $7.45$& $11.99$& $34.73$& $34.95$& $49.29$& $70.34$& $181.52$& $242.31$& $516.93$& $432.45$& $501.84$\\
& Min& $8.25$& $14.21$& $24.31$& $42.27$& $72.62$& $113.05$& $131.92$& $179.43$& $239.43$& $264.00$& $315.97$\\
 & Max& $47.76$& $75.74$& $194.49$& $208.56$& $276.76$& $462.41$& $940.13$& $1179.01$& $2736.66$& $1896.02$& $2673.03$\\
\hline\\\\
$\textsc{HPIEF}$ & Mean & $34.37$& $48.26$& $110.16$& $186.29$& $305.20$& $507.13$& $826.60$& $1171.59$& $1566.82$& $1763.52$& $1506.63$\\
& Stdev & $3.00$& $11.15$& $77.52$& $45.19$& $60.17$& $108.26$& $238.30$& $248.72$& $313.97$& $509.08$& $605.57$\\
& Min& $29.09$& $36.88$& $54.65$& $95.82$& $220.89$& $352.13$& $449.15$& $825.67$& $1012.74$& $1000.12$& $502.45$\\
 & Max& $39.79$& $87.52$& $420.49$& $353.66$& $499.14$& $770.13$& $1652.82$& $1810.44$& $2481.68$& $2796.51$& $2489.86$\\
\hline\\\\
$\textsc{BnP-Poly}$ & Mean & $946.39$& $968.40$& $1175.60$& $489.31$& $408.89$& $987.92$& $1204.20$& $1807.54$& $1940.38$& $2415.57$& $3181.40$\\
& Stdev & $1147.60$& $1134.31$& $1114.13$& $758.29$& $656.11$& $843.59$& $947.76$& $931.08$& $1101.74$& $903.48$& $679.19$\\
& Min& $10.57$& $14.50$& $28.51$& $33.90$& $37.49$& $43.77$& $86.42$& $101.46$& $139.30$& $244.96$& $1887.27$\\
 & Max& $3600.14$& $3600.09$& $3600.35$& $1894.51$& $1866.21$& $1977.77$& $3600.26$& $3600.20$& $3600.18$& $3600.13$& $3600.15$\\
\hline\\\\
$\textsc{BnP-PICEF}$ & Mean & $1020.98$& $969.54$& $1190.59$& $833.25$& $1222.95$& $1820.27$& $2533.52$& $3101.86$& $3430.50$& $3472.62$& $3294.55$\\
& Stdev & $1249.82$& $1137.59$& $975.36$& $712.50$& $476.76$& $701.44$& $481.35$& $412.11$& $251.68$& $241.88$& $380.11$\\
& Min& $13.18$& $25.06$& $98.13$& $260.07$& $587.42$& $815.88$& $1277.61$& $2278.87$& $2662.25$& $2886.00$& $2357.03$\\
 & Max& $3600.62$& $3600.69$& $3600.35$& $2321.96$& $2240.81$& $3600.02$& $3600.02$& $3600.03$& $3600.07$& $3600.04$& $3600.05$\\
\hline\\\\
\end{tabular}

\label{tbl:data}
\end{table}
    
    \clearpage\thispagestyle{plain}\begin{table}[ht!bp]
\centering
\caption{$|P|~=~700,~|A|~=~175$}
\begin{tabular}
{ c r | r  | r  | r  | r  | r  | r  | r  | r  | r  | r  | r }
Method & & $L = 2$ & $L = 3$ & $L = 4$ & $L = 5$ & $L = 6$ & $L = 7$ & $L = 8$ & $L = 9$ & $L = 10$ & $L = 11$ & $L = 12$\\
\hline\\\\
$\textsc{PICEF}$ & Mean & $29.54$& $43.30$& $71.41$& $105.27$& $151.55$& $203.83$& $286.35$& $374.32$& $467.57$& $560.50$& $681.95$\\
& Stdev & $13.36$& $16.27$& $29.08$& $44.51$& $66.12$& $87.99$& $125.80$& $166.32$& $215.48$& $271.71$& $370.30$\\
& Min& $14.44$& $19.95$& $28.55$& $33.10$& $44.67$& $58.77$& $82.04$& $104.68$& $119.57$& $156.31$& $137.36$\\
 & Max& $63.37$& $72.10$& $112.54$& $161.79$& $261.19$& $336.25$& $493.77$& $644.54$& $807.41$& $990.81$& $1278.76$\\
\hline\\\\
$\textsc{HPIEF}$ & Mean & $39.67$& $55.28$& $93.24$& $137.82$& $200.39$& $277.33$& $378.82$& $533.55$& $642.43$& $874.87$& $908.95$\\
& Stdev & $5.51$& $13.00$& $34.40$& $43.98$& $73.27$& $105.09$& $146.86$& $233.83$& $266.91$& $385.84$& $363.31$\\
& Min& $31.07$& $37.56$& $45.29$& $55.53$& $69.20$& $84.06$& $103.58$& $114.62$& $130.30$& $165.84$& $169.59$\\
 & Max& $57.05$& $94.99$& $182.11$& $211.11$& $331.98$& $512.84$& $602.28$& $870.66$& $1023.77$& $1297.35$& $1350.16$\\
\hline\\\\
$\textsc{BnP-Poly}$ & Mean & $50.03$& $77.46$& $96.51$& $102.51$& $146.29$& $294.77$& $712.95$& $936.95$& $1281.41$& $1608.44$& $2130.64$\\
& Stdev & $6.49$& $12.00$& $13.69$& $16.35$& $95.95$& $464.74$& $769.68$& $807.90$& $871.01$& $1016.13$& $1106.75$\\
& Min& $36.48$& $55.54$& $67.21$& $80.60$& $92.13$& $102.89$& $126.45$& $127.95$& $153.58$& $293.59$& $257.74$\\
 & Max& $61.19$& $100.42$& $124.90$& $130.75$& $603.92$& $1880.82$& $2189.93$& $2001.10$& $3600.04$& $3600.05$& $3600.04$\\
\hline\\\\
$\textsc{BnP-PICEF}$ & Mean & $18.37$& $25.95$& $58.41$& $101.92$& $133.68$& $200.45$& $298.62$& $396.98$& $625.76$& $963.66$& $1079.16$\\
& Stdev & $2.09$& $6.89$& $19.69$& $25.51$& $28.12$& $46.29$& $57.63$& $99.96$& $274.92$& $353.71$& $359.13$\\
& Min& $14.32$& $17.54$& $25.16$& $43.12$& $69.15$& $121.50$& $199.20$& $255.27$& $360.26$& $513.35$& $569.38$\\
 & Max& $23.72$& $38.25$& $100.01$& $146.01$& $205.94$& $278.30$& $477.24$& $656.46$& $1539.05$& $2064.30$& $2096.84$\\
\hline\\\\
\end{tabular}

\label{tbl:data}
\end{table}
    
  }
\end{landscape}

}{}

\end{document}